\documentclass[11pt,a4paper]{article}

\usepackage[utf8]{inputenc}
\usepackage{amssymb}
\usepackage{subfigure}
\usepackage{graphicx}
\usepackage{epsfig}
\usepackage{latexsym}
\usepackage{amsmath}
\usepackage[toc,page]{appendix}
\usepackage{comment}
\usepackage{color, xcolor}
\usepackage{setspace}

\usepackage{xspace}
\usepackage{algorithm}
\usepackage{algorithmic}
\usepackage{amsmath}
\usepackage{color,xcolor}
\usepackage{hyperref}
\usepackage{float}
\usepackage{enumerate}
\usepackage{dsfont}
\usepackage{amsmath,amssymb}
\usepackage{tabularx}
\usepackage{booktabs} 
\usepackage{pdflscape}
\usepackage{adjustbox}

\usepackage{graphicx}
\usepackage{algorithmic}
\usepackage{algorithm}
\usepackage{listings}
\usepackage[OT4,T1]{fontenc}
\usepackage{url}
\usepackage{multirow}
\DeclareMathOperator*{\argmin}{arg\min}

 \newcommand{\posA}{\vspace{0.08in}}

\newcommand\Eq{\mathcal{E}}
\newcommand\N{\mathbb{N}}

\newcommand{\tuple}[1]{\langle #1  \rangle}

\newcommand{\G}{{\cal G}}

\newcommand{\CidenticalJobs}{{\cal G}_{sym}}
\newcommand{\Ctwo}{{\cal G}_{two}}

\newcommand{\CidenticalMachines}{{\cal G}_{P}}
\newcommand{\Cglobal}{{\cal G}_{global}}

\newcommand{\CidenticalGlobal}{{\cal G}_{Pglobal}}
\newcommand{\CidenticalGlobalUniformRate}{{\cal G}_{global}^{+a}}
\newcommand{\CtwoP}{{\cal G}_{two,P}}

\newtheorem{remark}{Remark}[section]
\newtheorem{example}[remark]{Example}
\newtheorem{theorem}{Theorem}[section]

\newtheorem{lemma}[theorem]{Lemma}
\newtheorem{claim}[theorem]{Claim}

\newtheorem{definition}{Definition}[section]
\newtheorem{observation}[theorem]{Observation}
                      {}

                      {}

\def\squarebox#1{\hbox to #1{\hfill\vbox to #1{\vfill}}}
\newcommand{\qed}{\hspace*{\fill}
\vbox{\hrule\hbox{\vrule\squarebox{.667em}\vrule}\hrule}\smallskip}

\def\eod{\vrule height 6pt width 5pt depth 0pt}
\newenvironment{proof}{\noindent {\bf Proof:} \hspace{.677em}}
                      {\hspace*{\fill}{\eod}}

\begin{document}

\title{
Job-Scheduling Games with Time-Dependent Processing Times
}
\author{Ido Borenstein\thanks{School of Computer Science, Reichman University, Israel, E-mail: idobor@gmail.com, tami@runi.ac.il } \and 
Tami Tamir$^{*}$}
\date{}
\maketitle


\begin{abstract}
Job-scheduling games have traditionally assumed fixed processing times. However, in many realistic environments, ranging from cyber-security response to high-frequency trading, a task's duration depends on its starting time. We study job-scheduling games with time-dependent processing times, where job lengths are linear functions of their start times, exhibiting either positive deterioration (increasing length) or negative deterioration (decreasing length). We analyze these games under various coordination mechanisms and priority policies. By introducing the concept of delay-averse agents, we provide a unifying framework to characterize equilibrium existence. For delay-averse jobs, we show that stability is maintained and pure Nash equilibria (NE) can be computed efficiently. In contrast, for non-delay-averse jobs, we demonstrate that a NE may not exist, and prove that deciding its existence is NP-complete, even on identical machines - a fundamental departure from classical coordination mechanisms.

Regarding equilibrium inefficiency, we show that the Price of Anarchy (PoA) can be significantly higher than in environments with fixed processing times. To mitigate this, we propose and analyze three coordination mechanisms: SBPT (Shortest Basic Processing Time), which reduces the PoA in games with positive deterioration to a constant, and SDR (Smallest Deterioration Rate) and LBDR (Largest Basic-Deterioration Ratio) for negative deterioration, which achieve tight constant PoA bounds of $2$ and $\max\{\frac{e}{e-1}, 2-\frac{1}{m}\}$, respectively. Our results bridge the gap between centralized time-dependent scheduling and decentralized game-theoretic analysis.
\end{abstract}

\section{Introduction}
\label{sec:intro}

Scheduling environments encountered in practice are often inherently decentralized: jobs are controlled by self-interested agents, and no central authority exists to globally optimize their assignment to machines. Such settings are naturally analyzed using game-theoretic tools where stable outcomes are captured by equilibria.
Job scheduling games arise in applications such as production lines, traffic and communication networks. In these settings, each job acts as a strategic player that selects a machine to minimize its completion time.  
System designers often employ coordination mechanisms~\cite{CKN09}, which enforce local scheduling policies on each machine (e.g., Shortest First). 

In standard models, jobs have fixed processing times. However, in many environments, the processing time depends on the starting time. Such \emph{time-dependent} or \emph{deteriorating} jobs have been studied intensively in centralized settings, but received limited attention from a game-theoretic perspective. In this work, we bridge this gap and study coordination mechanisms for jobs with time-dependent processing times. Our model is general, allowing heterogeneous machine policies and both increasing and non-increasing processing times.

Our work is motivated by applications in which delays affect task complexity or resource efficiency, such as cyber-security response, data cleaning, and medical treatment. In some cases, delays increase processing times, while in others, processing times decrease due to {\em learning effects} or improved efficiency. While the scheduling community has extensively studied time-dependent processing times in centralized models (see Section~\ref{sec:related}), there has been very little research into how these dynamics behave in strategic environments. 

We assume that jobs are controlled by self-interested agents and analyze the corresponding games under various coordination mechanisms and priority policies. By introducing the concept of delay-averse jobs, we provide a complete picture of equilibrium existence, computation, convergence of natural dynamics, and equilibrium inefficiency. 
We distinguish between classes of instances in order to provide tight bounds for specific classes. As we show, in general, the Price of Anarchy (PoA) can be significantly higher than in environments with fixed processing times. To mitigate this, we propose and analyze several coordination mechanisms with a guaranteed constant PoA.

\subsection{Notations and Problem Statement}
An instance of a {\em scheduling game with machine-dependent priority lists and time-dependent processing times} is given by a tuple $G=\tuple{N,M,(p_i)_{i \in N},(s_j)_{j \in M},(\pi_j)_{j \in M}}$, where $N$ is a finite set of $n\geq 1$ jobs, each corresponding to a single player, $M$ is a finite set of $m\geq 1$ machines, $p_i$ is the processing time {\em function} of job $i$. That is, $p_i(t)$ is the length of job $i$ if it starts being processed at time $t$. 
$s_j\in\mathbb{R}^+$ denotes the speed of machine $j\in M$, and $\pi_j:N \rightarrow \{1,\ldots,n\}$ is a bijection describing the {\em priority list} of machine $j\in M$. 
For a machine $j\in M$ and two jobs $u,v\in N$, we say that $u$ has higher priority than $v$ on machine $j$, iff $\pi_j(u)<\pi_j(v)$.

We assume that the processing-time function is linear in $t$. Specifically, every job $i$ is associated with three parameters: $b_i \ge 0$, its \emph{basic processing time}; $a_i \ge 0$, its \emph{deterioration rate}; and $v_i \in \{-1, 1\}$, its {\em deterioration sign}. The processing-time function is defined as $p_i(t) = b_i + v_i \cdot a_i \cdot t$. 
A positive deterioration sign ($v_i=1$) corresponds to settings where job length increases over time. A negative deterioration sign ($v_i=-1$) corresponds to settings where job length decreases, in which case the job is also assigned a \emph{threshold processing time} $\tau_i>0$, representing its minimal possible length. Formally, for $v_i=-1$, $p_i(t) = \max\{\tau_i, b_i - a_i \cdot t\}$. 
This threshold captures the fact that, even if a job’s effective size decreases over time, there is always a minimal amount of work that cannot be reduced further. 
For simplicity, we omit $v_i$ when the context is clear. In particular, the processing time
function of a job with a positive deterioration rate is simply $p_i(t) = b_i + a_i \cdot t$.


A profile of the game is denoted {\em a schedule} and is defined by the players' strategy profile $\sigma=(\sigma_i)_{i\in N}\in M^N$. The strategy of job $i \in N$, denoted $\sigma_i$, is the machine selected by job $i$. Given a strategy profile $\sigma$, the jobs are processed according to their order in the machines' priority lists. 
The set of jobs that delay $i\in N$ in $\sigma$ is $B_{i}(\sigma)= \{i' \in N| \sigma_{i'}=\sigma_i \wedge \pi_{\sigma_i}(i') \leq \pi_{\sigma_i}(i)\}$. Note that job $i$ itself also belongs to $B_{i}(\sigma)$. 
The starting and completion times of job $i$ in a schedule $\sigma$ are denoted $S_i(\sigma)$ and $C_i(\sigma)$, respectively, and are defined as follows. Denote by $i_1,i_2,\ldots$ the jobs processed on machine $j$, according to their order in $\pi_j$, that is, $\pi_{j}(i_1) \leq \pi_{j}(i_2) \leq \ldots$.
For job $i_1$, $S_{i_1}(\sigma)=0$ and $C_{i_1}(\sigma)=p_{i_1}(0)/s_j$. For job $i_k$, $S_{i_k}(\sigma)=C_{i_{k-1}}(\sigma)$, and $C_{i_k}(\sigma)=S_{i_k}(\sigma)+p_{i_k}(S_{i_k}(\sigma))/s_j$. We use $p_i^{\sigma}$ to denote the processing time of job $i$ in a profile $\sigma$, that is, $p_i^{\sigma}=p_i(S_i(\sigma))$. In our game, the cost of job $i$ is defined to be its completion time, that is, $cost_i(\sigma)=C_i(\sigma)$.
Thus, every job chooses a machine aiming at minimizing its completion time.

For a profile $\sigma$, denote by $(\sigma_i^{\prime},\sigma_{-i})$ the profile in which player $i$ modifies its strategy to 
 $\sigma_i^{\prime}$ and all other players keep their strategy as in $\sigma$. A strategy profile $\sigma$ is a {\em pure Nash equilibrium (NE)} if for all $i\in N$ and all $\sigma_i^{\prime}\in M$, 
we have that $cost_i(\sigma)\leq cost_i(\sigma_i^{\prime},\sigma_{-i})$. Let $\Eq(G)$ denote the set of pure Nash equilibria for a given instance $G$. We note that $\Eq(G)$ might be empty.

For a strategy profile $\sigma$, let $cost(\sigma)$ denote the social cost of $\sigma$. The social cost is defined with respect to some objective, in our case, the makespan, i.e., $C_{max}(\sigma):=\max_{i\in N} C_i(\sigma)$.
It is well known that decentralized decision-making may lead to sub-optimal solutions from the
point of view of the society as a whole. For a game $G$, let $P(G)$ be the set of feasible profiles of $G$. We denote by $OPT(G)$ the cost of a social optimal solution, i.e., $OPT(G)=\min_{\sigma \in P(G)} cost(\sigma)$. We quantify the
inefficiency incurred due to self-interested behavior according to the
\emph{price of anarchy} ($PoA$) \cite{Koutsoupias:1999:WE:1764891.1764944}, and \emph{price of
stability} ($PoS$) \cite{AD+08}. The $PoA$ is the worst-case inefficiency of a pure Nash equilibrium, while the $PoS$ measures the best-case inefficiency of a pure Nash equilibrium. 

\begin{definition}
\label{def:ineff}
Let $\G$ be a family of games, and let $G$ be a game in $\mathcal{G}$.
Let $\Eq(G)$ be the set of pure Nash equilibria of the game $G$. Assume that $\Eq(G) \neq \emptyset$.
\begin{itemize}
\item The {\em price of anarchy} of $G$ is the ratio between the
\emph{maximum} cost of a $NE$ and the social optimum of
$G$, i.e.,
$\mbox{PoA}(G) = \max\limits_{\sigma\in \Eq(G)} cost(\sigma)/OPT(G)$.
The {\em price of anarchy} of $\mathcal{G}$
is $\mbox{PoA}(\mathcal{G}) = sup_{ G\in \mathcal{G}}\mbox{PoA}(G)$.
\item
The {\em price of stability} of $G$ is the ratio between the
\emph{minimum} cost of a $NE$ and the social optimum of
$G$, i.e.,
$\mbox{PoS}(G) = \min\limits_{\sigma\in \Eq(G)} cost(\sigma)/OPT(G)$.
The {\em price of stability} of $\mathcal{G}$ is
$\mbox{PoS}(\mathcal{G}) = sup_{ G\in \mathcal{G}}\mbox{PoS}(G)$.
\end{itemize}
\end{definition}

\emph{Best-Response Dynamics} (BRD) is a natural method by which players proceed toward a $NE$ via a sequence of improving deviations. The question of BRD convergence and the quality of possible BRD outcomes are major questions in the study of job scheduling games~\cite{Ros73,FST17}.

\subsubsection{Optimal Starting Time of a Job}
\label{sec:optimalStart}
Since the cost of a job is defined by its completion time, it seems natural that jobs would prefer starting their processing as early as possible. This underlying intuition becomes incomplete in an environment with negative deterioration rate, where a later start may lead to a shorter duration, and consequently, to an earlier completion time.

We now provide a simple calculation of the optimal starting time of a job. 
Consider an assignment of job $i$ with rate $a_i$. Denote by $c_i(t,s)$ the completion time of job $i$ as a function of its starting time $t$ and the speed $s$ of the machine it is assigned on. 

The first observation considers jobs with a non-negative deterioration rate and is straightforward. 

\begin{observation}
    \label{ob:optimalStartTime1}
For $p_i(t) = b_i + a_i\cdot t$, where $b_i \ge 0$, and $a_i \ge 0$, the completion time function $c_i(t,s)$ gets its minimal possible value at time $t=0$. 
\end{observation}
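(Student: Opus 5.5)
The plan is to write the completion time explicitly as a function of the starting time and show it is monotone non-decreasing in $t$. From the model, a job that starts at time $t$ on a machine of speed $s$ completes at
\[
c_i(t,s) \;=\; t + \frac{p_i(t)}{s} \;=\; t + \frac{b_i + a_i t}{s} \;=\; \frac{b_i}{s} + t\left(1 + \frac{a_i}{s}\right).
\]
This is an affine function of $t$. I will argue about it only on the feasible domain $t \ge 0$, since starting times are completion times of preceding jobs (or $0$) and are therefore non-negative.

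The key step is the sign of the slope. Since $a_i \ge 0$ and $s>0$ ($s_j\in\mathbb{R}^+$), the slope $1 + a_i/s$ is at least $1$, so it is strictly positive. Hence $c_i(\cdot,s)$ is strictly increasing on $[0,\infty)$. Its minimum over $t\ge 0$ is therefore attained exactly at $t=0$, where $c_i(0,s)=b_i/s$. This holds for every speed $s$, which is the claim.

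No real obstacle is expected. The only points to state carefully are:
\begin{itemize}
\item the domain restriction $t\ge 0$;
\item the fact that no threshold $\tau_i$ appears here, because the $\max\{\tau_i,\cdot\}$ truncation is defined only for $v_i=-1$.
\end{itemize}
Even with a truncation, the argument would survive. The map $t\mapsto t+\max\{\tau_i,\,b_i+a_it\}/s$ is a sum of the increasing function $t$ and the non-decreasing function $\max\{\tau_i,\,b_i+a_it\}/s$, so it is still increasing. The minimality at $t=0$ is therefore robust.
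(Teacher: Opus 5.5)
Your proof is correct and is essentially the paper's argument: the paper calls this observation straightforward and gives no proof, but its proof of the negative-deterioration counterpart (Observation~\ref{ob:optimalStartTime}) uses the same approach. There it writes $c_i(t,s)=t+p_i(t)/s$ and reads off the sign of the slope in $t$, which here is $1+a_i/s>0$.
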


For jobs with negative deterioration rate, the optimal starting time depends on $\tau_i$, $b_i$, and the ratio between $a_i$ and $s$, as follows. 

\begin{observation}
    \label{ob:optimalStartTime}
For $p_i(t) = \max(\tau_i,b_i-a_i\cdot t)$, where $b_i \ge 0, b_i>\tau_i \ge 0$, and $a_i > 0$, the completion time function $c_i(t,s)$ gets its minimal possible value $(i)$ at time $t=0$, if $a_i<s$, $(ii)$ at time $t=\frac{b_i-\tau_i}{a_i}$ if $a_i>s$, and $(iii)$ at any time $0 \le t\le\frac{b_i-\tau_i}{a_i}$ if $a_i=s$.
\end{observation}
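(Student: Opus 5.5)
The plan is to write the completion time as an explicit piecewise-linear function of the start time $t$ and minimize it piece by piece. By the definitions in the model, a job starting at time $t$ on a machine of speed $s$ completes at
\[
c_i(t,s) = t + \frac{p_i(t)}{s} = t + \frac{\max\{\tau_i,\, b_i - a_i t\}}{s}.
\]
Let $t^* = \frac{b_i-\tau_i}{a_i} > 0$; this is well defined and positive because $a_i>0$ and $b_i>\tau_i$. Since $b_i - a_i t \ge \tau_i$ exactly when $t \le t^*$, the maximum is attained by the linear term on $[0,t^*]$ and by $\tau_i$ for $t \ge t^*$. Hence $c_i(\cdot,s)$ splits into two linear pieces, and the whole argument reduces to comparing their slopes.

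The key steps, in order, are as follows.
\begin{enumerate}
\item On $[0,t^*]$ we have $c_i(t,s) = \frac{b_i}{s} + t\left(1 - \frac{a_i}{s}\right)$, whose slope $1 - a_i/s$ is positive, negative, or zero according to whether $a_i<s$, $a_i>s$, or $a_i=s$.
\item On $[t^*,\infty)$ we have $c_i(t,s) = t + \frac{\tau_i}{s}$, which has slope $1>0$. So in every case this piece is minimized at $t^*$.
\item Both expressions give $t^*+\tau_i/s$ at $t=t^*$, so $c_i$ is continuous there. It therefore suffices to minimize on $[0,t^*]$, since no point beyond $t^*$ does better than $t^*$ itself.
\end{enumerate}

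The three cases then follow from the slope on $[0,t^*]$.
\begin{itemize}
\item If $a_i<s$, the function is strictly increasing on $[0,t^*]$ and on $[t^*,\infty)$, hence on all of $[0,\infty)$. The unique minimum is at $t=0$, with value $b_i/s$.
\item If $a_i>s$, the function strictly decreases on $[0,t^*]$ and then strictly increases. The minimum is at $t^*$, with value $t^*+\tau_i/s$.
\item If $a_i=s$, the function is constant, equal to $b_i/s$, on $[0,t^*]$ and increases afterwards. Every $t\in[0,t^*]$ is therefore a minimizer.
\end{itemize}

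There is no real obstacle here. The only points needing care are the continuity check at $t^*$, which justifies restricting attention to $[0,t^*]$, and the hypotheses $a_i>0$ and $b_i>\tau_i$, which guarantee that $t^*$ exists and is positive.
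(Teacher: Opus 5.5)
Your proof is correct and is essentially the paper's argument. The paper writes $c_i(t,s)=\max\{t+\tau_i/s,\; b_i/s+t(1-a_i/s)\}$ and reads off the minimizer from the sign of $1-a_i/s$, which is the same as your split at $t^*=(b_i-\tau_i)/a_i$; you are only a bit more explicit than the paper about continuity at $t^*$ and about why no $t>t^*$ can do better.
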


\begin{proof}
With a negative deterioration rate, where $p_i(t) = \max(\tau_i,b_i-a_i\cdot t)$, we have that 
$$c_i(t,s)= t + \max(\frac{\tau_i}{s},\frac{b_i}{s}-\frac{a_i}{s}\cdot t)=\max(t+\frac{\tau_i}{s},\frac{b_i}{s}+t(1-\frac{a_i}{s})).$$

The first term is clearly minimal for $t=0$. We distinguish between three cases.
\begin{enumerate}[1.]
    \item If $a_i<s$, then $(1-\frac{a_i}{s})>0$, and the second term increases as $t$ increases, and is minimal for $t=0$.
    \item If $a_i>s$, then $(1-\frac{a_i}{s})<0$, and the second term reduces as $t$ increases. Thus, $c_i(t,s)$ is minimal when $t+\frac{\tau_i}{s}=\frac{b_i}{s} + t(1-\frac{a_i}{s})$, that is, when $\frac{\tau_i}{s}=\frac{b_i}{s}-\frac{a_i\cdot t}{s}$, which holds for $t=\frac{b_i-\tau_i}{a_i}$.
    \item If $a_i=s$, then the completion time remains constant for $t\le \frac{b_i-\tau_i}{a_i}$ and is given by $\frac{b_i}{s}$. Thus, $c_i(t,s)$ is minimal when $t\le \frac{b_i-\tau_i}{a_i}$.
\end{enumerate}
\end{proof}

We assume that in case of a tie, where a job has the same completion time on several machines, it prefers the one on which its processing time is longer. Such scenario is relevant when a job has a negative deterioration that equals the machine's speed, or if machines have different speeds.

We say that a job $i$ is {\em delay-averse} if its completion time increases with its starting time, independent on the machine it is assigned on. Formally, $i$ is delay-averse if $v_i=1$ (that is, $i$ has a positive deterioration rate), or if $v_i=-1$ and $a_i \le s_{min}$, implying that $c_i(t,s)$ is a non-decreasing function for any $s_j, j \in M$.

As we will show the presence of jobs that are not delay-averse is the core reason for the non-stability of our game. 

\subsubsection{Restricted Game Classes}
\label{sec:classes}
Some of our results refer to restricted classes of games, characterized by properties of both the jobs and the machines.
Specifically, we consider the following classes of games:
\begin{description}
\item $\CidenticalJobs$: Games with symmetric (identical) jobs, all  having the same processing-time function.
\item $\Ctwo$: Games played on two related machines.
\item $\CidenticalMachines$: Games played on identical (same speed) parallel machines.
\item $\Cglobal$: Games with a {\em global priority list}: all machines have the same priority list, $\pi$.
\end{description}

For each class $h$ , let $\G{^+_h}$, $\G{^-_h}$ denote the corresponding class assuming jobs with positive or negative deterioration rate, respectively. 
Similarly, for each class $h$ , let $\G{^{DA}_h}$ denote the corresponding class assuming all jobs are {\em delay-averse}.
In addition, let $\G_h^{-DA}= \G{^{DA}_h} \cap \G{^-_h}$, that is, games in which all jobs are delay-averse with a negative deterioration rate.
\subsection{Related Work}
\label{sec:related}
Our work lies at the intersection of scheduling with time-dependent processing times and job-scheduling games under coordination mechanisms. We briefly review each line of work and the limited studies that combine them.

Early work on time-dependent processing times focused on centralized settings. The paper~\cite{GuptaGuptaSingel} studied positive deteriorating jobs on a single machine, showing that in general, where processing functions are non-linear, the problem under the makespan objective is NP-complete and that ordering the jobs by $b_i/a_i$ is optimal for linear functions. Subsequent work considered various extensions, including stochastic models and general processing-time functions~\cite{BrowneYechieaRandom,AlidaeeWormerSurvey}, as well as simpler structured models that admit optimal policies for the sum of completion times objective~\cite{MosheiovFixedBasicVshapred,MoeshiovRankBased}. 

For decreasing processing times, several variants were shown to be NP-complete~\cite{HoDecreasingWithDeadlines,BJ00}, including settings that focus on feasibility under release times and deadlines. Special cases admit efficient solutions, for example when $a_i$ and $b_i$ are related~\cite{WangXiaSpecialCase}, or for specific objectives such as total completion time~\cite{WangSquared}. The case of scheduling deteriorating jobs on identical parallel machines was studied in~\cite{KangNgParallel}. They developed a fully polynomial approximation scheme (FPTAS) for makespan minimization, in which each machine process the jobs in non-decreasing order of $b_i/a_i$. Unlike our model, in which jobs are associated with threshold processing times, in previous models every job is associated with a deadline, and reaching a negative processing time is mathematically possible only after the deadline.

Scheduling games with fixed processing times have been widely studied, focusing on the inefficiency of equilibria via the price of anarchy~\cite{Koutsoupias:1999:WE:1764891.1764944}, with tight bounds known for various machine models~\cite{CzumajV07,AART06,GLM10}. To mitigate inefficiency, coordination mechanisms were introduced in~\cite{CKN09} and further studied in~\cite{ILMS09, AJM15, RST21}, which analyzes standard priority policies such as SPT and LPT, as well as arbitrary priority lists.

Only few works consider strategic behavior with time-dependent processing. A simple linear model for positive deterioration is studied in~\cite{LiLiuLiFirstGameTheory}, showing convergence to a pure Nash Equilibrium with a $PoA$ analysis. Coordination mechanisms such as LDR and SDR were analyzed in~\cite{CLTY17}, proving existence of equilibria and bounds on inefficiency. The model was extended to parallel batch machines in~\cite{YuBatchProcessing}. A scheduling game in which jobs have fixed length and time-dependent utilities was recently studied in~\cite{NPP26}.

\subsection{Our Contributions}
We summarize our main results on equilibrium existence, convergence, computational complexity, and inefficiency in scheduling games with time-dependent processing times.


Section~\ref{sec:eqExistence} examines equilibrium existence and computation. We prove that greedy algorithms compute a NE for games in $\CidenticalJobs$ and $\CidenticalMachines^{DA}$. Additionally, we provide a specific algorithm for $\Ctwo^{DA}$ and prove that an adapted List-Scheduling algorithm guarantees a NE for games in $\Cglobal$.
%
%
We then study convergence of Best-Response Dynamics (BRD) and prove that every best-response sequence converges to a $NE$ for every instance $G \in \CidenticalJobs \cup \Ctwo^+ \cup \CidenticalMachines^{DA} \cup \Cglobal$. 

In Section~\ref{sec:NDA}, we study games with non-delay-averse jobs, which may benefit from delaying their start times. We demonstrate that a NE is not guaranteed to exist in these settings and provide a tight characterization of instances that do admit a NE, considering both the number of players/machines and the environment (identical vs. related machines).
We further analyze the computational complexity of deciding NE existence, proving the problem is NP-complete, even on identical machines. This result highlights a fundamental departure from classical scheduling games with fixed processing times, where a pure NE is  guaranteed to exist on identical machines~\cite{RST21}.

In Section~\ref{sec:EqInefficiency} we analyze equilibrium inefficiency with respect to the makespan objective for classes in which $NE$ existence is guaranteed. We begin by proving that the Price of Anarchy ($PoA$) of games in $\CidenticalJobs$ is $1$.
We then study games with positive deterioration and a global priority list with uniform rates, denoted $\CidenticalGlobalUniformRate$. 
We provide a tight exponential bound for an arbitrary number of machines, showing that the $PoA$ can grow as $(1+a)^{\frac{n}{m}}$.

We next turn to games with negative deterioration, restricting attention to instances with delay-averse jobs (for which $NE$ existence is guaranteed). For identical machines with arbitrary priority lists, we prove a tight bound $PoA(\CidenticalMachines^{-DA})=3-\frac{1}{m}$.  For identical machines with a global priority list we prove a bit lower tight bound, $PoA(\CidenticalGlobal^{-DA})=3-\frac{2}{m}$. Our methodology relies on establishing a bound on
the ratio of total processing times in optimal versus arbitrary schedules.

In Section~\ref{sec:efficient} we introduce coordination mechanisms aimed at reducing equilibrium inefficiency. 
The primary challenge when analyzing jobs with non-fixed durations is that traditional bounds based on job lengths are no longer valid. For example, two lower bounds on the makespan that are used heavily in job-scheduling literature are $\sum_i p_i/m$ and $p_{max}$, these terms are not even defined in our model.  
To address these challenges, we introduce three distinct analytical frameworks. Two for jobs with negative deterioration rates, and one for jobs with a uniform positive deterioration rate. 

$(i)$ SDR (Smallest Deterioration Rate) policy, prioritizes job in non-decreasing order of $a_i$. In its analysis we are using {\em work-based density analysis}: we shift the analytical focus from the standard time axis to a cumulative work axis. We define the work-based deterioration density, $\mathcal{A}_{\sigma}(w)$, representing the aggregate deterioration rates of all waiting jobs at any workload level $w$. We prove that the SDR policy acts as a density maximizer, ensuring that the integral of waiting deterioration is maximized relative to machine effort. By coupling this with a bound on the total processing times, we establish a tight constant bound of $2$ on the Price of Anarchy.

$(ii)$ LBDR (Largest Basic-deterioration Ratio) policy, prioritizes jobs based on the ratio $R_i = b_i/a_i$). In its analysis, we employ a product-based inductive technique. We relate the completion time of any machine in an optimal schedule to the parameters of the last job in the LBDR sequence, showing that the term $1 - C_i/R_n$ is bounded by the product of terms $(1 - b_h/R_n)$ for all jobs $h$ on that machine. We then express the starting time of the last job in the $NE$ profile as a convex function of $b_n$, which achieves its maximal values at $b_n=0$ and $b_n=OPT$. This framework successfully yields a tight constant $PoA$ of $e/(e-1)$ for $m \le 3$ and $2-1/m$ for larger systems.

$(iii)$ For jobs with a uniform positive deterioration rate, SBPT policy (Shortest Basic Processing Time) addresses the main cause of exponential inefficiency: cascading delays across a sequence of jobs. We employ a round-based optimality argument, proving that in any Nash Equilibrium under SBPT, jobs are effectively partitioned into "rounds" (sets of $m$ jobs), where each round contains the shortest available jobs assigned across the $m$ machines. We prove that this configuration minimizes the total processing time of the entire system, reducing the potential PoA from $(1+a)^{n/m}$ to a constant bound of $\frac{2m+am-1}{m+a}$.


\section{Equilibrium Existence and Computation}
\label{sec:eqExistence}
For the four classes of games $\CidenticalJobs$,
$\Ctwo$, $\CidenticalMachines$, $\Cglobal$ introduced in Section~\ref{sec:classes}, it is known that a $NE$ is guaranteed to exist if job lengths are fixed~\cite{RST21}. 
Clearly, this is a special case of our setting (where for all $i$, $a_i=0$). 
Our goal in this section is to examine whether the existence of a $NE$ is hurt when jobs lengths vary with time. As we show, the answer is not uniform and depends on the deterioration rates of the jobs.

\subsection{Games with Symmetric Jobs}
\label{sec:symmetric}
We start with the class $\CidenticalJobs$, which consists of the classes $\CidenticalJobs^+$ and $\CidenticalJobs^-$. For both classes, all jobs have the same processing time function $p$. 
Specifically, in $\CidenticalJobs^+$, for all $i$, let $p_i(t) = p(t)=b+a\cdot t$, and in $\CidenticalJobs^-$, for all $i$ let $p_i(t)=max\{b-a\cdot t,\tau\}$.

The following algorithm assigns the jobs greedily, where in each step, a job is added on a machine on which the cost of a next job is minimized. This algorithm was shown to produce a $NE$ in games without time-dependent processing times, that is, for $a_i=0$ \cite{RST21}. We show that it produces a $NE$ also for instances in $\CidenticalJobs$.

\begin{algorithm}[H]
\caption{ Calculating a $NE$ of symmetric jobs on related machines} \label{NEUWjobs}
\begin{algorithmic}[1]
\STATE Let $\ell_j$ denote the completion time of machine $j$.  Initially, $\ell_j=0$ for all $1 \le j \le m$.\REPEAT 
\STATE Let $j^{\star} = \argmin_j ~\ell_j+p(\ell_j)/s_j$.
\STATE Assign on machine $j^{\star}$ the first unassigned job on its priority list.
\STATE $\ell_{j^{\star}} = \ell_{j^{\star}}+p(\ell_{j^{\star}})/s_{j^{\star}}$.
\UNTIL {all jobs are scheduled}
\end{algorithmic}
\end{algorithm}

\begin{theorem}
\label{thm:unitjobs}
Algorithm \ref{NEUWjobs} produces a $NE$ for any $\CidenticalJobs$.
\end{theorem}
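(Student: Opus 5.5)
The plan is to exploit the fact that all jobs share the same processing-time function $p$. Then the start and completion times on a machine depend only on how many jobs precede a job there, not on which jobs they are. For machine $j$ I define $f_j(t)=t+p(t)/s_j$ and the slot values $L_j^0=0$ and $L_j^k=f_j(L_j^{k-1})$. In any profile, the $k$-th job on machine $j$ (by $\pi_j$) completes exactly at $L_j^k$. Algorithm~\ref{NEUWjobs} fills slots one at a time: each iteration chooses the machine whose next slot value $\ell_j+p(\ell_j)/s_j=f_j(\ell_j)$ is minimal.

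\emph{Step 1: the chosen values are non-decreasing over the run.} In both $\CidenticalJobs^+$ and $\CidenticalJobs^-$ we have $p(t)\ge 0$, so $f_j(t)\ge t$. Suppose an iteration selects a slot with value $x$ on $j^\star$. Afterwards the next-slot value of $j^\star$ becomes $f_{j^\star}(x)\ge x$, and all other machines' next-slot values are unchanged and were already $\ge x$ by minimality. Hence the minimum available value never decreases. Consequently every job assigned by the algorithm completes no later than any slot value chosen after it, and no later than $\min_j f_j(\ell_j)$ for the final loads. This step deliberately avoids any monotonicity of $f_j$ itself, which fails in $\CidenticalJobs^-$ when $a>s_j$. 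All we need is $f_j(t)\ge t$.

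\emph{Step 2: no job can gain by deviating.} Take a job $i$ with cost $C_i=L_j^k$ on machine $j$, and suppose $i$ moves to $j'\neq j$. Let $k'$ be its position on $j'$ after the move, so its new cost is exactly $L_{j'}^{k'}$, since the jobs are identical. I split into two cases.
\begin{enumerate}[(a)]
\item If $k'=n_{j'}+1$, where $n_{j'}$ is the number of jobs on $j'$, then $L_{j'}^{k'}=f_{j'}(\ell_{j'})$. This is at least every value chosen during the run by Step 1, and in particular at least $C_i$.
\item If $k'\le n_{j'}$, let $u$ be the job currently in slot $k'$ of $j'$. Because $i$ is inserted before $u$, $i$ has higher priority than $u$ on $j'$. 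When the algorithm placed $u$ in that slot, it took the first unassigned job on $\pi_{j'}$. So $i$ must already have been assigned at that moment, that is, its slot was chosen earlier. By Step 1, $C_i\le L_{j'}^{k'}$.
\end{enumerate}
In both cases the deviation does not strictly lower $i$'s completion time, so the output is a $NE$.

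The main obstacle is the case with negative deterioration. Here $f_j$ may be decreasing, so there is a worry that a job could benefit from a later start on another machine. The argument above sidesteps this, because a deviating job can only occupy an existing slot or the next slot, and both have values already compared by the greedy order. The remaining delicate point is ties, where $C_i=L_{j'}^{k'}$. Here I would check that the tie-breaking rule (preferring the machine with the longer processing time) is consistent with the choice made by $\argmin$. If needed, I would fix the $\argmin$ tie-breaking accordingly, so that an equal-cost deviation is never a strict improvement under the stated preference.
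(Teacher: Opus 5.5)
Your proposal is correct and follows essentially the same route as the paper: the greedy assigns jobs in non-decreasing order of completion time, and a deviating job $i$ lands either in an occupied slot on $j'$ or in the next free slot. In the first case the slot's current occupant has lower priority than $i$, so $i$ was assigned earlier and $C_i$ is at most that slot's value. In the second case the slot's value is at least every value chosen during the run. Your version is somewhat more careful than the paper's: you prove the monotonicity (which the paper only asserts) from $p\ge 0$ alone, so it also covers non-delay-averse instances of $\CidenticalJobs^-$, and you avoid the paper's step ``$k'<k$'', which presumes equal speeds. Your worry about ties is moot, because the paper's NE definition only requires $cost_i(\sigma)\le cost_i(\sigma_i',\sigma_{-i})$.
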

 \begin{proof}
  Let $\sigma^{\star}$ be the schedule produced by Algorithm \ref{NEUWjobs}. We show that $\sigma^{\star}$ is a $NE$.   Note that the jobs are assigned one after the other according to their completion time in $\sigma^{\star}$. That is, if $u$ is assigned before $v$ then $C_{u}(\sigma^{\star}) \le C_{v}(\sigma^{\star})$. 
 Assume by contradiction that $\sigma^{\star}$ is not a $NE$, and let $i$ be a job that can migrate from its current machine $j$ to machine $j'$ and reduce its completion time. Assume job $i$ is assigned as the $k$-th job on its current machine, and that if it migrates, then $i$ would be assigned as the $k'$-th job on machine $j'$.  Note that before the migration, job $i$ has the same completion time as the $k$-th job on machine $j'$, so $k'<k$. However, this contradicts the choice of the algorithm when the $k'$-th job on machine $j'$ is assigned - since $i$ should have been selected. If no job is $k'$-th on machine $j'$, then we get a contradiction to the assignment of $i$.
 \end{proof}


\subsection{Games with Two Related  Machines}
\label{sec:twoMachines}
In this section we consider the class $\Ctwo^{DA}=\Ctwo^{+} \cup \Ctwo^{-DA}$, which includes games with delay-averse jobs with arbitrary processing time functions, and two related machines.  Recall that delay-averse jobs are jobs whose completion functions are non-decreasing.  

Assume, w.l.o.g., that $s_1=1$ and $s_2=s \le 1$.
The following algorithm computes a $NE$ for instances in $\Ctwo^{+}$ and in $\Ctwo^{-DA}$. It initially assigns all the jobs on the fast machine. Then, the jobs are considered according to their order in $\pi_2$, and every job gets an opportunity to migrate to $M_2$.
\begin{algorithm}[H]
\caption{ Calculating a $NE$ schedule on two related machines} \label{NE2machines}
\begin{algorithmic}[1]
\STATE Assign all the jobs on $M_1$ (fast machine) according to their order in $\pi_1$.
\STATE For $1 \le k \le n$, let the job $i$ for which $\pi_2(i)=k$ perform a best-response move (migrate to $M_2$ if this reduces its completion time).
\end{algorithmic}
\end{algorithm}

\begin{theorem}
\label{thm:m2NE}
Algorithm~\ref{NE2machines} produces a $NE$ for any $G\in \Ctwo^{DA}$.
\end{theorem}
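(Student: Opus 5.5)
The plan is to show that once the second phase of Algorithm~\ref{NE2machines} ends, no job wants to move in either direction. We do this by keeping two invariants throughout the phase that processes jobs in $\pi_2$ order. The key structural fact is delay-aversion: a job's completion time on a fixed machine is non-decreasing in its starting time. Also, the start time of a job on a machine is monotone non-decreasing in the set of higher-priority jobs on that machine, since every $p_i(t)/s+t$ is non-decreasing in $t$. So adding jobs ahead of a job never helps it, and removing jobs ahead of it never hurts it.

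\textbf{Invariant (A): jobs on $M_2$ stay put.} Jobs migrate to $M_2$ in the order of $\pi_2$, so each new arrival is appended at the end of $M_2$'s queue. It therefore never delays jobs already on $M_2$, and their completion times on $M_2$ are frozen once they arrive. Meanwhile $M_1$ only loses jobs over time. By monotonicity, any job's prospective completion time on $M_1$ can only decrease. This is the delicate direction, because a job on $M_2$ may later regret its move.

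To rule this out, take $i$ on $M_2$ with $\pi_2$-rank $k$, and let $j'$ be any job that leaves $M_1$ after $i$ moved. Job $j'$ migrated because its completion on $M_2$, placed after $i$, is smaller than its completion on $M_1$. Hence
\[
C_{j'}(M_1) > C_{j'}(M_2) \ge C_i(M_2).
\]
If $j'$ is ahead of $i$ in $\pi_1$, then $i$'s start time on $M_1$ (if it returned) is at least the completion time of the last such job that remains. Here I would argue by induction over later migrants that the removals only matter through jobs whose own $M_1$-completion already exceeded $C_i(M_2)$. Removing such a job still leaves $i$'s $M_1$-completion at least the completion time of some earlier remaining job chain. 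More concretely, let $j'$ be the $\pi_1$-earliest departed-after-$i$ job preceding $i$. Then the prefix of $M_1$ before $j'$ is unchanged, and $i$ on $M_1$ would start at or after where $j'$ started. So $i$ would complete no earlier than $j'$ did on $M_1$, using $p_i$ versus $p_{j'}$... This fails for heterogeneous jobs, and it is the main obstacle.

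\textbf{Fix via priority.} I would instead use the tie-breaking and the following observation. When $i$ moved, its $M_1$-completion exceeded its $M_2$-completion. Afterwards, $i$'s returning to $M_1$ would be evaluated in the final profile, so I need the final $M_1$ start of $i$ to be large enough. The cleaner route is to prove only invariant (B) directly and then handle (A) with a potential-style argument: if some $M_2$ job wants to return, take the $\pi_2$-last such job. Its move removes only itself from $M_2$, which is harmless to others except that it helps later $M_2$ jobs. I would try to show that this contradicts the optimality of the decision of the last job $j'$ that left $M_1$ ahead of $i$ in $\pi_1$, by comparing $C_{j'}$ and $C_i$ through the fast speed $s_1=1\ge s$.

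\textbf{Invariant (B): jobs on $M_1$ stay put.} Take $i$ that remained on $M_1$. When $i$ was considered, moving to $M_2$ was not improving. Later $M_2$ jobs have larger $\pi_2$ rank, so they appear after $i$'s would-be position and do not delay it. Jobs ahead of $i$ on $M_2$ are exactly those present at that time. So $i$'s prospective $M_2$ completion is unchanged, while its $M_1$ completion only weakly decreased. Hence $i$ still does not want to move.

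Combining (A) and (B) yields a NE. The main effort goes into (A), where I expect to exploit $s\le1$ together with delay-aversion.
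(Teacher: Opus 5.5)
Your invariant (B) is correct and is exactly the paper's Claim~\ref{cl:dontleaveM1}. Later arrivals on $M_2$ have larger $\pi_2$-rank, so they do not change $i$'s prospective completion time there, while $M_1$ only loses jobs.

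\textbf{The gap is in (A).} Direction (A) is the substance of the theorem, and your proposal does not prove it. You correctly see why your first attempt fails: comparing $i$'s completion time on $M_1$ with $j'$'s forces a comparison of $p_i$ with $p_{j'}$. Your ``fix'' is only a statement of intent (``I would try to show''). Its ingredients also do not help. Tie-breaking plays no role here. A potential argument on the $\pi_2$-last job that wants to return is irrelevant, since the NE condition concerns unilateral deviations in the final profile.

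\textbf{The missing idea: compare start times, not completion times.} Your own choice of $j'$ already works: take the $\pi_1$-earliest job that precedes $i$ in $\pi_1$ and leaves $M_1$ after $i$ moved. (The paper uses the last such job to leave; either choice works.)
\begin{itemize}
\item The jobs ahead of $j'$ on $M_1$ from $i$'s move onward never leave, and no job ever joins $M_1$. So $j'$'s start time $\beta$ on $M_1$ is frozen until it departs.
\item Those same jobs stay ahead of $i$ in $\pi_1$. Hence $i$ would start at time at least $\beta$ if it returned.
\item Since $j'$ migrated after $i$, it sits behind $i$ on $M_2$. Its $M_2$ completion time at migration equals its final $C_{j'}(\sigma)$, and $i$'s completion time $C_i(\sigma)$ is likewise frozen.
\end{itemize}
The beneficial migration of $j'$ then gives
\[
\beta + p_{j'}(\beta) \;>\; C_{j'}(\sigma) \;\ge\; C_i(\sigma) + \frac{p_{j'}(C_i(\sigma))}{s} \;\ge\; C_i(\sigma) + p_{j'}(C_i(\sigma)).
\]
The middle inequality uses delay-aversion on $M_2$, and the last uses $s\le 1$. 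Because $t\mapsto t+p_{j'}(t)$ is non-decreasing (as $a_{j'}\le s\le 1$), this forces $\beta > C_i(\sigma)$. So if $i$ returned, it would \emph{start} after its current completion time, and the move cannot be beneficial.

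If no such $j'$ exists, the jobs ahead of $i$ on $M_1$ are exactly those present when $i$ chose to leave, and its $M_2$ completion time is unchanged. Returning is then again not beneficial.

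This is the content of the paper's Claim~\ref{cl:dontleaveM2}. Without it, your proposal establishes only half of the equilibrium condition.
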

\begin{proof}
Let $\sigma$ denote the schedule after the algorithm terminates. The following two claims show that after the termination of the algorithm, no job has a unilateral deviation that improves its cost, i.e., $\sigma$ is a $NE$.

For simplicity, we assume that the deterioration rates of the jobs are positive, as it does not change the validity of the claims.

\begin{claim}
\label{cl:dontleaveM1}
No job for which $\sigma_i=M_1$ has a beneficial migration.
\end{claim}
\begin{proof}
Assume by contradiction that job $i$ is assigned on $M_1$ and has a beneficial migration. Assume that $\pi_2(i)=k$. Job $i$ was offered to perform a migration in the $k$-th iteration of step 2 of the algorithm, but chose to remain on $M_1$. The only migrations that took place after the $k$-th iteration are from $M_1$ to $M_2$. Thus, if migrating is beneficial for $i$ after the algorithm completes, it should have been beneficial also during the algorithm, since its completion time could only decrease, contradicting its choice to remain on $M_1$.
\end{proof}

\begin{claim}
\label{cl:dontleaveM2}
No job for which $\sigma_i=M_2$ has a beneficial migration.
\end{claim}
\begin{proof}
Assume by contradiction that the claim is false and let $i$ be the first job on $M_2$ (first with respect to $\pi_2$) that may benefit from returning to $M_1$.
Let $\widehat{\sigma}$ denote the schedule before job $i$ migrates to $M_2$ - during the second step of the algorithm.
Recall that $C_i(\sigma)$ is the completion time of job $i$ on $M_2$, and $C_i(\widehat{\sigma})$ is its completion time on $M_1$ before its migration.

Since the jobs are activated according to $\pi_2$ in the $2$-nd step of the algorithm, no jobs are added before job $i$ on $M_2$. Job $i$ may be interested in returning to $M_1$ only if some jobs that were processed before it on $M_1$, move to $M_2$ after its migration. Denote by $\Delta$ the set of these jobs, and let $\delta$ be their total actual running time. Let $i'$ be the last job from $\Delta$ to complete its processing in $\sigma$. Since job $i'$ performs its migration out of $M_1$ after job $i$, and jobs do not join $M_1$ during step 2 of the algorithm, the completion time of $i'$ when it performs the migration is at most $C_{i'}(\widehat{\sigma})$. Denote by $\Gamma$ the set of jobs that were processed before $i$ in $M_1$ and are not part of $\Delta$, and denote by $\gamma$ their actual running time. By the definition of $p$ we have that $C_i(\widehat{\sigma}) = \gamma + \delta + p(\delta + \gamma) = \gamma + \delta + b_i + a_i\cdot (\delta + \gamma)$. Since $i'$ is the last to leave $M_1$ in $\Delta$, it leaves $M_1$ when the actual running time of jobs before it is $\beta \le \gamma$. By the definition of $p(i')$, $C_{i'}(\widehat{\sigma}) = \beta + a_{i'}\cdot \beta +b_{i'}$.
The migration from $M_1$ to $M_2$ is beneficial for $i'$, thus, $C_{i'}(\sigma) < C_{i'}(\widehat{\sigma})$. 

The jobs in $\Delta$ are all before job $i$ in $\pi_1$ and after job $i$ in $\pi_2$. Specifically job $i'$ is processed after $i$, thus, its starting time is at least $C_i(\sigma)$. Since $i'$ is delay averse, its completion time higher, and is at least $C_{i'}(\sigma) \ge C_{i}(\sigma)+(a_{i'}\cdot C_{i}(\sigma)+b_{i'})/s $. 

Finally, we assume that $\sigma$ is not stable and job $i$ would like to return to $M_1$. Let $\gamma'$ be the actual running time of jobs processed on $M_1$ before job $i$ when it returns from $M_2$. Note that $\gamma \ge \gamma' \ge \beta$ and that the completion time of $i$ is $\gamma' + a_i\cdot \gamma' + b_i$. The migration of job $i$ to $M_1$ implies that $\gamma' + a_i\cdot \gamma' + b_i< C_{i}(\sigma)$.

Combining the above inequalities, we get 
\begin{align*}
(\gamma' + a_i\cdot \gamma' + b_i)+(\gamma' + a_i\cdot \gamma' + b_i+b_{i'})/s <  \\ C_{i}(\sigma) + (a_{i'}\cdot C_{i}(\sigma)+b_{i'})/s < C_{i'}(\sigma) 
 &<C_{i'}(\widehat{\sigma}) \le \beta + a_{i'}\cdot \beta + b_{i'} < \gamma' + a_{i'}\cdot \gamma' +b_{i'},
\end{align*}
and specifically $(\gamma' + a_i\cdot \gamma' + b_i)+(\gamma' + a_i\cdot \gamma' + b_i+b_{i'})/s<\gamma' + a_{i'}\cdot \gamma' +b_{i'} $ which is a contradiction since all values are non negative and $s \le 1$.

Note that if the jobs have negative deterioration rates, then the last inequality is
\begin{align*}
(\gamma' - a_i\cdot \gamma' + b_i)+(\gamma' - a_i\cdot \gamma' + b_i+b_{i'})/s <  \\ C_{i}(\sigma) + ( b_{i'}-a_{i'}\cdot C_{i}(\sigma))/s < C_{i'}(\sigma) 
 &<C_{i'}(\widehat{\sigma}) \le \beta - a_{i'}\cdot \beta + b_{i'} < \gamma' - a_{i'}\cdot \gamma' +b_{i'}.
\end{align*}
It holds, since $\beta \le \gamma'$ and $a_{i'} < s \le 1$.
Similarly, since $\gamma' - a_i\cdot \gamma' >0$, the following contradiction is resulted:
$(\gamma' - a_i\cdot \gamma' + b_i)+(\gamma' - a_i\cdot \gamma' + b_i+b_{i'})/s<\gamma' -a_{i'}\cdot \gamma' +b_{i'}$.
\end{proof}

\end{proof}

\subsection{Games with a Global Priority List}
\label{sec:G4}
In this section we consider the class $\Cglobal$ of games with a global priority list. 
We start by introducing List-Scheduling, the main algorithm used in this section and in later parts of the work.
List-Scheduling (LS) is a well known greedy algorithm~\cite{Gra66},  which considers the jobs in an arbitrary order, and in each iteration schedules the next job on a machine that minimizes its completion time.

Algorithm~\ref{alg:LS} below is an adaptation of LS for our setting, intended for instances in the class $\Cglobal$ of games with a global priority list. Assume w.l.o.g., that $\pi=(1,2,\ldots,n)$. 
The global priority list determines the order according to which the jobs are considered. 
\begin{algorithm}[H]
\caption{ List Scheduling Algorithm for games with a global priority list} \label{alg:LS}
\begin{algorithmic}[1]
\STATE Let $\ell_j$ denote the completion time of machine $j$. Initially, $\ell_j=0$ for all $1 \le j \le m$
\FOR {$i= 1$ to $n$}
\STATE Let $j^{\star} = \argmin_j ~\ell_j+p_i(\ell_j)/s_j$
\STATE Assign job $i$ on machine $j^{\star}$
\STATE $\ell_{j^{\star}} = \ell_{j^{\star}}+p_i(\ell_{j^{\star}})/s_{j^{\star}}$.
\ENDFOR
\end{algorithmic}
\end{algorithm}


It is known that LS provides relatively good approximation for the minimum makespan problem. For instance, running LS on $m$  identical machines and $n$ jobs with fixed processing times yields a $(2-\frac{1}{m})-$approximation~\cite{Gra66}. In coordination mechanisms with a global priority list, LS is known to return a $NE$ on related machines  but not on unrelated machines~\cite{RST21}. 



We now show that every $NE$ schedule of games in $\Cglobal$ is a possible outcome of LS algorithm, and that LS produces a $NE$ when applied on games in this class. Since games in $\Cglobal$ may consists also of non delay-averse jobs, we do limit the machines to be non-idle, meaning that if there is an unassigned job, then some machine must process it, and jobs cannot wait. In addition, the jobs are aware of this limitation in advance. We add this limitation to avoid cases where job with non-increasing completion time function would prefer to wait when it is its turn to be assigned. 

\begin{observation}
    \label{obs:neIsLs}
    Every $NE$ schedule of games in class $\Cglobal$ is a possible outcome of Graham’s List-scheduling (LS) algorithm
\end{observation}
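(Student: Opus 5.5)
We take an arbitrary $NE$ $\sigma$ of a game in $\Cglobal$, with the global list $\pi=(1,2,\ldots,n)$, and show that $\sigma$ can be produced by Algorithm~\ref{alg:LS} for a suitable tie-breaking rule in line 3. We use the non-idleness convention stated above: on every machine the jobs are processed consecutively in $\pi$-order with no gaps, so $S_i(\sigma)$ is the completion time of the last job before $i$ on $\sigma_i$ (or $0$). The argument is an induction on $i=1,\ldots,n$.

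\textbf{Inductive invariant.} After LS has processed jobs $1,\ldots,i-1$, the partial assignment coincides with $\sigma$ restricted to these jobs. Consequently, for every machine $j$, the value $\ell_j$ maintained by LS equals the completion time on $j$ of the last job among $\{1,\ldots,i-1\}$ assigned to $j$ in $\sigma$ (or $0$ if there is none). This holds because the schedule on each machine depends only on the set of jobs assigned to it and on $\pi$.

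\textbf{Key step.} Fix job $i$ and consider its cost if it were assigned to machine $j$, with all other jobs as in $\sigma$. All jobs that precede $i$ on $j$ under $\pi$ are exactly the jobs in $\{1,\ldots,i-1\}$ that $\sigma$ places on $j$. Jobs with larger index are processed after $i$ and do not affect its start time. Hence in the profile $(j,\sigma_{-i})$ job $i$ starts at $\ell_j$ and completes at $\ell_j+p_i(\ell_j)/s_j$. This is exactly the quantity LS minimizes in line 3. For $j=\sigma_i$ the profile is $\sigma$ itself, so $C_i(\sigma)=\ell_{\sigma_i}+p_i(\ell_{\sigma_i})/s_{\sigma_i}$. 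The $NE$ condition gives $C_i(\sigma)\le \ell_j+p_i(\ell_j)/s_j$ for every $j$, so $\sigma_i$ is one of the minimizers in line 3. Choosing it under ties keeps the invariant, and the update of $\ell_{\sigma_i}$ in line 5 matches the completion time of $i$ in $\sigma$.

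\textbf{Main obstacle.} The delicate point is that, when jobs are not delay-averse, a deviating job's cost must be evaluated as if it starts immediately at $\ell_j$. Otherwise a job with $a_i>s_j$ could prefer to wait, and the deviation cost would not be $\ell_j+p_i(\ell_j)/s_j$. The non-idle convention, which the jobs know in advance, rules this out. We therefore state explicitly that under this convention the deviation cost is exactly this formula.

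The remaining subtlety is tie-breaking. Under the paper's convention that a tied job prefers the machine on which its processing time is longer, we would argue that the $NE$ choice $\sigma_i$ already respects this preference. Alternatively, it suffices to note that LS may break ties arbitrarily, since the statement only claims that $\sigma$ is a \emph{possible} outcome of LS.
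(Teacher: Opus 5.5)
Your proof is correct, but it runs LS in a different order than the paper does. The paper's one-line argument feeds the jobs to LS sorted by their start times in $\sigma$ and asserts that each job picks $\sigma_i$ by stability. You instead run Algorithm~\ref{alg:LS} in the order of the global list $\pi$.

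Your choice is what makes the argument exact. With a global list, the jobs that would precede $i$ on machine $j$ after a deviation are precisely the jobs of $\{1,\ldots,i-1\}$ that $\sigma$ places on $j$. So the value $\ell_j$ seen by LS is exactly the start time $i$ would get on $j$, and the LS criterion coincides with the deviation cost. The $NE$ inequalities then directly make $\sigma_i$ an admissible LS choice. This uses no delay-aversion, no monotonicity of $c_i(t,s)$, and no equal speeds; the no-waiting convention is just the model's definition $S_{i_k}=C_{i_{k-1}}$.

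In the start-time ordering, by contrast, the load LS sees on $j$ may contain lower-priority jobs that started before $S_i(\sigma)$, or miss higher-priority jobs that start after it. Relating that load to the true deviation cost therefore needs delay-aversion and identical speeds, as in Observation~\ref{obs:neForIdenticalMachineIsLs}. On related machines it can actually fail. Take speeds $1$ and $\frac12$, fixed lengths $p_h=1$ and $p_k=p_i=2$, and $\pi=(h,k,i)$. The profile with $h,k$ on $M_1$ and $i$ on $M_2$ is a $NE$: the costs are $1,3,4$ against deviation costs $2,4,5$. Yet LS in start-time order ($h$ and $i$ both start at $0$) sends $i$ to $M_1$ at cost at most $3<4$.

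Your $\pi$-ordered version is also the form in which the observation is used later, for example in the induction over $\pi$ in Claim~\ref{cl:uniformP}.
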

\begin{proof}
  Let $\sigma$ be a $NE$ schedule. We claim that $\sigma$ is a possible outcome of Graham’s List-scheduling algorithm. Assume that List-scheduling is performed and the jobs are considered according to their start time in $\sigma$. Every job selects its machine in $\sigma$ , as otherwise, we get a contradiction to the stability of $\sigma$.  
\end{proof}

Next, we show that LS produces a $NE$ when applied on any instance of games with global priority lists. 


\begin{theorem}
    \label{thm:ls_on_global}
   Algorithm~\ref{alg:LS} produces a $NE$ for every $G \in \Cglobal$.
\end{theorem}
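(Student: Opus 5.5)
The plan is the standard "priority-list insulation" argument for global priority lists, adapted to time-dependent lengths. Let $\sigma$ be the output of Algorithm~\ref{alg:LS} with $\pi=(1,2,\ldots,n)$. The key structural fact is that under a global priority list, the start time and completion time of job $i$ on any machine depend only on which jobs among $\{1,\ldots,i-1\}$ are assigned to that machine. Jobs $i+1,\ldots,n$ have lower priority everywhere, so they are processed after $i$ on every machine and cannot affect $S_i$ or $C_i$. We therefore argue by considering the jobs in the order of $\pi$.

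First, fix job $i$ and let $\ell_j^{(i)}$ be the value of $\ell_j$ just before iteration $i$ of the algorithm. We show by a simple induction on $i$ that in the final schedule $\sigma$, the load of machine $j$ consisting of jobs $1,\ldots,i-1$ completes exactly at time $\ell_j^{(i)}$. Hence $C_i(\sigma)=\ell_{\sigma_i}^{(i)}+p_i(\ell_{\sigma_i}^{(i)})/s_{\sigma_i}$. The induction holds because later iterations only append lower-priority jobs behind the existing ones, and the recursive definition of $S$ and $C$ then reproduces the update in step 5.

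Second, suppose $i$ deviates to machine $j'$. In $(j',\sigma_{-i})$, the jobs on $j'$ that precede $i$ are exactly the jobs of $\{1,\ldots,i-1\}$ that the algorithm placed on $j'$. Since $i$ left its own machine, that removal affects only jobs after $i$ there, which is irrelevant to $i$'s cost. So $i$ starts at $\ell_{j'}^{(i)}$, and its new cost is $\ell_{j'}^{(i)}+p_i(\ell_{j'}^{(i)})/s_{j'}$. Because $j^\star$ was chosen as the argmin in step 3, this is at least $C_i(\sigma)$, so no deviation is beneficial. Ties are handled by choosing $j^\star$ consistently with the paper's tie-breaking rule, namely preferring the machine with the longer processing time. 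This makes the argument valid for any linear $p_i$, including negative deterioration, with no delay-aversion assumption.

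The main obstacle is reconciling this with non-delay-averse jobs, which might prefer to wait. The paper's non-idleness convention is exactly what rules that out: a job's strategy is only a machine, and it starts as soon as its predecessors on that machine finish. I would state explicitly that under this convention the cost of $i$ on $j$ is a deterministic function of $\ell_j^{(i)}$ alone. Consequently the argmin comparison covers every available strategy, even when $c_i(t,s)$ is decreasing in $t$.
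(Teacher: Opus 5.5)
Your proof is correct and takes essentially the same route as the paper. Both rest on one fact: after a unilateral deviation, only jobs in $\{1,\ldots,i-1\}$ precede $i$ on its new machine, which is exactly the state the algorithm saw when it chose the completion-time-minimizing machine for $i$. The paper phrases this as a contradiction for the first job in $\pi$ with a beneficial deviation, and your direct argument shows that restriction to the first such job is unnecessary. Your appeal to the tie-breaking rule is harmless but not needed, since the NE definition uses a weak inequality.
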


\begin{proof}
    Let $G \in \Cglobal$, and let $\sigma$ be a schedule produced by Algorithm~\ref{alg:LS}. Assume by contradiction that $\sigma$ is not a $NE$, then let $i$ be the first in the priority list $\pi$ that has a beneficial deviation. Assume that $i$ is assigned on $M_1$ in $\sigma$. After deviating, only jobs in $\{1,\ldots,i-1\}$ precede $i$ on its machine, thus, we get a contradiction to the assignment of $i$ by the algorithm. If the job is non delay-averse, then since the machines are non-idle, it minimizes its completion time when assigned on $M_1$.
\end{proof}

\subsection{Games with Identical Machines}
\label{sec:G3}
The following algorithm is the most natural greedy algorithm for instances with arbitrary (non-global) priority lists. It begins with an empty schedule and consists of $n$ iterations, where in each iteration one job is assigned. Specifically, for every machine, $j$, the completion time of the next unassigned job in $\pi_j$ is calculated, and the job that achieves a minimal completion time among these candidates, is assigned. Note that Algorithm~\ref{alg:LS} is a special case of Algorithm~\ref{alg:Greedy} in which the index $i_j$ is uniform for all machines.

\begin{algorithm}[H]
\caption{ A General Greedy Algorithm} \label{alg:Greedy}
\begin{algorithmic}[1]
\STATE Let $\ell_j$ be the completion time of machine $j$. Initially, $\ell_j=0$ for all $1 \le j \le m$
\STATE Let $i_j$ be the index of the next unassigned job in $\pi_j$. Initially, $i_j=1$ for all $1 \le j \le m$
\REPEAT
\STATE Let $j^{\star} = \argmin_j ~\ell_j+p_{i_j}(\ell_j)/s_j$
\STATE Assign job $i_{j^{\star}}$ on machine $j^{\star}$
\STATE $\ell_{j^{\star}} = \ell_{j^{\star}}+p_{i_j}(\ell_{j^{\star}})/s_{j^{\star}}$.
\STATE for all $1 \le j \le m$, update $i_j$ to the index of the next unassigned job in $\pi_j$ 
\UNTIL{all jobs are assigned}
\end{algorithmic}
\end{algorithm}

Algorithm~\ref{alg:Greedy} can be used to produce a $NE$ schedule for every game in the class $\CidenticalMachines^{DA}=\CidenticalMachines^+ \cup \CidenticalMachines^{-DA} $, i.e., games with machines with identical speeds and non-decreasing completion times. Note that an efficient implementation of this algorithm, using a minimum heap, can achieve a running time of $O(n\cdot log(m))$.

\begin{theorem}
    \label{th:identicalSpeedsIncreasingNE}
    Algorithm~\ref{alg:Greedy} produces a $NE$ when applied on games in class $\CidenticalMachines^{DA}$.
\end{theorem}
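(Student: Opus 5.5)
The plan is to argue by contradiction, using two properties of the greedy run. First, completion times are produced in non-decreasing order. Second, on identical machines every delay-averse job has a completion function $c_i(t)=t+p_i(t)/s$ that is non-decreasing in its start time $t$. Normalize $s=1$ and let $\sigma$ be the output of Algorithm~\ref{alg:Greedy}.

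\textbf{Step 1: monotonicity.} We show that the sequence of values $\ell_{j^\star}$ chosen in successive iterations is non-decreasing. Suppose iteration $r$ assigns job $x$ to $j^\star$ with completion time $c$. Every other machine $j$ had a candidate $i_j$ whose completion time $\ell_j+p_{i_j}(\ell_j)$ was at least $c$. In the next iteration, machines other than $j^\star$ keep the same $\ell_j$, but their candidate may change, because the job just assigned may have been their current candidate. A new candidate $y$ on machine $j$ still starts at time $\ell_j$. We need its completion time to be at least $c$, and this does not follow directly for arbitrary $y$.

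We therefore argue differently. Let $y$ be any job that is unassigned at iteration $r$. For every machine $j$, $y$ is not ahead of $i_j$ in $\pi_j$, or else it would be that machine's candidate. Hence $y$ can start on $j$ no earlier than $\ell_j$, and its completion there is at least $\ell_j+p_y(\ell_j)$ by delay-aversion. This quantity is not directly comparable to the candidate's completion, since jobs differ. So the monotonicity claim should be stated for machine loads: a machine's load only grows, and the greedy's choice is minimal among the current candidates. I expect this step to be the main obstacle. I would first try to replace monotonicity with a direct NE argument that needs only per-machine facts.

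\textbf{Step 2: direct NE argument.} Suppose job $i$ on machine $j$ has a beneficial deviation to $j'$. After the move, $i$ would sit on $j'$ behind exactly the jobs on $j'$ that precede it in $\pi_{j'}$. Let $\ell'$ be the load of $j'$ just after the last of those jobs was assigned, i.e., the load at the iteration when $i$ first became the candidate of $j'$ (or $0$). Since $i$ is unassigned until its own iteration, consider two cases.

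(a) $i$ becomes the candidate of $j'$ no later than the iteration at which it is assigned. At the iteration where $i$ is assigned, the load of $j'$ is at least $\ell'$. Either it still equals $\ell'$ with $i$ as candidate, in which case greedy minimality gives $C_i(\sigma)\le \ell'+p_i(\ell')$. Or $j'$ has already received later-priority jobs. In that second situation the greedy assigned to $j'$ a job ranked after $i$ in $\pi_{j'}$ while $i$ was still unassigned, which is impossible because candidates are always the first unassigned job. So only the first situation occurs, and it gives a contradiction with the deviation being beneficial.

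(b) $i$ is assigned before it becomes the candidate of $j'$. Then some predecessor $k$ of $i$ in $\pi_{j'}$ is assigned later than $i$. Its completion time is at least $C_i(\sigma)$ by Step 1, and for this step per-machine monotonicity suffices, since completion time at least equals load at assignment. The deviation of $i$ places it after $k$ on $j'$, so $i$ would start at least at $C_k$ there. By delay-aversion its completion would be at least $C_k\ge C_i(\sigma)$, which again contradicts the deviation being beneficial.

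\textbf{Step 3: global monotonicity.} Case (b) needs $C_k\ge C_i(\sigma)$ where $k$ is assigned later than $i$. Global monotonicity of assigned completion times is exactly what is needed here. I would prove it by induction on iterations, in the style of Theorem~\ref{thm:unitjobs}, using the following fact. When a new candidate appears on machine $j$ because the old one was taken elsewhere, the new candidate is compared against a threshold at least the current minimum; if it is smaller, it is assigned in the next step with a value still at least the previous one. I would handle this by using delay-aversion together with identical speeds, since the starting load $\ell_j$ is at least the previously assigned completion's start time. This is the delicate point of the whole proof.
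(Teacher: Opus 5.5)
\textbf{There is a genuine gap in case (b), and it cannot be closed: the statement fails exactly there.} Your case (a) is fine. In (b) you only know that some $\pi_{j'}$-predecessor $k$ of $i$ is still unassigned when $i$ is assigned. You then say the deviation places $i$ after $k$ on $j'$, which silently assumes that $k$ ends up on $j'$. It may instead be taken by a third machine. Then $i$'s start on $j'$ is just $\ell_{j'}(r)$, and under the min-completion rule of Algorithm~\ref{alg:Greedy} this can be smaller than $i$'s start on its own machine.

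Here is a concrete counterexample. Take three unit-speed machines $A,B,C$ and fixed-length (hence delay-averse) jobs $f_A,f_B,i,k$ of lengths $2,1,1,10$, with $\pi_A=(f_A,i,k,f_B)$, $\pi_B=(f_B,k,i,f_A)$, $\pi_C=(k,f_A,f_B,i)$. The algorithm proceeds as follows:
\begin{enumerate}
\item it assigns $f_B$ to $B$ (value $1$);
\item it assigns $f_A$ to $A$ (value $2$);
\item it assigns $i$ to $A$ (value $3$, against $11$ and $10$ for $k$ on $B$ and $C$);
\item it assigns $k$ to $C$ (value $10$, against $11$ and $13$).
\end{enumerate}
In the output $C_i=3$. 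Yet moving to $B$ places $i$ behind $f_B$ alone, because $k$ sits on $C$, so $i$ would complete at $2$. Lengths $b+0.01t$ give the same outcome with strictly positive rates.

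Separately, the global monotonicity you plan to prove in Step 3 is false. With two machines, $\pi_1=(f_1,x,y,f_2)$, $\pi_2=(f_2,x,y,f_1)$, and lengths $1,2,10,1$ for $f_1,f_2,x,y$, the algorithm produces completion times $1,2,11,3$ in that order.

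The paper's proof does not avoid this either. It asserts that when $i$ was assigned, its machine had the minimal load. That is false for Algorithm~\ref{alg:Greedy}: in the example above, $A$ had load $2$ while $B$ and $C$ had loads $1$ and $0$. The assertion is, however, exactly the invariant of a different variant: repeatedly select a machine of minimum current load $\ell_j$ and give it the first unassigned job of $\pi_j$. For that variant the theorem holds, with a short argument that replaces your Steps 1 and 3 and case (b):
\begin{itemize}
\item Every job on $j'$ at the iteration $r$ when $i$ is assigned was the first unassigned job of $\pi_{j'}$ while $i$ was still unassigned, so it precedes $i$ in $\pi_{j'}$.
\item Hence $i$'s start on $j'$ after deviating is at least $\ell_{j'}(r)\ge\min_j\ell_j(r)=S_i(\sigma)$.
\item Since $t\mapsto t+p_i(t)$ is non-decreasing on unit-speed machines (delay-aversion), the deviation cannot help.
\end{itemize}
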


\begin{proof}
    Let $\sigma^*$ be the schedule produced by Algorithm~\ref{alg:Greedy}. We show that $\sigma^*$ is a $NE$. Assume by contradiction that $\sigma^*$ is not a $NE$, and let $i$ be a job that can migrate from its current machine $M_1$ to machine $M_2$ and reduce its completion time. When $i$ was assigned on $M_1$, $M_1$ has the minimal load among all machines. After this assignment, new jobs joined $M_2$, making its completion time only longer, or were assigned after $i$ on $M_1$. So if $i$'s current completion time is greater than its possible completion time on $M_2$, it would contradict its initial choice, or the fact that its completion time function is non-decreasing.
\end{proof}

To emphasis the relevance of List Scheduling (LS) algorithm to our model, we note that even without a global priority list, if the jobs are considered according to their start time in a NE profile $\sigma$, we get the following:
\begin{observation}
    \label{obs:neForIdenticalMachineIsLs}
    Every $NE$ schedule of a game $G \in \CidenticalMachines^{DA}$ is a possible outcome of LS algorithm. 
\end{observation}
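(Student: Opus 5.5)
We plan to mirror the argument of Observation~\ref{obs:neIsLs}, but with the consideration order taken from the equilibrium itself rather than from a global list. Let $\sigma$ be a $NE$ of $G\in\CidenticalMachines^{DA}$. Sort all jobs by their starting times $S_i(\sigma)$, breaking ties arbitrarily, and run LS in this order. We will show by induction on the position in this order that LS can make every job choose the machine $\sigma_i$. Moreover, it will then place the job at exactly the same start time $S_i(\sigma)$ it has in $\sigma$. Together these give that $\sigma$ is an LS outcome.

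\textbf{Invariant.} Suppose the first $k-1$ jobs in the order have been placed as in $\sigma$. Consider the $k$-th job $i$.

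First, every job preceding $i$ on $\sigma_i$ in $\pi_{\sigma_i}$ starts and ends no later than $S_i(\sigma)$. It therefore appears earlier in the order and is already placed. Hence the current load of $\sigma_i$ equals $S_i(\sigma)$, and LS would give $i$ completion time exactly $C_i(\sigma)$.

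Next, take any other machine $j$ with current partial load $\ell_j$. This load is the completion time of the last job on $j$ with start time at most $S_i(\sigma)$. We must show
\[
\ell_j+p_i(\ell_j)/s \;\ge\; C_i(\sigma).
\]
The natural comparison is with the deviation $(j,\sigma_{-i})$. In that profile $i$ is inserted into $j$ according to $\pi_j$, so it starts at some time $t$. By the $NE$ property, $t+p_i(t)/s\ge C_i(\sigma)$.

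\textbf{Main obstacle.} The difficulty is that $t$ is determined by $\pi_j$ and may be larger or smaller than $\ell_j$, while LS appends $i$ at the end of $j$.

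If $t\le \ell_j$, delay-aversion finishes the step. Since $c_i(\cdot,s)$ is non-decreasing by the delay-averse assumption (this is exactly where $a_i\le s$ for negative jobs is used), we get $c_i(\ell_j,s)\ge c_i(t,s)\ge C_i(\sigma)$.

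If $t>\ell_j$, then $i$ has lower priority on $j$ than some job $u$ starting on $j$ after $\ell_j$. That job starts at or after $S_i(\sigma)$, since $\ell_j$ already accounts for all jobs starting before $S_i(\sigma)$. We then plan to argue $\ell_j\ge S_i(\sigma)$: the job last completing at $\ell_j$ either started at or after $S_i(\sigma)$, or $j$ would be idle before $u$, which cannot happen without idle time. Given $\ell_j\ge S_i(\sigma)$, we get $c_i(\ell_j,s)\ge c_i(S_i(\sigma),s)=C_i(\sigma)$, again using monotonicity of $c_i$ on identical speeds.

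The remaining case is $\ell_j< S_i(\sigma)$ together with $t>\ell_j$. Here every job on $j$ starting after $\ell_j$ actually starts exactly at $\ell_j$, because there is no idle time. Such a job would therefore start before $S_i(\sigma)$ and would already be placed, contradicting the definition of $\ell_j$. So this case is impossible.

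\textbf{Ties.} These are handled by the paper's tie-breaking convention, so LS may legitimately choose $\sigma_i$ whenever $C_i(\sigma)$ is among the minima.
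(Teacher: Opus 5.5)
Your proposal is correct and takes the same route the paper indicates: the paper gives no formal proof, only the remark that LS is run with jobs considered by their start times in $\sigma$, as in Observation~\ref{obs:neIsLs}. Your case split on whether the insertion time $t$ under $\pi_j$ is at most the partial load $\ell_j$ (using delay-aversion, identical speeds and the absence of idle time) fills in the step the paper only asserts. One thing to tighten: ties in the start-time order among jobs on the same machine, which can occur with zero-length jobs, should be broken according to that machine's priority list rather than arbitrarily, so that the placed jobs on each machine always form a prefix of its sequence in $\sigma$.
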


\subsection{Convergence of Best-Response Dynamics}
\label{sec:BRD}

In this section we consider the question whether natural dynamics such as best-responses are guaranteed to converge to a $NE$. 
Given a strategy profile $\sigma$, a strategy $\sigma'_i$ for job $i\in N$ is a best-response if
$C_i(\sigma'_i,\sigma_{-i}) = \min_{\sigma_i \in M} C_i(\sigma_i,\sigma_{-i})$.

We show that every sequence of best-response converges to a $NE$ for every instance $G \in \CidenticalJobs \cup \Ctwo^+ \cup \CidenticalMachines^{DA} \cup \Cglobal$. 

In the following proofs we extend the proof in~\cite{RST21} for jobs with fixed processing times. The proofs have the same structure for all classes. We assume that best-response dynamics (BRD) does not converge. Since the number of different profiles is finite, the sequence of profiles contains a loop. That is, the sequence includes a profile $\sigma_0$, starting from which jobs migrate and eventually return to their strategy in $\sigma_0$. Let $\Gamma$ denote the set of jobs that perform a migration during this loop. For each of the four classes we identify a job $i \in \Gamma$ such that once job $i$ migrates, it cannot have an additional beneficial move.
\begin{theorem} 
\label{th:brdcycle}
Let $G$ be a game instance in $\CidenticalJobs \cup \Ctwo^+ \cup \CidenticalMachines^{DA} \cup \Cglobal$. Any best-response sequence  
in $G$ converges to a $NE$.
\end{theorem}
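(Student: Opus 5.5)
Following the template described just before the statement, I would argue by contradiction. Suppose some best-response sequence does not converge. The number of profiles is finite, so the sequence must contain a loop: a profile $\sigma_0$ from which jobs migrate and eventually return to their strategies in $\sigma_0$. Let $\Gamma$ be the set of jobs that migrate during the loop. For each of the four classes I would find a job $i\in\Gamma$ with the following property: once $i$ performs its best response inside the loop, it never again has a beneficial move. This contradicts the fact that $i$ must eventually return to its strategy in $\sigma_0$, which requires a further migration.

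For $\Cglobal$, I take $i$ to be the job of $\Gamma$ with the highest priority in $\pi$. Whatever machine $i$ is on, its start time depends only on jobs of higher priority on that machine. Jobs of higher priority outside $\Gamma$ never move, and all jobs of $\Gamma\setminus\{i\}$ have lower priority, so after $i$'s best response the costs $i$ faces on every machine stay fixed. Hence its choice remains a best response. This step uses only the machine-wise structure and not delay-aversion, which is consistent with the non-idle assumption. For $\CidenticalJobs$, all jobs have the same function $p$. The completion time of the $k$-th slot on machine $j$ is then a fixed number $c_{j,k}$, independent of which jobs occupy the slots. I would take the migration in the loop that attains the smallest resulting completion time $c^*$ and let $i$ be the job that performs it. Any later migration lands at a completion time of at least $c^*$, and a job of higher priority joining $i$'s machine would itself need to reach a slot with cost below $c^*$, which is impossible. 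So $i$ is never pushed later, and since it was already at the global minimum it has no improving move.

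For $\CidenticalMachines^{DA}$, I would use the same minimal-completion-time argument. Here the slot completion times depend on the jobs occupying them. I would pick the migration in the loop that achieves the minimum completion time $c^*$, and let $i$ be the job that makes it. Any job that later enters $i$'s machine ahead of $i$ would complete no earlier than $i$'s start time and would need to complete below $c^*$, which would have to be ruled out. With identical speeds and delay-averse (non-decreasing) completion functions, a job placed ahead of $i$ has completion time at most $i$'s. So if that job's new completion time is below $c^*$, this contradicts the minimality of $c^*$ unless the two are equal, and equality would be resolved by the tie-breaking rule. Since every improving move strictly lowers the mover's cost, $i$'s completion time can only decrease and $i$ stays put.

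For $\Ctwo^+$, I expect to adapt the two-machine argument of~\cite{RST21}. I would consider a job of $\Gamma$ that is first in $\pi_2$ among the jobs migrating to $M_2$, or alternatively one that is first in $\pi_1$, and mimic Claims~\ref{cl:dontleaveM1} and~\ref{cl:dontleaveM2}. The key inequality would be obtained by the same chaining used in the proof of Theorem~\ref{thm:m2NE}, exploiting $s\le1$ and positive deterioration. This two-machine case is the step I expect to be the main obstacle, because which job is the stable one depends on the interplay between $\pi_1$ and $\pi_2$, and the deteriorating lengths invalidate simple load comparisons. I would handle it by case analysis on whether the loop contains migrations in both directions, using the monotonicity of $b+a t$.
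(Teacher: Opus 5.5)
Your $\Cglobal$ case is correct and is the paper's argument. The $\CidenticalMachines^{DA}$ and $\Ctwo^+$ cases, however, are not actually proved.

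\textbf{The case $\CidenticalMachines^{DA}$.} Minimum completion time is the wrong quantity once jobs are heterogeneous. Suppose $i'$ with $\pi_j(i')<\pi_j(i)$ joins $i$'s machine. All you know is that $i'$ starts no later than $i$ did, since the jobs ahead of $i'$ form a prefix of those ahead of $i$. Its own completion time can exceed $c^*$ arbitrarily, so its move can be improving without touching the minimality of $c^*$, and it still pushes $i$ back. For instance, let $i$ have length $1$ and sit first on $M_j$ with cost $c^*=1$. A higher-priority job of length $10$ moves there from a position of cost $12$. Its new cost $10$ contradicts nothing, yet $i$ now completes at $11$ and may move again. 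Your inequality ``a job placed ahead of $i$ has completion time at most $i$'s'' holds only against $i$'s \emph{new} completion time. The game's tie-breaking rule (prefer the longer processing time) does not help here.

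The paper tracks start times instead. Let $t$ be the smallest start time of a $\Gamma$-job during the loop, attained on machine $M$. Let $i$ be the highest-priority job in $\pi_M$ among the $\Gamma$-jobs that start at $t$ on $M$. Any $\Gamma$-job ahead of $i$, or joining ahead of it later, starts at time at most $t$, hence exactly $t$ on $M$, contradicting the choice of $i$. So $i$ keeps start time $t$ forever. Any move it makes inside the loop gives it a start time $t'\ge t$, and identical speeds plus delay-aversion give $c_i(t',s)\ge c_i(t,s)$, so the move is not improving.

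\textbf{The case $\CidenticalJobs$.} A milder version of the same tie affects this case. A higher-priority job can land exactly in $i$'s slot with cost equal to $c^*$, not below it. You must therefore choose $i$ as the highest-priority job on that machine among those attaining $c^*$ there, as the paper does.

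\textbf{The case $\Ctwo^+$.} You give only a plan, and the alternative you float (the $\Gamma$-job first in $\pi_1$) does not go through. The argument works with the slow machine's list.
\begin{enumerate}
\item Let $i$ be the $\Gamma$-job of highest priority in $\pi_2$. Every job ahead of it in $\pi_2$ that is ever on $M_2$ lies outside $\Gamma$, so its completion time on $M_2$ is a constant $X$.
\item After $i$ moves to $M_2$, returning to $M_1$ can become profitable only if some job of higher $\pi_1$-priority leaves $M_1$. Otherwise the prefix ahead of $i$ on $M_1$ only grew, and returning costs at least the departure cost, which already exceeded $X$.
\item Let $k$ be the last such job. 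Let $T$ be $i$'s would-be start time on $M_1$ when it returns, so $T+p_i(T)<X$, and in particular $T<X$.
\item Job $k$ started on $M_1$ no later than $T$, so its old cost was at most $T+p_k(T)$.
\item Since $\pi_2(i)<\pi_2(k)$, job $k$ lands behind $i$ on $M_2$, with new cost at least $X+p_k(X)/s$.
\item Because $p_k$ is non-decreasing and $s\le 1$, we have $p_k(T)\le p_k(X)\le p_k(X)/s$. Chaining gives $T+p_i(T)<X<T+p_k(T)-p_k(X)/s\le T$, a contradiction.
\end{enumerate}
With the machines' roles swapped, the chain would need $p_k(T)/s\le p_k(X)$, and the factor $1/s\ge 1$ points the wrong way. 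This is why the choice of $\pi_2$ matters.
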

\begin{proof}
We show that every sequence of best-response converges to a $NE$ for every instance  $G \in \CidenticalJobs \cup \Ctwo^+ \cup \CidenticalMachines^{DA} \cup \Cglobal$. 

In the following proofs we extend the proof in~\cite{RST21} for jobs with fixed processing times. The proofs have the same structure for all classes. We assume that best-response dynamics (BRD) does not converge. Since the number of different profiles is finite, the sequence of profiles contains a loop. That is, the sequence includes a profile $\sigma_0$, starting from which jobs migrate and eventually return to their strategy in $\sigma_0$. Let $\Gamma$ denote the set of jobs that perform a migration during this loop. For each of the four classes we identify a job $i \in \Gamma$ such that once job $i$ migrates, it cannot have an additional beneficial move.

Consider first a game $G \in \CidenticalJobs$, that is, a game in which all jobs have the same processing time function $p_i$.
Let $C_{min}$ be the lowest completion time of a job in $\Gamma$ during the BR-cycle. Let $M_1$ be a machine on which $C_{min}$ is achieved. Let $i$ be the job achieving completion time $C_{min}$ on $M_1$ with the highest priority on $M_1$ among the jobs in $\Gamma$ that achieve cost $C_{min}$ during the loop. Once $i$ achieves completion time $C_{min}$, no job is added to $M_1$ before it. Job $i$ cannot have an additional beneficial move, as this will contradict the definition of $C_{min}$.


We turn to consider games in $\Ctwo^+$, that is, $G$ is played on two machines. W.l.o.g., assume $s_1=1$ and $s_2=s \le 1$.  Let $i$ be the job in $\Gamma$ with highest priority in $\pi_2$. 
Given that BRD loops and that $i \in \Gamma$, it holds that during the BR sequence $i$ migrates from $M_1$ to $M_2$ and then back from $M_2$ to $M_1$. 

We show that once $i$ moves from $M_1$ to $M_2$, moving back to $M_1$ cannot be beneficial for it.
Let $\sigma^\prime$ denote the schedule before job $i$ migrates from $M_1$ to $M_2$.
Assume by contradiction that $i$ may benefit from returning to $M_1$. 
Let $L_1$ be the total processing time of jobs on $M_1$ that precede $i$ on $\pi_1$ in $\sigma^{\prime}$. We have that $C_i(\sigma^{\prime})=L_1+p_i(L_1)$.
Let $L_2$ be the the total processing time of jobs in $N \setminus \Gamma$ that precede $i$ on $\pi_2$.
Since $i$ has the highest priority among $\Gamma$ on $M_2$, its cost while on $M_2$ is $(L_2+p_i(L_2))/s$, independent of other jobs leaving and joining $M_2$.
The migration of $i$ from $M_1$ to $M_2$ is beneficial, thus, $L_1+p_i(L_1) > (L_2+p_i(L_2))/s$.
Migrating back to $M_1$ may become beneficial only if the total processing time of job that would precede it on $M_1$ is less than $L_1$, thus, at least one job that precedes $i$ on $\pi_1$ migrates out of $M_1$ when $i$ is on $M_2$. Let $k$ be the last job, for which $\pi_1(k) < \pi_1(i)$ that have left $M_1$ when $i$ is on $M_2$. Following $k$'s migration the processing time of jobs on $M_1$ that precede $i$ in $\pi_1$ is $L'_1$. Migrating back is beneficial for $i$, thus, $L'_1+p_i(L'_1) <(L_2+p_i(L_2))/s$ (additional jobs may join $M_1$ after $k$ leaves it, but this only makes $M_1$ less attractive for $i$). 
Since $\pi_2(k) > \pi_2(i)$, the cost of $k$ after its migrating to $M_2$ is at least $(L_2+p_i(L_2)+p_k(L_2+p_i(L_2))/s$. $k$'s migration from $M_1$ to $M_2$ is beneficial, thus, $L'_1+p_k(L'_1) > (L_2+p_i(L_2)+p_k(L_2+p_i(L_2))/s$. 
By combining the above inequalities we reach a contradiction. Specifically, 
$L'_1+p_i(L'_1) < (L_2+p_i(L_2))/s =(L_2+p_i(L_2)+p_k(L_2+p_i(L_2))/s-p_k(L_2+p_i(L_2))/s < L'_1+p_k(L'_1) -p_k(L_2+p_i(L_2))/s \le L'_1$.
We conclude that job $i$ cannot benefit from returning to $M_1$ and thus, cannot be involved in the BRD-cycle.


Assume next that $G \in \CidenticalMachines^+ \cup \CidenticalMachines^{-DA}$, that is, machines have identical speeds and jobs are delay-averse. Let $t$ be the lowest start time of a job in $\Gamma$ during the BR-cycle. Let $M_1$ be a machine on which $t$ is achieved. Let $i$ be the job in $\Gamma$ with highest priority on $M_1$. Clearly, once $i$ achieves start time, $t$, it cannot have an additional beneficial move, as this will contradict its choice.

Recall that we previously showed that a $NE$ is not guaranteed in classes $\G{^-_2} \cup  \G{^-_3}$, and hence BRD does not converge to $NE$.

Finally, if $G \in \Cglobal$, that is, when machines share a global priority list, then once the job in $\Gamma$ with the highest priority migrates, it selects the machine with the lowest total processing time of jobs in $N \setminus \Gamma$ that precedes it, and cannot have an additional beneficial move later.

\end{proof}

\section{Games with Non Delay-Averse Jobs} 
\label{sec:NDA}
In this section we consider instances in which all jobs have non-increasing processing time function, formally, for all $i$, $v_i=-1$.
In addition, we do not assume that the jobs are delay-averse. Thus, the deterioration rates of the jobs might be higher than the machines' speeds. Unlike the previous classes we considered, we show in this section that Nash Equilibrium need not exists when there are no restrictions on the relations between the job rates and the machine speeds.

Recall that every job is associated with a threshold processing time, $\tau_i>0$, such that $p_i(t)=max(b_i-a_i\cdot t,\tau_i)$.

We first show that a $NE$ may not exists even in a simple game of two jobs and two machines.

\begin{example}
\label{ex:noNe2}
{\em
Consider the game $G_{noNE2} \in \Ctwo^-$ with $N=\{u,v\}$ and $M=\{M_1,M_2\}$, where $\pi_1=(u,v)$ and $\pi_2=(v,u)$. The speeds of the machines are $s_1=1$, and $s_2=0.5$. The jobs' basic processing times are $b_u=24$, $b_v=8$, and both have negative deterioration rate $a=1.5$ and threshold processing time $\tau=3$, yielding the processing-time functions $p_u = max(24-1.5\cdot t,3)$ and $p_v = max(8-1.5\cdot t,3)$. Since $m=2$, each job has two strategies. The jobs' costs in the four  corresponding profiles are given in Table~\ref{tab:noNE2}.
\begin{table}[htbp]
 \begin{center}
  \begin{tabular}{|c||c|c|}
  \hline
  {u $\char`\\$ v} & $M_1$ & $M_2$\\
  \hline\hline
  $M_1$   & $\mathbf{24}, 24+3=\mathbf{27}$ & $\mathbf{24, 16}$ \\
  \hline
  $M_2$  & $24/0.5=\mathbf{48, 8}$ & $16 + 3/0.5 = \mathbf{22, 16}$ \\
  \hline
  \end{tabular}
 \end{center}
 \caption{Jobs' completion times in the game $G_{noNE2}$. Every entry gives the completion time of job $u$ followed by the completion time of job $v$.}
  \label{tab:noNE2}
\end{table}

As can be seen in the table, the jobs will benefit from deviating in clockwise direction. Each such deviation reduces the completion time of the deviating player. Thus, the game $G_{noNE2}$ has no pure Nash equilibrium.
}
\end{example}

The game $G_{noNE2}$ is played on two related machines. For two identical machines we show that every two-player game has a $NE$ profile, while a $NE$ may not exists in a three-player game. 

\begin{theorem}
\label{thm:dwin2M}
Every two-player game $G \in \CtwoP^-$ has a $NE$ profile. There exists a three-player game $G_{noNE3} \in\CtwoP^-$ that has no $NE$ profile. 
\end{theorem}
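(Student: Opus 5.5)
The two parts need different arguments. The positive part is a short case analysis that uses the fact that the machines are identical. The negative part needs an explicit instance, checked profile by profile as in Example~\ref{ex:noNe2}.

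\emph{Two players.} Let the jobs be $u,v$ and let both machines have speed $s$. For a single job $i$ write $c_i(t)=c_i(t,s)$, the completion-time function analysed in Observation~\ref{ob:optimalStartTime}. The key point is that a job running alone on either machine starts at $0$ and pays $c_i(0)$, and this value is the same on both machines. I would start from a profile $\sigma$ in which $u$ is on $M_1$ and $v$ is on $M_2$.
\begin{enumerate}[1.]
\item If neither job has a beneficial deviation, $\sigma$ is a $NE$ and we are done.
\item Otherwise some job, say $u$, gains by joining $M_2$. If $\pi_2$ placed $u$ before $v$, then $u$ would start at $0$ and pay $c_u(0)$ again, which is no improvement. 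The tie-breaking rule does not change this, since the processing time is also unchanged. So $\pi_2(v)<\pi_2(u)$, and the deviation is beneficial because $c_u(c_v(0))<c_u(0)$.
\item Let $\sigma'$ be the profile with both jobs on $M_2$, $v$ first.
  \begin{enumerate}[(a)]
  \item In $\sigma'$, job $v$ pays $c_v(0)$. Moving to the empty $M_1$ gives it exactly $c_v(0)$ with the same processing time, so it has no strictly better option and no tie-break incentive.
  \item Job $u$ has only two options: stay on $M_2$ behind $v$, or go alone to $M_1$ at cost $c_u(0)$. We just showed that staying is strictly better.
  \end{enumerate}
  Hence $\sigma'$ is a $NE$.
\end{enumerate}
Both cases yield a $NE$. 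The only care needed is to state the tie-breaking convention explicitly at the equal-cost comparisons. This argument breaks for related machines, where $c_v(0)$ differs between machines, which is consistent with Example~\ref{ex:noNe2}.

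\emph{Three players.} The lower-order argument fails because with three jobs a job can be alone, in front of one job, behind one job, or sandwiched, and the preferences can cycle. I would take two machines of speed $1$, jobs $u,v,w$, and non-global priority lists such as $\pi_1=(u,v,w)$ and $\pi_2=(w,v,u)$.

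The intended cycle, mimicking Example~\ref{ex:noNe2}, is as follows.
\begin{itemize}
\item Job $w$ has $a_w>1$, so it is not delay-averse and wants to start as late as $\frac{b_w-\tau_w}{a_w}$. It therefore prefers to sit behind a job of moderate length.
\item Jobs $u$ and $v$ use large $b$ with rates at or below $1$, so that they are hurt by being delayed. One of them is made to flee whichever machine $w$ would join in front of it.
\end{itemize}

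I would choose concrete parameters, for example $b$-values like $24$, $8$, $10$, a common threshold $\tau=3$, and rates between $0.5$ and $1.5$, and tune them by hand. Then I would tabulate all $2^3=8$ profiles, up to the symmetry of the identical machines, listing each job's completion time. For every profile I would exhibit one strictly improving deviation, reading the cycle off the table.

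The main obstacle is finding parameters that make the cycle work. Specifically, each of the 8 profiles must have an improving move, including the profiles with all three jobs on one machine and the two-versus-one profiles in both orders. If the first attempt leaves a stable profile, I would break it by adjusting a single threshold $\tau_i$, or the rate of $w$ relative to the speed $1$, or by swapping the order of the two jobs in $\pi_2$.
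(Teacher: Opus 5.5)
\textbf{Part 1 (two players).} This is the paper's argument. You start from $u$ alone on $M_1$ and $v$ alone on $M_2$; a beneficial move by $u$ forces $\pi_2=(v,u)$, and the resulting profile is stable. You also check $u$'s incentive in $\sigma'$ more explicitly than the paper does.

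\textbf{Part 2 (three players): this is where the gap is.} An existence claim of this kind is proved only by exhibiting an explicit instance and checking all eight profiles, and you give no instance. Worse, the design you sketch cannot succeed for any choice of parameters.

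With $u$ delay-averse and first in $\pi_1=(u,v,w)$, job $u$ on $M_1$ starts at $0$ and pays $c_u(0)$. This is the minimum of its non-decreasing completion-time function, so keeping $u$ on $M_1$ is always a best response. Write $t_u=c_u(0)$ and consider three profiles with $u$ on $M_1$.
\begin{enumerate}[1.]
\item $v$ alone on $M_2$, $w$ behind $u$ on $M_1$. This is a NE unless $c_w(t_u)>c_w(0)$. Since $c_w(t)=\max\{t+\tau_w,\,b_w+(1-a_w)t\}$ is convex, $c_w$ is then increasing on $[t_u,\infty)$.
\item $w$ and then $v$ on $M_2$. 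Given item 1, $w$ has no reason to leave. So this is a NE unless $c_v(t_u)<c_v(c_w(0))$.
\item $u,v$ on $M_1$ and $w$ alone on $M_2$. Given items 1 and 2, this is a NE. The only alternative for $v$ costs $c_v(c_w(0))>c_v(t_u)$. The only alternative for $w$ makes it start at $c_v(t_u)>t_u$, which costs more than $c_w(t_u)>c_w(0)$.
\end{enumerate}

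\textbf{Why no tuning can fix this.} The same reasoning shows that on two identical machines, every three-job game with at least two delay-averse jobs has a NE.
\begin{enumerate}[1.]
\item Some job can be frozen: either a delay-averse job at the top of some $\pi_j$, or a job at the top of both lists, which pays $c(0)$ wherever it goes.
\item The remaining two jobs then play a $2\times 2$ game in which one machine carries an offset $T>0$. Such a game lacks a NE only if it has a 4-cycle of strict improvements.
\item One orientation of that cycle needs a job $z$ with $c_z(0)<c_z(T)$ and $c_z(s)<c_z(0)$ for some $s>T$, which contradicts convexity.
\item The other orientation needs $c_y(T)<c_y(0)$ and $c_z(T)<c_z(0)$, so both remaining jobs must gain from being delayed.
\end{enumerate}
So adjusting a threshold, the rate of $w$, or the order in $\pi_2$ will never close your cycle. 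At least two jobs must have rate above the machine speed.

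The paper makes all three jobs non-delay-averse. It takes $p_u=\max\{5-1.05t,0.2\}$, $p_v=\max\{4-1.1t,0.2\}$, $p_w=\max\{3-1.2t,0.2\}$, with $\pi_1=(v,u,w)$ and $\pi_2=(w,u,v)$. It then exhibits a strictly improving move in each of the eight profiles.
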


\begin{proof}
Let $G$ be a game with two identical machines $M=\{M_1,M_2\}$ and two jobs $N=\{u,v\}$. Assume w.l.o.g., that $\pi_1 = (u,v)$. Consider the profile in which job $u$ is assigned on $M_1$ and job $v$ is assigned on $M_2$, and assume that job $u$ has a beneficial deviation to $M_2$. Such a migration may change $C_u$ only if job $u$ is assigned after job $v$, that is, $\pi_2=(v,u)$. Note that after the migration job $v$ has no incentive to migrate to $M_1$, since its cost will remain the same. Thus, this profile is a $NE$.

For $n=3$ Let $G_{noNE3}$ be a game with $N=\{u,v,w\}$ and $M=\{M_1,M_2\}$, both having speed $1$. The jobs have the following processing time functions: $p_u(t)=max\{5-1.05\cdot t,0.2\}$, $p_v(t)=max\{4-1.1\cdot t,0.2\}$, $p_w(t)=max\{3-1.2\cdot t,0.2\}$. 
The priorities of the machines are $\pi_1=(v,u,w)$ and $\pi_2=(w,u,v)$. 
With two machines and three jobs, the game has $8$ possible game profiles, depicted in Figure~\ref{fig:noNE3Jobs}. We show that none of these profile is a $NE$.

\begin{enumerate}[i.]
    \item The jobs are all on $M_1$. $cost(u) = 4.8$, $cost(v) = 4$, $cost(w) = 5$.
    Job $w$ will benefit from migrating to $M_2$, reducing its cost to $3$.
    \item Jobs $u$ and $v$ are on $M_1$. $cost(u) = 4.8$, $cost(v) = 4$, $cost(w) = 3$.
    Job $v$ will benefit from migrating to $M_2$, reducing its cost to $3.7$.
    \item Job $u$ is on $M_1$. $cost(u) = 5$, $cost(v) = 3.7$, $cost(w) = 3$.
    Job $u$ will benefit from migrating to $M_2$, reducing its cost to $4.85$.
    \item The jobs are all on $M_2$. $cost(u) = 4.85$, $cost(v) = 5.05$, $cost(w) = 3$.
    Job $v$ will benefit from migrating to $M_1$, reducing its cost to $4$.
    \item Jobs $v$ and $w$ are on $M_1$. $cost(u) = 5$, $cost(v) = 5$, $cost(w) = 4.2$.
    Job $u$ will benefit from migrating to $M_1$, reducing its cost to $4.8$.
    \item Job $w$ is on $M_1$. $cost(u) = 5$, $cost(v) = 5.2$, $cost(w) = 3$.
    Job $v$ will benefit from migrating to $M_1$, reducing its cost to $4$.
    \item Jobs $u$ and $w$ are on $M_1$. $cost(u) = 5$, $cost(v) = 5$, $cost(w) = 5.2$.
    Job $w$ will benefit from migrating to $M_2$, reducing its cost to $3$.
    \item Job $v$ is on $M_1$. $cost(u) = 4.85$, $cost(v) = 4$, $cost(w) = 3$.
    Job $u$ will benefit from migrating to $M_1$, reducing its cost to $4.8$.

\end{enumerate}

\begin{figure}[h]
\centering
\includegraphics[width=1\textwidth]{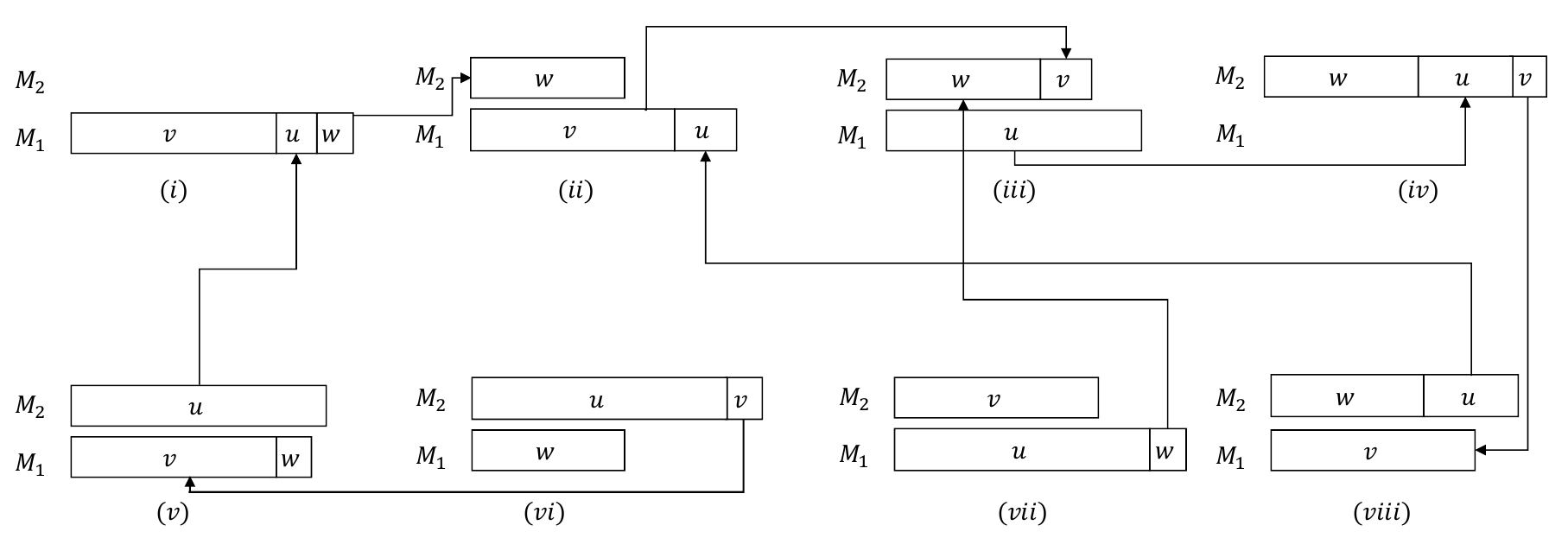}
\caption{The possible profiles of the $G_{noNE3}$. No profile is a $NE$.}
\label{fig:noNE3Jobs}
\end{figure}

Thus, the game $G_{noNE3}$ has no pure Nash equilibrium.
\end{proof}


\subsection{Nash Equilibrium Existence - Computational Complexity }
\label{sec:NEComplexity}
 In this section we study the computational complexity of the following problem: {\em Given a game instance, does it have a Nash Equilibrium profile?}. Clearly, this problem is only relevant for games in classes in which a NE is not guaranteed to exist. For games with fixed processing times, the problem was studied in~\cite{RST21}, where it was shown to be NP-complete for games with fixed processing times and related machines.  For identical machines and fixed processing times, a $NE$ is guaranteed to exist. We show that when jobs have negative deterioration, the problem is NP-complete even on identical machines. 

\begin{theorem}
    \label{th:neISNP}
    Given an instance of a game $G\in G_P^-$, it is NP-complete to decide whether the game has a $NE$.
\end{theorem}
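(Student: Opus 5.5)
Membership in NP is immediate. A profile $\sigma\in M^N$ has polynomial size. Given $\sigma$, we compute all start and completion times by simulating each machine along its priority list. For each job $i$ and each machine $j'$ we also compute $C_i(j',\sigma_{-i})$, which takes $O(nm\cdot n)$ arithmetic operations on rationals of polynomial size, and then check the NE condition.

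For hardness I would reduce from 3-SAT, or more conveniently from a problem whose yes-instances are certified by a choice. The plan is to embed copies of the no-NE gadget $G_{noNE3}$ from Theorem~\ref{thm:dwin2M} and have the rest of the instance decide whether each gadget is \emph{activated} or \emph{neutralized}. The template follows the reduction of~\cite{RST21} for related machines. There, the instability of a small gadget is switched off iff some "blocking" job with high priority occupies a gadget machine. That job lengthens the prefix, so one gadget job gets a stable best option.

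Concretely, I would reduce from 3-SAT with variables $x_1,\dots,x_q$ and clauses $c_1,\dots,c_r$ on $m$ identical machines. The machines are: two per variable (a "true" machine and a "false" machine), and for each clause a pair of gadget machines carrying a scaled copy of $G_{noNE3}$.

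Variable jobs have large $b$ and negligible rate $a$, so they are effectively delay-averse with fixed lengths. Each variable job has top priority on its variable's two machines and on the gadget machines of clauses containing the corresponding literal, and bottom priority elsewhere. Its length is tuned so that in any NE it sits on exactly one of its two variable machines, which encodes a truth assignment. A literal job that is "free" can then move onto a clause gadget machine, where it has top priority, and precede the three gadget jobs.

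The key property to engineer is the following. If some job is placed first on $M_1$ (or $M_2$) of the gadget with processing length at least some $L$, then the gadget jobs' effective lengths collapse to their thresholds $\tau$. This holds because $b_i-a_i t\le\tau_i$ once $t\ge (b_i-\tau_i)/a_i$. With lengths at threshold the jobs are effectively fixed-length jobs on identical machines, so a NE exists (by~\cite{RST21}, or directly by Theorem~\ref{th:identicalSpeedsIncreasingNE}). This non-delay-averse feature is exactly what we exploit. A late start destroys the cyclic incentive that created the instability in $G_{noNE3}$.

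I would verify two directions. First, from a satisfying assignment we build a profile: each variable goes to its chosen machine, and each clause gadget is neutralized by a satisfied literal's job. We then check that no job deviates. This is routine, using large separations of scale: the gadget scales are tiny compared to the variable jobs, and dummy fixed-length jobs (with $a=0$ allowed, or tiny $a$) pin the loads.

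Second, from any NE we extract a satisfying assignment. Every gadget must be neutralized, since otherwise the eight-profile cycle of Theorem~\ref{thm:dwin2M} persists. That restriction is valid only if outside jobs on those two machines all come after the gadget jobs, or are absent, and the gadget jobs have no profitable exit to other machines. We must also show that neutralizing jobs encode a consistent assignment.

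The main obstacle is this converse direction. We must control all deviations of gadget jobs to non-gadget machines, and prevent a single literal job from neutralizing several clauses or both polarities. I would handle this by making gadget jobs have lowest priority on all non-gadget machines, which are heavily loaded by dummy jobs so that exiting is never profitable. I would also give each literal several copies (one per occurrence), with copy lengths chosen so that a copy leaves its variable machine only if the whole literal is "true".
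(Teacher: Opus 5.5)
Your NP-membership argument is fine and is the same as the paper's. The hardness part, however, is only a plan. The direction you yourself flag as the main obstacle (an equilibrium yields a satisfying assignment) actually fails for the construction you describe.

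A delay-averse job that is first on a speed-$1$ machine completes at $p_i(0)=b_i$. This is the smallest cost it can ever obtain, so it never has a strictly improving deviation. You give each variable job, and each literal copy, top priority on the gadget machines of the relevant clauses. Hence a variable job, or a copy of a \emph{false} literal, can sit first on a gadget machine as a perfectly stable blocker. No tuning of its length can make it strictly prefer its variable machine, because it is first there too and completes at $b_i$ there as well. Nothing in your design then rules out equilibria in which the gadgets are switched off independently of any truth assignment.

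To make the reduction sound, the blocking slot must be \emph{strictly} worse than the home slot whenever the literal is false. One option is a short dummy that always precedes the literal copy on the gadget machine, short enough to preserve the instability of $G_{noNE3}$. You would also need an explicit choice of all lengths and priority lists and a check of every deviation. None of this is in the proposal, so hardness is not established.

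Your neutralization step is also justified incorrectly. A long job placed first on one gadget machine delays only the gadget jobs on \emph{that} machine. On the other gadget machine they still start early, with rates $1.05, 1.1, 1.2$ exceeding the speed $1$. The residual game is therefore neither a fixed-length game nor a game in $\CidenticalMachines^{DA}$, so neither \cite{RST21} nor Theorem~\ref{th:identicalSpeedsIncreasingNE} applies. The conclusion can be checked directly: with a blocker longer than $5$ gadget units, all three jobs stack on the free machine and are stable within the gadget. Your argument, however, does not show it.

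The paper avoids both difficulties with a different switch. It reduces from 3DM-3, uses a single copy of $G_{noNE3}$ on $M_1,M_2$, and adds a machine $M_3$ with $\pi_3=(f,X,Y,Z,D,u,v,w,U)$, where $f$ has fixed length $\Gamma=(b_u-\tau)/a_u$.
\begin{itemize}
\item Because $u$ is not delay-averse, it \emph{strictly} prefers to start exactly at $\Gamma$ behind $f$. Its cost there is $\Gamma+\tau\approx 4.771$, below the $4.8$ that is the best it can get on $M_1$ or $M_2$.
\item So $u$ leaves the gadget, and $v,w$ alone admit a NE.
\item If no perfect matching exists, some element job is forced onto $M_3$ ahead of $u$. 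This kills the strict preference and revives $G_{noNE3}$.
\end{itemize}
The switch is thus the strict attraction of a gadget job to a delayed start, not the merely weak willingness of an outside job to block a gadget machine.
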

\begin{proof}
   Given a game and a profile $\sigma$, verifying whether $\sigma$ is a $NE$ can be done by checking for every job whether its current assignment is also its best-response. Therefore, the problem is in NP.
   
    The hardness proof is by a reduction from $3$-bounded $3$-dimensional matching ($3$DM-$3$).  The input to the $3$DM-$3$ problem is a set of triplets $T\subseteq X \times Y \times Z$, where $|T|\ge n$, and $|X|=|Y|=|Z|$. The number of occurrences of every element of $X\cup Y \cup Z$ in $T$ is at most $3$. The goal is to decide whether $T$ has a 3-dimensional matching of size $n$, i.e., there exists a subset $T'\subseteq T$ such that $|T|=n$ and every element in $X\cup Y \cup Z$ appears exactly once in $T'$. 
    $3$DM-$3$ is known to be NP-hard \cite{3DM}.
    Given an instance $T$ of 3DM-3 matching, we construct the following scheduling game $G_T$. Let $\Gamma=\frac{5-0.2}{1.05}\approx4.571$, $\tau=0.2$.
    The set of jobs consists of:
    
    \begin{enumerate}
        \item The 3 jobs $\{u,v,w\}$ from the game $G_{noNE3}$. Recall that their processing time functions are: $p_u(t)=max\{5-1.05\cdot t,0.2\}$, $p_v(t)=max\{4-1.1\cdot t,0.2\}$, $p_w(t)=max\{3-1.2\cdot t,0.2\}$.
        \item A single dummy job $f$, with processing time function $p_f(t)=\Gamma$ ($a_f=0$).
        \item A set $D$ of $|T|-n$ dummy jobs with processing time function $p_i(t)=\Gamma +1$
        \item A set $U$ of $|T|+3$ dummy jobs with processing time function  $p_i(t)=\Gamma+2$.
        \item $3n$ element-jobs with processing time $p_i(t)=1$, one for each element in $X \cup Y \cup Z$
    \end{enumerate}

     Note that all jobs except $u,v,w$ have a fixed processing time, and therefore prefer to be processed as early as possible. There are $m=|T|+3$ identical machines, $M_1,M_2,\ldots,M_{|T|+3}$, with speed $s=1$.

 We first highlight some properties of the game $G_{noNE3}$.
   By Observation~\ref{ob:optimalStartTime}, and since all three jobs have negative deterioration rate more than $1$, the optimal starting times of the jobs are given by $min_t(i)=\frac{b_i-0.2}{a_i}$.
    Hence, $min_t(u)=\frac{5-0.2}{1.05}\approx4.571$,  $min_t(v)=\frac{4-0.2}{1.1}\approx3.45$, and $min_t(w)=\frac{3-0.2}{1.2}\approx2.3$.
    Next, note that if job $u$ is missing from $G_{noNE3}$ then there exists a $NE$ of $\{v,w\}$ on $M_1$ and $M_2$, when each of these jobs is assigned alone on the machine that prioritize the other job. Similar to~\cite{RST21}, the idea in our reduction is that if a $3$DM-$3$-matching exists, then the element jobs will be assigned on triplet-machines and job $u$ would prefer $M_3$. Thus, the game $G_{noNE3}$ will not be induced on machines $M_1$ and $M_2$ and a $NE$ exists. On the other hand, if a $3$DM-$3$-matching does not exist, then some element job will be assigned on $M_3$, making this machine not attractive for $v,u,w$. These three jobs will go to $M_1$ and $M_2$ and will bring to life the game $G_{noNE3}$.
  
    In order to achieve this particular behavior, we design the priority lists as follows. Whenever a priority list includes a set, the jobs in the set appear in arbitrary order. For the first machine, $\pi_1=(v,u,w,f,U,X,Y,Z,D)$. For the second machine, $\pi_2 =(w,u,v,f,U,X,Y,Z,D)$. Note that the prefix of the priority lists of the first two machines is the same as in $G_{noNE3}$.
    For the third machine, let $\pi_3=(f,X,Y,Z,D,u,v,w,U)$. 
    
    The remaining $|T|$ machines are {\em triplet-machines}. For every triplet $t=(x_i,y_j,z_k)\in T$, the priority list of the triplet-machine corresponding to $t$ is $(D,x_i.y_j,z_k,f,U,X\setminus\{x_i\},Y\setminus\{y_j\},Z\setminus\{z_k\},u,v,w)$. 
    
    Since $f$ is first in $\pi_3$, in any $NE$, the dummy job $f$ is assigned as the first job on $M_3$. In addition, there are $|D|=|T|-n$ triplet-machines on which the first is from $D$, since the jobs in $D$ have the highest priority on the triplet-machines.

    We show that $G_T$ has a $NE$ if and only if $T$ has a perfect matching. Figure~\ref{fig:NEReduction} illustrates the proof for $n=2$ and $|T|=3$.

    \begin{claim}
        \label{3dmToNE}
        If a 3D-matching of size $n$ exists in $T$, then $G_T$ has a $NE$.
    \end{claim}
    \begin{proof}
        Let $T'\subseteq T$ be a matching of size $n$. Assign the jobs of $X \cup Y \cup Z$ on the triplet-machines corresponding to $T'$, and the jobs of $D$ on the remaining triplet-machines. Assign $f$ and $u$ on $M_3$, and assign one job from $U$ on every machine.
        Jobs $v$ and $w$ are assigned on $M_1$ and $M_2$: job $w$ is on $M_1$ and job $v$ is on $M_2$. Note that the completion time of $w$ is $3$, the completion time of $v$ is $4$, and the completion time of $u$ is $\Gamma +\tau \ge 4.571+0.2$. The resulting schedule is indeed a $NE$, since all the jobs from $X \cup Y \cup Z$ complete at time at most $3$ and hence have no incentive to deviate to a different machine, and job $u$ starts in its optimal starting time $\Gamma$ on $M_3$.
    \end{proof}

    Now we show that if $T$ does not have a perfect matching, then the no-$NE$ game, $G_{noNE3}$, comes to life, and as a result $G_T$ has no $NE$.

    \begin{claim}
        \label{no3dmNoNE}
        If a 3D-matching of size $n$ does not exist in $T$, then $G_T$ has no $NE$.
    \end{claim}
    \begin{proof}
    First, it is easy to see that in every $NE$, if exists, job $f$ is first on $M_3$, the $D$-jobs are first on $|T|-n$ triplet machines, and there is exactly one $U$-job on every machine. 
        Since a matching does not exist, some element jobs from $X \cup Y \cup Z$ are not assigned on their triplet-machines. Since the $U$-jobs have higher priority than the element jobs on $M_1$ and $M_2$, at least one of these element jobs prefer $M_3$, where it is processed before a $U$-job. Now, no job from $\{u,w,v\}$ prefers $M_3$, and the three of them will be assigned on $M_1$ and $M_2$, in which they have higher priority than the $U$-job. As a result,  $G_{noNE3}$ is triggered.
    \end{proof}

    The proof of Theorem \ref{th:neISNP} follows from Claims \ref{3dmToNE} and \ref{no3dmNoNE}.    
\end{proof}

\begin{figure}[H]
\centering
\includegraphics[width=0.85\textwidth]{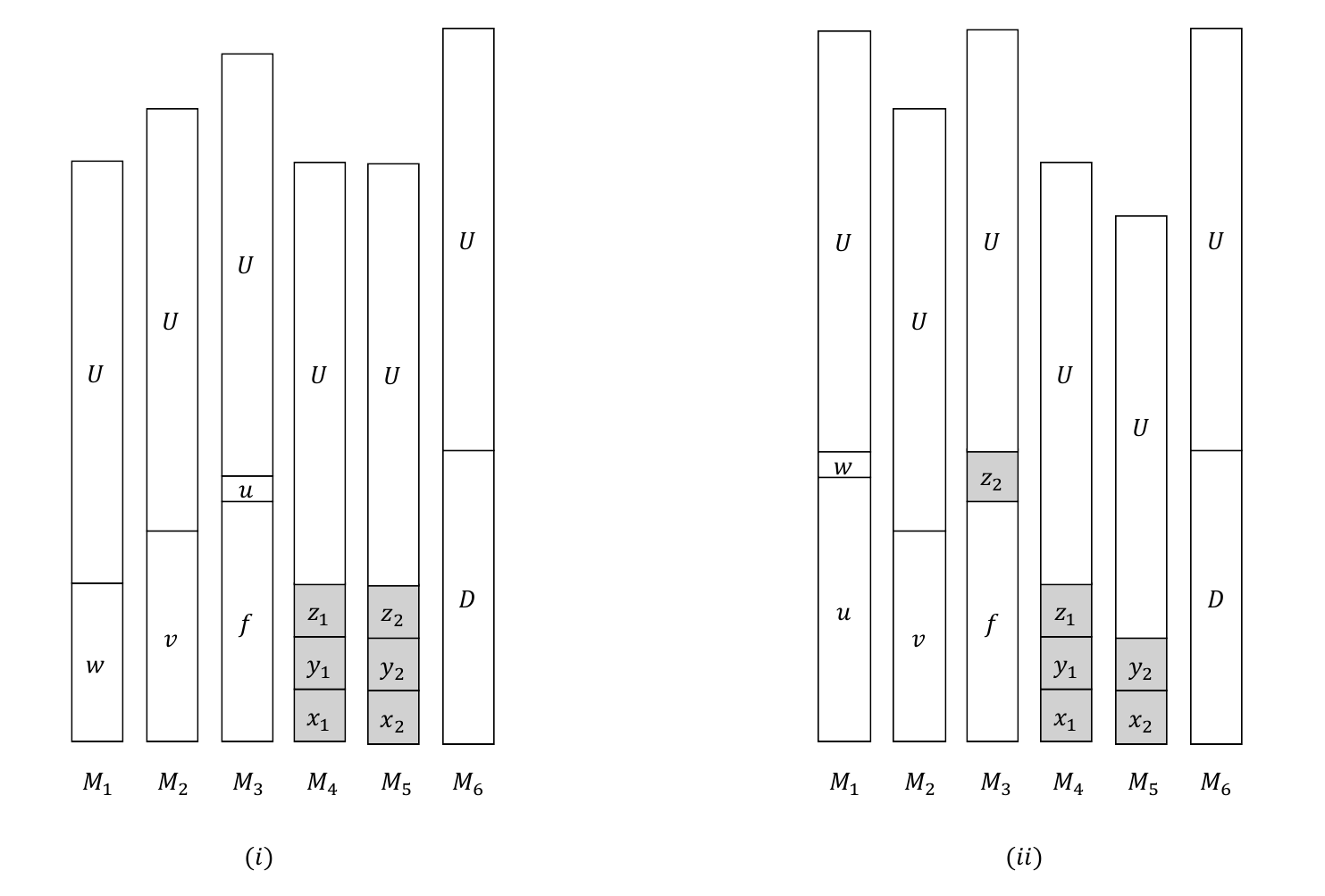}
\caption{(i) Let $T=\{(x_1,y_1,z_1),(x_2,y_2,z_2),(x_1,y_2,z_2)\}$. A matching of size $2$ exists, and job $u$ is assigned on $M_3$, resulting in a $NE$. (b) Let  $T=\{(x_1,y_1,z_1),(x_2,y_2,z_1),(x_1,y_2,z_2)\}$. A matching of size 2 does not exist, job $u$ is assigned on one of the first two machines, and $G_{noNE3}$ is induced. } 
\label{fig:NEReduction}
\end{figure}


\section{Equilibrium Inefficiency}
\label{sec:EqInefficiency}

A common measure for evaluating the quality of a schedule is the makespan, 
given by $C_{max}(\sigma) = \max_{i \in N} C_i(\sigma)$.
In this section we analyze the equilibrium inefficiency with respect to the makespan objective, for each of the classes for which a $NE$ is guaranteed to exist.

\subsection{Games with Symmetric Jobs}
\label{sec:ineffSym}
We begin with $\CidenticalJobs$, the class of instances with symmetric jobs with positive or negative deterioration rates. For this class we show that allowing arbitrary priority lists does not hurt the social cost with respect to the min-makespan objective, even on machines with different speeds. 

\begin{theorem}
\label{th:poaSym}
$PoA(\CidenticalJobs) = PoS(\CidenticalJobs)= 1$.
\end{theorem}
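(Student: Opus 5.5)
Since jobs are identical, a schedule is determined, up to relabeling, by the number of jobs $n_j$ assigned to each machine $j$; which jobs go where and the priority lists are irrelevant. On machine $j$ with $k$ jobs, the completion time of the $k$-th job is $f_j(k)$, defined by $f_j(0)=0$ and $f_j(k)=f_j(k-1)+p(f_j(k-1))/s_j$. The makespan of a profile is therefore $\max_{j:n_j>0} f_j(n_j)$. The plan is to show that every $NE$ attains the optimum of this load-vector problem, so $PoA=1$. Since Theorem~\ref{thm:unitjobs} gives $\Eq(G)\neq\emptyset$, it then also follows that $PoS=1$.

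First I will show that $f_j$ is non-decreasing in $k$. Each increment $p(\cdot)/s_j$ is non-negative, because $p\ge 0$ (in $\CidenticalJobs^-$ we have $p\ge\tau>0$). Next, let $\sigma$ be a $NE$ with load vector $(n_j)$, and let $j^*$ attain the makespan $C=f_{j^*}(n_{j^*})$. The last job on $j^*$ has completion time $C$. If it deviates to any machine $j'$, it is placed somewhere in $\pi_{j'}$. Since all jobs are identical, its completion time there is $f_{j'}(r)$ for its position $r\le n_{j'}+1$.

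This is the delicate point: a higher-priority insertion could give a smaller $r$, and the bound I need is against $f_{j'}(n_{j'}+1)$. So I will argue through a different job rather than the deviator's slot. Some job ends up in position $n_{j'}+1$ after the deviation, and that job's completion is $f_{j'}(n_{j'}+1)$. Instead, I use the fact that the symmetric game has the same costs regardless of labels. The NE condition for the deviator yields $C\le f_{j'}(r)\le f_{j'}(n_{j'}+1)$ by monotonicity of $f_{j'}$, since $r\le n_{j'}+1$. So the NE condition directly gives $C\le f_{j'}(n_{j'}+1)$ for every $j'\ne j^*$.

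Now take an optimal profile with load vector $(n^*_j)$. Since $\sum_j n^*_j=\sum_j n_j=n$, either $n^*_{j^*}\ge n_{j^*}$, or some $j'$ has $n^*_{j'}\ge n_{j'}+1$. In the first case, $OPT\ge f_{j^*}(n^*_{j^*})\ge f_{j^*}(n_{j^*})=C$. In the second case, $OPT\ge f_{j'}(n^*_{j'})\ge f_{j'}(n_{j'}+1)\ge C$. Hence $C\le OPT$, which gives $PoA=1$, and together with Theorem~\ref{thm:unitjobs} also $PoS=1$.

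The main obstacle I anticipate is the non-delay-averse regime in $\CidenticalJobs^-$, where $a>s_j$. There, "cost of a job" in a profile might in principle allow waiting. Under the model, however, start times are forced by the predecessors, so the costs are exactly $f_j(k)$, and only the monotonicity of $f_j$ in $k$ matters. That monotonicity holds because each job contributes a positive duration. I will verify explicitly that no step uses monotonicity of $c(t,s)$ in $t$.
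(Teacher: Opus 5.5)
Your proof is correct and is essentially the paper's argument. Both rest on two facts: the per-machine completion time is non-decreasing in the number of jobs, and the $NE$ condition for the last job on the makespan machine, applied to a machine that holds more jobs in an optimal schedule, forces the $NE$ makespan to be at most $OPT$. The paper applies this to the greedy equilibrium; you apply it directly to an arbitrary $NE$, which already yields $PoA=1$ and makes the paper's second step (every equilibrium shares the greedy schedule's sorted cost vector) unnecessary. Your ``delicate point'' is a non-issue, since $C\le f_{j'}(r)\le f_{j'}(n_{j'}+1)$ needs only $r\le n_{j'}+1$, so the aside about arguing through a different job can be dropped.
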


\begin{proof}
We extend the proof in~\cite{RST21} for jobs with fixed processing times. Let $\sigma$ be a schedule of symmetric jobs for which the processing time function is uniform for all jobs. The quality of $\sigma$ is characterized by the vector $(n_1(\sigma), n_2(\sigma),\ldots,n_m(\sigma))$ specifying the number of jobs on each machine.
For the class $\CidenticalJobs^+$, where all jobs have the same non-decreasing processing time function, $p(t)=b+a\cdot t$, it can be shown that the completion time of last job on a machine $j$ with $n_j$ jobs is $\frac{b((1+a)^{n_j}-1)}{a}/s_j$, and  thus the makespan of $\sigma$ is given by $\max_j \frac{b((1+a)^{n_j(\sigma)}-1)}{a}/s_j$.
For the class $\CidenticalJobs^-$, where the processing time function is $p(t)=max(\tau,b-a\cdot t)$, the completion time of the last job on a machine is $\frac{b(1-(1-a)^{m_j})}{a}/s_j + (n_j-m_j)\tau/s_j$, where the first $m_j$ have a processing time which is greater than the threshold. The makespan is $\max_j \frac{b(1-(1-a)^{m_j(\sigma)})}{a}/s_j + (n_j-m_j)\tau/s_j $.
In both cases, the completion time of the last job on machine $j$ is a non-decreasing function of $n_j$, the number of assigned jobs on machine $j$. Hence, the completion time of a machine increases with the number of assigned jobs. We denote by $c_j(n_j)$ the completion time function of machine $j$, and use it for both $\CidenticalJobs^+$ and $\CidenticalJobs^-$. The makespan is $\max_j c_j(n_j(\sigma)).$

Theorem \ref{thm:unitjobs} shows that assigning the jobs greedily, where on each step a job is added on a machine on which the cost of the next job is minimized, yields a $NE$. Let $\sigma^{\star}$ denote the resulting schedule, and let $C_1(\sigma^{\star}) \le C_2(\sigma^{\star}) \le \ldots \le C_n(\sigma^{\star})$ be the sorted vector of jobs' completion times in $\sigma^{\star}$. The proof proceeds by showing that this vector corresponds to schedules that minimize the makespan. Also, we show that every $NE$ schedule induces the same cost vector as $\sigma^{\star}$.

Assume that there exists a schedule $\sigma'$ such that $$\max_j c_j(n_j(\sigma')) < \max_j c_j(n_j(\sigma^{\star})).$$ Let $M_1=\mbox{argmax}_j c_j(n_j(\sigma^{\star})).$ It must be that $n_{M_1}(\sigma') < n_{M_1}(\sigma^{\star})$. Since $\sum_j n_j(\sigma')=\sum_j n_j(\sigma^{\star})=n$, there must be a machine $M_2$ such that $n_{M_2}(\sigma^{\star}) < n_{M_2}(\sigma')$. 
Thus, the last job on machine $M_1$ in $\sigma^{\star}$ can benefit from migrating to machine $M_2$, as its cost will be at most $$c_2(n_{M_2}(\sigma^{\star})+1)) \le c_2(n_{M_2}(\sigma')) \le \max_j c_j(n_j(\sigma')) < \max_j c_j(n_j(\sigma^{\star})).$$ This contradicts the assumption that $\sigma^{\star}$ is a $NE$.

Now, let $\sigma$ be a $NE$ schedule with sorted cost vector and let $C_1(\sigma) \le C_2(\sigma) \le \ldots \le C_n(\sigma)$, and assume by contradiction that it has a different cost vector than $\sigma^*$. Let $i$ be the minimal index such that $C_i(\sigma^{\star}) \neq C_i(\sigma)$. Since $\sigma$ and $\sigma^{\star}$ agree on the costs of the first $i-1$ jobs, and since $\sigma^{\star}$ assigns the $i$-th job on a minimal-cost machine, it holds that $C_i(\sigma^{\star}) < C_i(\sigma)$. We get a contradiction to the stability of $\sigma$ - since some job can reduce its cost to $C_i(\sigma^{\star})$. 
\end{proof}

\subsection{Games with Positive Deterioration Rates}

In traditional scheduling games with fixed processing times, for which a $NE$ existence is guaranteed, the $PoA$ is proven to be a polynomial function of the number of jobs and machines~\cite{RST21}. In the contrary, if the jobs' processing time function is $p_i(t)=a_i\cdot t$ (that is, for all $i$, $b_i=0$) the PoA can grow exponentially with the number of jobs or machines (\cite{LiLiuLiFirstGameTheory}). We show that the exponential inefficiency persists when jobs have arbitrary basic processing times. 

As a warm-up, we show that the $PoA$ is not bounded by a constant, for any $n>m$ and machines having a global priority list. This is in contrast to fixed-length jobs, where the $PoA$ is known to be $\Theta(\log m)$~\cite{Aspnes}. 
\begin{theorem}
\label{thm:PoAnotr}
For any $m>1$, given $r>1.5$, there exists a game $G_{m,r}$ on $m$ machines and $m+1$ jobs, such that $PoA(G_{m,r})= r$.
\end{theorem}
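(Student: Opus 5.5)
We prove the statement with an explicit construction on $m$ machines and $m+1$ jobs with positive deterioration under a global priority list. The analysis uses Theorem~\ref{thm:ls_on_global} and Observation~\ref{obs:neIsLs}: the NE profiles are exactly the possible outcomes of Algorithm~\ref{alg:LS} with jobs taken in priority order, ties broken arbitrarily.

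The construction has three kinds of jobs. The first $m-1$ jobs in $\pi$ are ``short'' jobs with $b=1$, $a=0$. Next in $\pi$ comes one ``blocker'' job $x$ with $b_x=1$, $a_x=0$. Last comes a ``deteriorating'' job $y$ with small $b_y=\varepsilon$ and large rate $a_y=A$.

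In the bad equilibrium, $x$ ties among all machines at completion time $1$ and breaks the tie by joining a machine already holding a short job. Then $y$ finds every machine busy until time $1$. This gives makespan $\max\{2,\; 1+\varepsilon+A\}$.

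In the optimum, $y$ starts at time $0$ alone on a machine and costs only $\varepsilon$. The $m$ unit jobs share the remaining $m-1$ machines, one machine taking two of them, so the makespan is $2$.

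This first attempt has two problems. The NE ratio is $(1+A)/2$, which only exceeds $1$ once $A>1$. Worse, $x$'s tie-breaking is not really forced: under LS, $x$ would go to the empty machine at time $0$ and finish at $1$, not at $2$. So the construction has to be adjusted so that a genuine NE puts $y$ late while the optimum puts it early.

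The adjusted construction makes every machine busy in the NE. The first $m$ jobs in $\pi$ are unit jobs ($b=1$, $a=0$); LS places one on each machine, which is the unique NE placement of these jobs, since any job sharing a machine would move to an empty one. The last job $y$, with $b_y=\varepsilon$ and rate $A$, then starts at time $1$ and finishes at $1+\varepsilon+A$. In the optimum, $y$ runs first and alone, finishing at $\varepsilon$, and the $m$ unit jobs go on $m-1$ machines with makespan $2$. Since every NE is an LS outcome, every NE has makespan $1+\varepsilon+A$. Hence
\[
PoA(G_{m,r}) = \frac{1+\varepsilon+A}{2}.
\]

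To hit a prescribed $r$, we fix $\varepsilon>0$ small and set $A=2r-1-\varepsilon$. This is nonnegative because $r>1.5$ forces $2r-1>2$. The bound $r>1.5$ also guarantees $1+\varepsilon+A\ge 2$, so $y$'s completion time really determines the NE makespan and dominates the unit jobs' completion times.

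Two checks remain. First, OPT is exactly $2$: at least one machine carries two jobs, and any machine holding $y$ after a unit job is worse, so no schedule beats $2$. Second, the NE is unique up to symmetry, so the PoA equals the ratio and is not merely bounded by it.

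The main obstacle is the NE uniqueness argument, in particular making sure the unit jobs cannot tie-break onto a shared machine. Because they have $a=0$, any such job strictly prefers an idle machine, so every NE places them one per machine.
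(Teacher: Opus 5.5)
Your final construction is correct and is essentially the paper's: $m$ fixed unit jobs first in a global priority list, then one deteriorating job that every NE (an LS outcome) forces to start at time $1$, while the optimum pairs two unit jobs and isolates the deteriorating job, giving makespans $2r$ versus $2$. The only difference is the parameter choice for the last job: the paper uses $b=2$ and $a=2r-3$, whereas your $b_y=\varepsilon$ and $A=2r-1-\varepsilon$ would work for every $r\ge 1$, so your version does not actually need $r>1.5$; in a write-up, drop the abandoned first attempt.
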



\begin{proof}
Given $m$ and $r>1.5$, we describe a game $G_{m,r}$ with $m+1$ jobs and $m$ identical machines, such that $PoA(G_{m,r})= r$. Let $a=2r-3$. Let $N=\{j_1,\ldots, j_{m+1}\}$, and $\pi=(j_1,\ldots,j_{m+1})$. For $1 \le i \le m$, let $b_{j_i}=1$, and $a_{j_i}=0$. For job $j_{m+1}$, let $b_{j_{m+1}}=2$, $a_{j_{m+1}}=a$. In any $NE$ profile of $G_{m,r}$, each of the first $m$ jobs is first on some machine, and job $j_{m+1}$ is second on one of the machines. The makespan of any $NE$ profile is therefore $1+2+a = 3+a =2r$. 
On the other hand, in the social optimum profile, some machine processes two jobs among the first $m$ jobs, and job $j_{m+1}$ is assigned alone on one machine. The makespan of such a schedule is $2$. The social optimum profile is not a $NE$, as the second job in the pair can benefit from deviating to the  machine that processes $j_{m+1}$.
We conclude that $PoA(G_{m,r}) \ge \frac{2r}{2} = r.$  
\end{proof}

\subsubsection{Uniform Deterioration Rate, Global Priority List}

Let $\CidenticalGlobalUniformRate$ be the class of games played on identical machines, with a global priority list, and jobs with a uniform positive deterioration rate. We show that, unfortunately,  the $PoA$ of a game in this class may grow exponentially with $n/m$. This extremely high PoA may be achieved due to high {\em cascading delays}. As in fixed-length scheduling, the starting time of a job is influenced by the durations of all preceding jobs on the same machine. In models with positive deterioration, this effect is not just additive, but multiplicative, leading to exponential growth in the total workload.  As a warm up, we provide a proof for two identical machines with uniform unit deterioration rate (for all $i$, $a_i=1$). This special case illustrates the core logic used in the general analysis for arbitrary $m$ and $a$ (Theorem~\ref{th:unboundedPoAGeneral}). 
\begin{theorem}
 \label{th:unboundPoA2}
    Let $a=1$, and $m=2$, then $PoA(\CidenticalGlobalUniformRate)=\Omega(2^\frac{n}{2})$.
\end{theorem}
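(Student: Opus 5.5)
The plan is to build an explicit family of instances in $\CidenticalGlobalUniformRate$ with $m=2$ and $a_i=1$ for all $i$, and to exhibit one bad $NE$ next to a cheap feasible schedule. The $NE$ will come from Algorithm~\ref{alg:LS}: by Theorem~\ref{thm:ls_on_global}, its output is a $NE$ for every game with a global priority list, so no stability check is needed. The source of the blow-up is that with $a=1$ a job of basic length $b$ that starts at time $t$ completes at $2t+b$. A machine whose load is already positive therefore roughly doubles its load with every additional job, even when the job's basic processing time is $0$.

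\emph{The instance.} Take $n$ jobs with global priority list $\pi=(1,2,\ldots,n)$ and two identical unit-speed machines. Jobs $1$ and $2$ have $b=1$, and jobs $3,\ldots,n$ have $b=0$. All jobs have rate $a=1$, so $p_i(t)=b_i+t$.

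\emph{The bad equilibrium.} Run Algorithm~\ref{alg:LS} on this instance.
\begin{itemize}
\item Job $1$ goes to $M_1$ and completes at time $1$.
\item Job $2$ prefers the empty $M_2$ and also completes at time $1$.
\item From then on both loads are positive. A zero-basic job placed on a machine with load $\ell$ completes at $2\ell$, so LS always puts the next job on the currently less loaded machine, breaking ties arbitrarily.
\item By induction, after jobs $3,\ldots,2k+2$ both loads equal $2^k$.
\end{itemize}
Hence the LS outcome, which is a $NE$ by Theorem~\ref{thm:ls_on_global}, has makespan at least $2^{\lfloor (n-2)/2\rfloor}$.

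\emph{The cheap schedule.} Place jobs $1$ and $2$ together on $M_1$. Job $1$ runs in $[0,1]$ and job $2$ starts at $1$ and completes at $1+1+1=3$. Place all zero-basic jobs on $M_2$. Every job there starts at time $0$ with processing time $0$ and completes at $0$, regardless of its position in $\pi$. This schedule is feasible, so $OPT\le 3$.

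\emph{Conclusion.} The ratio is at least $2^{\lfloor (n-2)/2\rfloor}/3=\Omega(2^{n/2})$, which proves the theorem.

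The only delicate point is the ratio argument, in particular checking that the zero-length jobs genuinely cost nothing in the optimum. I expect this to be routine; it can be verified directly from the recursive definition of $S_i$ and $C_i$. I also check that ties in LS do not spoil the doubling: equal loads stay equal after each pair of assignments, whichever tied machine is chosen.
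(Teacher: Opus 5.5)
Your proof is correct and is essentially the paper's construction. In both, LS on the global list separates the two heavy, highest-priority jobs, so every later light job starts behind a heavy one and each pair of light jobs doubles both loads, while a feasible schedule pairs the heavy jobs on one machine and runs the light jobs early on the other. The only difference is that you give the light jobs basic time $0$, which makes $OPT\le 3$ immediate, whereas the paper uses light jobs with $b=1$, heavy jobs of basic time $w-1=2^{2k}-1$, and a third heavy job $y_2$ at the end, so that all basic times are positive and the optimum is exactly $3w-3$ by the pigeonhole argument on the three $Y$-jobs.
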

\begin{proof}
    Given $n\ge5$ we describe a game $G$ played on two machines, $M=(M_1, M_2)$, and $n$ jobs with $a_i=1$, such that $PoA(G) = \Omega(2^\frac{n}{2})$. We assume that $n$ is odd (if $n$ is even, then a dummy job of arbitrarily small length can be added as first in $\pi$). 
    Let $n=2k+3$, and let $w=2^{2k}$.
    Consider a game with $N=X \cup Y$, where $X=(x_0,..,x_{2k-1})$ and $Y=(y_0,y_1,y_2)$. The global priority list is $\pi=(y_0,y_1,x_0 \ldots ,x_{2k-1},y_2)$. The processing time function of each job in $X$ is $p_{x_i}(t)=1+t$, and the  processing time function of each job in $Y$ is $p_{y_i}(t)=w-1+t=2^{2k}-1 +t$. Note that $|X|=2k$.
    
     Let $\sigma^*$ be the schedule in which $y_0$ and $y_1$ are assigned on $M_1$, and the remaining 
     jobs, 
     that is, all the jobs in $X$ as well as $y_2$, are assigned on $M_2$. We show that $\sigma^*$ is  optimal with respect to the minimum makespan objective. Consider $M_1$. Job $y_0$ finishes at time $w-1$, and the processing time of $y_1$ is $w-1+w-1=2w-2$, therefore, $M_1$ completes at time $3w-3$. Consider now $M_2$. The processing time of $x_0$ is $1$, and for all $i > 1$, the processing time of $x_i$ is $p_{x_i}=1+\sum_{\ell=0}^{i-1}p_{x_\ell}$. We get that $p_{x_i}=1 + \sum_{\ell=0}^{i-1}2^{\ell}=2^i$, and the completion time of $x_i$ is $\sum_{\ell=0}^{i}2^{\ell}=2^{i+1}-1$. In particular, the completion time of $x_{2k-1}$, which is the last $X$-job, is $2^{2k}-1=w-1$.  Since $p_{y_2}(w-1)=w-1+w-1$, we conclude that $M_2$ completes, as $M_1$, at time $3w-3$, thus $C_{max}(\sigma^*)= 3w-3$.
     $\sigma^*$ is optimal, since in any assignment, some machine processes at least two $Y$-jobs, whose total length when processed one after the other is at least $3w-3$.

    We turn to describe a possible $NE$ profile of $G$. Let $\sigma$ be a profile obtained by Algorithm \ref{alg:LS}, where ties are broken in favor of $M_1$.  Specifically, in $\sigma$, $y_0$ chooses $M_1$, $y_1$ chooses $M_2$, then the $X$-jobs alternate between the two machines, such that there are exactly $\frac{|X|}2 = k$ $X$-jobs on each machine, and job $y_2$ is last on $M_1$, and determines the makespan of $\sigma$. As an outcome of Algorithm \ref{alg:LS}, $\sigma$ is a $NE$. 

The makespan of $\sigma$ is achieved on $M_1$. 
The first job on $M_1$ is $y_0$, whose processing time is $w-1$. 
The next $k$ jobs are the $X$-jobs $x_0, x_2, \ldots,x_{2(k-1)}$. Denote by $C_{2i}$ the completion time of job $x_{2i}$. Job $x_0$ begins at time $w-1$, thus, its processing time is $w$, and $C_0=2w-1$. In general, for all $1 \le i < k$, job $x_{2i}$ begins at time $C_{2(i-1)}$ and its processing time is $C_{2(i-1)}+1$, thus $C_{2i}=2C_{2(i-1)}+1$. It is easy to verify that job $x_{2i}$ begins at time $2^i\cdot w -1$, has processing time $2^i\cdot w$, and completion time $C_{2i}=2^{i+1}\cdot w -1$. In particular, $x_{2(k-1)}$, the last $X$-job on $M_1$, completes at time $C_{2(k-1)}=2^k \cdot w -1$. Job $y_2$ follows job $x_{2(k-1)}$ on $M_1$. It begins at time $C_{2(k-1)}$ and completes at time $2C_{2(k-1)}+w-1$. Therefore, $$C_{max}(\sigma)=C_{y_2}= 2 \cdot(2^{k}\cdot w -1)+w-1 = 2w\sqrt{w}+w-3.$$

We get that $PoA(G)$ is at least $$\frac{C_{max}(\sigma)}{C_{max}(\sigma^*)}=\frac{2w\sqrt{w}+w-3}{3w-3}=\Omega(\sqrt{w})=\Omega(\sqrt{2^{2k}})=\Omega(2^k)=\Omega(2^\frac{n}{2}).$$

\end{proof}

\begin{figure}[H]
\centering
\includegraphics[width=1\textwidth]{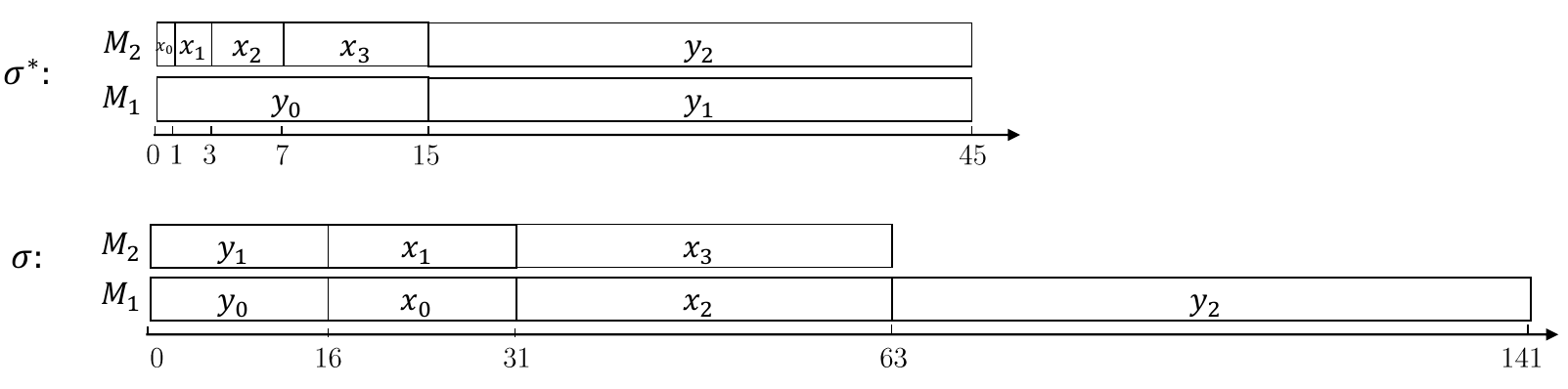}
\caption{A game corresponding to $n=7$, with $w=2^4=16$. The global priority list is $\pi = (y_0,y_1,x_0,x_1,x_2,x_3,y_2)$. $\sigma^*$ is an optimal schedule with makespan $3w-3=45$; $\sigma$ is a $NE$ schedule with makespan $2w\sqrt{w}+w-3=141$.}
\label{fig:unboundedPoAFig}
\end{figure}

Next, we generalize Theorem ~\ref{th:unboundPoA2}, to show that for any number of machines, $m$, number of jobs, $n$, and fixed deterioration rate $a$, there is game $G$ for which $PoA(G)=\Omega((1+a)^{\frac{n}{m}})$.
\begin{theorem}
    \label{th:unboundedPoAGeneral}
     Given $m$,$n$ and $a$, $PoA(\CidenticalGlobalUniformRate)=\Omega((1+a)^{\frac{n}{m}})$.
\end{theorem}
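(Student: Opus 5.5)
The plan is to generalize the construction of Theorem~\ref{th:unboundPoA2} to $m$ machines and a rate $a$. Processing times are $p(t)=b+a\cdot t$, so a job with basic time $b$ that starts at time $t$ completes at $(1+a)t+b$. A chain of jobs with small basic times therefore multiplies the start time by $(1+a)$ per job.

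\emph{Construction.} Fix $k$ and let $w=(1+a)^{mk}$. There are $2m-1$ ``big'' jobs $Y=(y_0,\ldots,y_{2m-2})$ and $mk$ ``small'' jobs $X=(x_0,\ldots,x_{mk-1})$. The global priority list is $\pi=(y_0,\ldots,y_{m-1},x_0,\ldots,x_{mk-1},y_m,\ldots,y_{2m-2})$. Each $X$-job has basic time $b_x=1$. Each $Y$-job has basic time $B$, chosen so that $mk$ consecutive $X$-jobs starting at time $0$ finish exactly when a single $Y$-job would. Such a chain finishes at $\sum_{\ell<mk}(1+a)^\ell=\frac{w-1}{a}$, so we set $B=\frac{w-1}{a}$. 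Then $n=mk+2m-1$, and $k=\Theta(n/m)$ whenever $n\ge 3m$; smaller $n$ is handled by padding or is trivial.

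\emph{Upper bound on $OPT$.} Place $y_0,\ldots,y_{m-1}$ two per machine on $m-1$ machines. Place $y_{2m-2}$ alone on the last machine, preceded by the whole $X$-chain, and leave out one $Y$-job so the count works. Concretely, $2(m-1)$ $Y$-jobs go in pairs on $m-1$ machines, and the last machine runs the $X$-chain followed by one $Y$-job. Each machine then finishes at about $(2+a)B$, so $OPT\le (2+a)B=O(w)$, mirroring the $3w-3$ computation in Theorem~\ref{th:unboundPoA2}. A matching lower bound on $OPT$ is not needed for an $\Omega$ statement.

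\emph{The NE.} Run Algorithm~\ref{alg:LS}; its output is a NE by Theorem~\ref{thm:ls_on_global}. First, $y_0,\ldots,y_{m-1}$ each go to a distinct machine, since the machines are identical, so every machine has load $B$. The $X$-jobs then round-robin: by symmetry all loads stay equal after each round of $m$ jobs. Each round multiplies the load $L$ to $(1+a)L+1$, so after $k$ rounds every machine has load at least $(1+a)^kB$. Each remaining $Y$-job then lands on a distinct machine and completes at about $(2+a)(1+a)^kB$. The ratio is therefore $\Omega((1+a)^k)=\Omega((1+a)^{(n-2m+1)/m})=\Omega((1+a)^{n/m})$ for fixed $m,a$, absorbing constants.

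\emph{Main obstacle.} The hardest step is making the round-robin argument and the $OPT$ schedule precise. One issue is tie-breaking, which is fixed by breaking ties toward the lowest-index machine as in Theorem~\ref{th:unboundPoA2}. The other is choosing the number of trailing $Y$-jobs so that $OPT$ stays $O(B)$. I would first try the simplest variant: $m$ leading $Y$-jobs, $mk$ $X$-jobs, and a single trailing $Y$-job. In the optimum, one machine takes two leading $Y$-jobs, $m-2$ machines take one each, and the last machine takes the $X$-chain followed by the trailing $Y$-job, giving $OPT\le(2+a)B$. In the NE the trailing job starts after time $(1+a)^kB$. This variant avoids the pairing bookkeeping entirely, and I expect it to give the bound directly.
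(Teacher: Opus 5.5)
Your proposal is correct and follows the paper's proof. The ``simplest variant'' in your last paragraph is exactly the paper's construction: $m$ leading $Y$-jobs with $b=\frac{w-1}{a}$, $mk$ unit $X$-jobs and one trailing $Y$-job; an optimum that pairs two leading $Y$-jobs and runs the $X$-chain before the trailing one; and an LS run with lowest-index tie-breaking as the NE. Your $2m-1$ $Y$-job version also works, since the extra trailing jobs are harmless, though unnecessary.

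One small slip does not affect the result. The trailing job in the NE completes at $(1+a)L_k+B\approx(1+a)^{k+1}B$, where $L_k=\frac{(1+a)^k w-1}{a}$, not at $(2+a)(1+a)^kB$. This changes only a constant factor, so the $\Omega((1+a)^k)$ ratio stands.
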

\begin{proof}
First, we provide a simple known close formula which will be used later in the proof. For any $a>0$: \begin{equation}
\label{sumExpA}
    \sum_{\ell=0}^{n-1}{}(1+a)^\ell=\frac{(1+a)^n-1}{a}
\end{equation}
Given $m>1$, $a$ and $n\ge2m+1$ we describe a game $G$ played on $m$ identical machines, $M=(M_0,..,M_{m-1})$, and $n$ jobs with $a_i=a$, such that $PoA(G)=\Omega((1+a)^{\frac{n}{m}})$. We assume  that $n=m\cdot k+m+1$ for some $k\in\N$. This assumption is w.l.o.g as arbitrary small jobs can be added as first in the priority list. Let $w=(1+a)^{mk}$. 

Consider a game with $N=X\cup Y$, where $X=(x_0,\ldots,x_{mk-1})$ and $Y=(y_0,y_1,\ldots, y_m)$. The global priority list is $\pi=(y_0,y_1,\ldots,y_{m-1},x_0,\ldots,x_{mk-1},y_m)$. The processing time function of each job in $X$ is $p_{x_i}(t)=1+at$, and the processing time function of each job in $Y$ is $p_{y_i}(t)=\frac{w-1}{a}+at=\frac{(1+a)^{mk}-1}{a}+at$. Note that $|X|=mk$, and $|Y|=m+1$.

Let $\sigma^*$ be the schedule in which $y_0$ and $y_1$ are assigned on $M_0$, all the jobs in $X$ as well as $y_m$ are assigned on $M_1$, and for all $j>1$, a single $Y$-job is assigned on $M_j$ .
We show that $\sigma^*$ is optimal with respect to the minimum makespan objective. 
Consider first machines $(M_2,\ldots, M_{m-1})$, on which a single $Y$-job is assigned. The processing time of a $Y$-job when it starts at time $0$ is $\frac{w-1}{a}$, and hence the machines complete at the this time.

Consider Now machine $M_0$. The first $Y$-job, $y_0$,  finishes at time $\frac{w-1}{a}$, and the processing time of the second $Y$-job, $y_1$, is $\frac{w-1}{a}+a\cdot \frac{w-1}{a}=w-1+\frac{w-1}{a}$. Therefore, $M_0$ completes at time $w-1+2\cdot \frac{w-1}{a}$.

Finally, consider $M_1$. The processing time of $x_0$ is $1$, and for all $i>1$, the processing time of $x_i$ is $p_{x_i}=1+a\cdot\sum_{\ell=0}^{i-1}p_{x_\ell}$. 
We get that $p_{x_i}=1+a\cdot\sum_{\ell=0}^{i-1}(1+a)^\ell=(1+a)^i$, and by Equation (\ref{sumExpA}), we get that the completion time of $x_i$ is $\frac{(1+a)^{i+1}-1}{a}$. In particular, the completion time of $x_{mk-1}$, which is the last $X$-job, is $\frac{(1+a)^{mk}-1}{a}=\frac{w-1}{a}$. Since $p_{y_{m}}(\frac{w-1}{a})=\frac{w-1}{a}+a\cdot \frac{w-1}{a}$, we conclude that $M_{1}$ completes, as $M_0$ completes, at time $w-1+2\cdot \frac{w-1}{a}$, and determines the makespan.

$\sigma^*$ is optimal, since in any assignment, some machine processes at least two $Y$-jobs, whose total length when processed one after the other is at least $w-1+2\cdot \frac{w-1}{a}$.

We turn to describe a possible $NE$ profile of $G$. Let $\sigma$ be a profile obtained by Algorithm \ref{alg:LS}, where ties are broken in favor of the machine with a lower index. Specifically, in $\sigma$, $\forall~i<m$, $y_i$ chooses $M_i$, then the $X$-jobs alternate between the $m$ machines, such that there are exactly $\frac{|X|}{m}=k$ $X$-jobs on each machine,  and job $y_m$ is last on $M_0$, and determines the makespan of $\sigma$. As an outcome of Algorithm \ref{alg:LS}, $\sigma$ is a $NE$.

The makespan of $\sigma$ is achieved on $M_0$. The first job on $M_0$ is $y_0$, whose processing time is $\frac{w-1}{a}$. The next $k$ jobs are the $X$-jobs $(x_0,x_m,\ldots,x_{m(k-1)})$. Denote by $C_{mi}$, the completion time of job $x_{mi}$. Job $x_0$ begins at time $\frac{w-1}{a}$, thus its processing time is $1+a\cdot (\frac{w-1}{a})=w$, and $C_0=\frac{w-1}{a}+w=\frac{w(a+1)-1}{a}$. In general, for all $1\le i \le k$, job $x_{mi}$ begins at time $C_{m(i-1)}$ and its processing time is $a\cdot C_{m(i-1)}+1$, thus $C_{mi}=a\cdot C_{m(i-1)}+1$. It is easy to verify that job $x_{mi}$ begins at time $\frac{(1+a)^{i}\cdot w-1}{a}$, has processing time of $(1+a)^i\cdot w$ and completion time $C_{mi}=\frac{w\cdot (1+a)^{i+1}-1}{a}$. In particular, $x_{m(k-1)}$, the last $X$-job on $M_0$, completes at time $C_{m(k-1)}=\frac{w\cdot (1+a)^{k}-1}{a}$. Job $y_m$ follows job $x_{m(k-1)}$ on $M_0$. It begins at time $C_{m(k-1)}$ and completes at time $a\cdot C_{m(k-1)} +\frac{w-1}{a}$, therefore, $$C_{max}(\sigma)=C_{y_m}=a\cdot \frac{w\cdot (1+a)^k-1}{a}+\frac{w-1}{a}=\frac{w-1}{a} + w\cdot (1+a)^k-1=\frac{w-1}{a}+w\cdot \sqrt[m]{w}-1.$$

We get that $PoA(G)$ is at least 
\begin{align*}
\frac{C_{max(\sigma)}}{C_{max}(\sigma^*)} = &\frac{\frac{w-1}{a}+w\cdot \sqrt[m]{w}-1}{w-1+2\cdot \frac{w-1}{a}}=\Omega(\frac{w\cdot \sqrt[m]{w}}{w-1})=\\
&=\Omega( \sqrt[m]{w})=\Omega(\sqrt[m]{(1+a)^{mk}}=\Omega((1+a)^k)=\Omega((1+a)^{\frac{n}{m}})
\end{align*}
\end{proof}

The next result bounds the $PoA$ in games with a uniform  deterioration rate $a >0$. That is, for all $1 \le i \le n$, it holds that $p_i(t)=b_i +a\cdot t$. 

\begin{theorem}
\label{thm:UpperBoundUniformRate}
    $PoA({\CidenticalGlobalUniformRate})=O((1+a)^{\frac{n}{m}})$.
\end{theorem}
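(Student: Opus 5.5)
We need an upper bound of $O((1+a)^{n/m})$ on the PoA for identical machines, a global priority list, and $p_i(t)=b_i+a t$. Work with the quantity $t+\frac{1}{a}$. When a job of basic length $b$ starts at time $t$, it completes at
$$t+b+at=(1+a)\Bigl(t+\tfrac{1}{a}\Bigr)+b-\tfrac{1}{a}.$$
Equivalently, $C+\frac{b}{a}\cdot\frac{1}{1+a}\cdot\ldots$ is awkward, so we use the cleaner normalized form: setting $x=t+\frac{b}{a}$ gives $C+\frac{b}{a}$-type recurrences. Concretely, on a single machine processing jobs $i_1,\ldots,i_k$ in order,
$$C=\sum_{r=1}^{k} b_{i_r}(1+a)^{k-r}.$$
This follows by induction: $C_r=(1+a)C_{r-1}+b_{i_r}$.

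\textbf{Lower bound on OPT.} In any schedule, each job's completion time is at least $b_i$, so $OPT\ge b_{\max}$. A second bound is also useful: some machine receives at least $\lceil n/m\rceil$ jobs, but this only yields a bound in terms of the smallest $b$'s. We will mostly rely on $OPT\ge b_{\max}$ and $OPT\ge \frac{1}{m}\sum_i b_i$, since every job contributes at least its $b_i$ to some machine's completion time.

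\textbf{Upper bound on a NE.} By Observation~\ref{obs:neIsLs}, every NE $\sigma$ is an outcome of LS in the order of start times. Let $j$ be the last job, so that $C_{max}(\sigma)=C_j(\sigma)=(1+a)S_j+b_j$. By the LS property, $S_j$ equals the minimum machine load at the moment $j$ is placed. The plan is to bound this minimum load by analysing LS inductively over ``rounds.''

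Track the vector of machine loads. Each LS step takes the minimum load $L$ and replaces it with $(1+a)L+b_i\le(1+a)L+b_{\max}$. Claim: after $km$ steps (that is, $k$ full rounds), the minimum load is at most $\frac{b_{\max}((1+a)^k-1)}{a}$. We would prove this by induction, showing that after each block of $m$ steps every machine's load is at most $f_k=(1+a)f_{k-1}+b_{\max}$. The argument: a machine whose load exceeds $f_k$ must have been chosen while its load was greater than $f_{k-1}$. However, as long as some machine has load at most $f_{k-1}$, LS chooses a load at most $f_{k-1}$, and the new value is then at most $f_k$. A pigeonhole step makes this precise: among the $m$ steps of a round, each step either raises a machine from at most $f_{k-1}$ to at most $f_k$, or finds every machine already above $f_{k-1}$. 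The cleanest formulation is that the $r$-th smallest load after step $s$ is bounded by $f_{\lceil (s-r+1)/m\rceil}$, and this is the invariant we would verify. It follows that $S_j\le f_{\lfloor (n-1)/m\rfloor}=O\bigl(b_{\max}(1+a)^{n/m}/a\bigr)$, and therefore $C_{max}(\sigma)\le(1+a)S_j+b_j$.

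\textbf{Main obstacle.} The factor $1/a$ is the difficulty. Dividing by $OPT\ge b_{\max}$ yields $O\bigl((1+a)^{n/m}(1+\frac{1}{a})\bigr)$. This is fine for $a$ bounded below, but for small $a$ we need $\frac{(1+a)^k-1}{a}\le k(1+a)^{k}$. Using this, the ratio becomes $O\bigl(\frac{n}{m}(1+a)^{n/m}\bigr)$, and the extra $n/m$ factor must be removed. To remove it, compare against a better bound on OPT: some machine in OPT processes at least $\lceil n/m\rceil$ jobs, so we would like $OPT$ to grow like $\frac{(1+a)^{k}-1}{a}\cdot b$. This requires replacing $b_{\max}$ in the NE bound by something OPT also pays for, namely the average-type quantity $\sum_i b_i/m$. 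I would first try to redo the invariant with $b_i$ in place of $b_{\max}$, giving $\min$ load $\le$ (weighted sum of assigned $b$'s)$\cdot(1+a)^{k}/m$-style bounds, while using $OPT\ge\max(b_{\max},\sum_i b_i/m)$. If that fails, I would accept the polynomial factor and argue that it is absorbed when $a$ is a fixed constant, since the statement treats $a$ as fixed.
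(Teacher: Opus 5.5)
Your LS/round argument is sound in substance, but the invariant as you state it is mis-indexed. The $r$-th \emph{smallest} load after step $s$ is not bounded by $f_{\lceil (s-r+1)/m\rceil}$: for $m=2$, $s=1$, $r=2$ it would force the only loaded machine to have load $f_0=0$. You want the $r$-th largest. More simply, couple with the instance in which every $b_i$ is replaced by $b_{\max}$; its sorted load vector dominates yours coordinatewise, because $L\mapsto(1+a)L+b$ is increasing in $L$ and in $b$. Carried through, this gives $C_{max}(\sigma)\le f_{\lceil n/m\rceil}$, hence $PoA\le\frac{(1+a)^{\lceil n/m\rceil}-1}{a}$.

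The genuine gap is the one you flag yourself and leave open. Your bound has the form $c\,(1+a)^{n/m}$ only with $c\approx\frac{1+a}{a}$. As $a\to 0$ it tends to $\lceil n/m\rceil$ while $(1+a)^{n/m}\to 1$. So it does not give $O((1+a)^{n/m})$ with a constant independent of $a$, which is what the paper proves (with constant $2$). Your fallback does not close this. Fixing $a$ absorbs the $1/a$ factor, not the $n/m$ factor, and even then you only get a weaker, $a$-dependent statement. The alternative lower bound from an OPT machine carrying $\lceil n/m\rceil$ jobs is useless, since those jobs may all have negligible $b_i$. The root cause is a mismatch: you charge every NE job $b_{\max}$ but compare only against $OPT\ge b_{\max}$, which cannot pay for the geometric sum $\frac{(1+a)^k-1}{a}$.

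The missing idea, which you gesture at but do not execute, is per-machine accounting combined with the NE condition and AM--GM.
\begin{itemize}
\item From your formula $C=\sum_r b_{i_r}(1+a)^{k-r}$, a machine with $k$ jobs of total basic time $B'$ finishes by $(1+a)^{k-1}B'$, in any order.
\item Let $u$ attain the makespan, let $k_j,B_j$ be the number of jobs and the total basic time on $M_j$ in $\sigma$, and let $B=\sum_i b_i$.
\item If $u$ moved to $M_j$ it would be preceded by at most $k_j$ jobs. So stability gives $C_{max}(\sigma)\le(1+a)^{k_j}(B_j+b_u)$ for every $j$; this holds trivially for $u$'s own machine too.
\item Bound the minimum over $j$ by the geometric mean, then use $\sum_j k_j=n$ and AM--GM on the factors $B_j+b_u$. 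This yields $C_{max}(\sigma)\le(1+a)^{n/m}\left(\frac{B}{m}+b_u\right)\le 2(1+a)^{n/m}\,OPT$, using exactly your two lower bounds $OPT\ge B/m$ and $OPT\ge b_u$.
\end{itemize}
The basic times then enter only through their average, and no $1/a$ term ever appears.
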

\begin{proof}
The following claim will be used in our proof.
\begin{claim}
    \label{cl:bound1}
    Let $B=(b_1,b_2,...,b_k)$ be the vector of {\em basic processing times} of $k$ jobs assigned on a machine $M_j$. Then, independent of the order in which the jobs are processed, the completion time of $M_j$ is at most $(1+a)^{k-1}\cdot\sum_{i}{b_i}$.
    \end{claim}
    \begin{proof}
    Assume that the jobs are processed according to their index.  Denote by $C_i$ the completion time of job $i$. The processing time of job $i$ is $a\cdot C_{i-1}+b_i$, and its completion time is $C_{i-1}+a\cdot C_{i-1} + b_i=(1+a)\cdot C_{i-1}+b_i$.  Hence, the completion time of job $k$ is $\sum_{i=0}^{k-1}{(1+a)^i\cdot b_{k-i}}$.

    Independent of the jobs' order, that is, independent of the priority list, $\pi$, for all $1 \le i \le k$, the coefficient of $b_i$ in this term is at most $(1+a)^{k-1}$. 
    Hence, for every $\pi$, the completion time of $M_j$ is at most $(1+a)^{k-1}\cdot \sum_{i=1}^{k}{b_i}.$
\end{proof}

   Let $G$ be a game played on $m$ identical machines and a global priority list $\pi$. Let $\sigma^*$ be a schedule that achieves optimal makespan, and let $\sigma$ be a $NE$ profile of $G$.  Let $B=\sum_{i=1}^{n}{b_i}$ be the sum of basic processing times of the jobs.

    Consider first the $NE$ schedule $\sigma$. Let $k_j$ denote the number of jobs assigned on machine $M_j$, and let $B_j$ denote the sum of basic processing times of jobs assigned on $M_j$, that is, $B_j=\sum_{i|\sigma_i=j}b_i$.
    Let $u$ be the job determining the makespan, and assume, w.l.o.g., that it is assigned on $M_1$. By Claim~\ref{cl:bound1}, the completion time of $M_1$ is $C_{max}(\sigma) \le (1+a)^{k_1-1} \cdot B_1$. 
    By Claim~\ref{cl:bound1}, if $u$ deviates to machines $M_j$, its completion time would be at most $(1+a)^{k_j-1+1}\cdot (B_j +b_u)=(1+a)^{k_j}\cdot (B_j +b_u)$.  Since $\sigma$ is a $NE$, such a deviation is not beneficial. Combining this with the fact that the makespan is achieved on $M_1$, we conclude that the completion time of $M_j$ in $\sigma$ is at most this term. Hence, for all $j$, $C_{max}(\sigma)\le (1+a)^{k_j}\cdot (B_j +b_u)$.

    Let $r=\argmin_j{(1+a)^{k_j}\cdot (B_j+b_u)}$. We prove the following lemma:
    \begin{lemma}
        \label{AMGM-machines}
        $(1+a)^{k_r}\cdot (B_r+b_u)\le (1+a)^{\frac{n}{m}}\cdot (\frac{B}{m}+b_u)$.

    \end{lemma}
    \begin{proof}
    Let $F_j=(1+a)^{k_j}\cdot (B_j+b_u)$. Thus,  $F_r=\min_j{F_j}$.
    Since the minimum of a set is at most the geometric mean, we have:
    $$F_r \le \left(\prod_{j=1}^{m} F_j\right)^{\frac{1}{m}} = \left(\prod_{j=1}^{m} (1+a)^{k_j} \cdot (B_j+b_u)\right)^{\frac{1}{m}}$$
    The right term can also be written as:
    $$(1+a)^{\frac{1}{m}\sum_j k_j} \cdot \left(\prod_{j=1}^{m}(B_j+b_u)\right)^{\frac{1}{m}}$$
    By the AM-GM inequality (the arithmetic mean  of a list is greater or equal to the geometric mean) we get that 
    
    $$\left(\prod_{j=1}^{m}(B_j+b_u)\right)^{\frac{1}{m}} \le \frac{1}{m}\sum_j(B_j+b_u) = \frac{B + m \cdot b_u}{m} = \frac{B}{m}+b_u$$
    Combining these, and since $\sum_jk_j=n$ we conclude  that
    $$F_r 
    \le (1+a)^{\frac{n}{m}}\cdot(\frac{B}{m}+b_u)$$

    \end{proof}

    By Lemma~\ref{AMGM-machines} and the stability of $\sigma$, we get that 
    $$C_{max}(\sigma)\le (1+a)^{\frac{n}{m}}\cdot(\frac{B}{m}+b_u).$$
    Consider now an optimal profile $\sigma^*$. Clearly, in any schedule, and in particular in $\sigma^*$, one machine has load at least $B/m$. In addition, at least one machine has to process job $u$. Hence, $C_{max}(\sigma^*)\ge \max\{\frac{B}{m},b_u\}$, implying that $\frac{\frac{B}{m}+b_u}{C_{max}(\sigma^*)}\le2$.

    Hence, the $PoA$ is at most
    $$\frac{C_{max}(\sigma)}{C_{max}(\sigma^*)}\le\frac{(1+a)^{\frac{n}{m}}\cdot(\frac{B}{m}+b_u)}{C_{max}(\sigma^*)}\le 2\cdot (1+a)^{\frac{n}{m}}=O((1+a)^\frac{n}{m}).$$
\end{proof}

\subsection{Games with Identical Machines Negative Deterioration Rates}



Unlike games with positive deterioration rate, we show that the PoA of games with negative deterioration rates is bounded by a constant. Our analysis refers to games with delay-averse jobs, for which $NE$ existence is guaranteed.
Our methodology relies on establishing a bound on the ratio of total processing times in optimal versus arbitrary schedules. We show that when the ratio between the sums is maximal, so is the $PoA$.

\subsubsection{Arbitrary Priority Lists} 

Our first proof is for games with identical machines and arbitrary priority lists. With fixed-time jobs, it is known that the $PoA$ is $2 - \frac{1}{m}$~\cite{RST21}.
\begin{theorem}
\label{thm:poa_general}
    For $m\ge2$, $PoA(\CidenticalMachines^{-DA})\le 3-\frac{1}{m}$.
\end{theorem}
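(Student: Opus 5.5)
Let $\sigma$ be a $NE$ of a game $G\in\CidenticalMachines^{-DA}$ (all speeds $1$, $a_i\le 1$, $p_i(t)=\max\{\tau_i,b_i-a_it\}$). Let $\sigma^*$ be an optimal schedule. Let $u$ be the job determining the makespan of $\sigma$, and let $S_u(\sigma)$ be its start time. The plan mirrors the classical $2-\frac1m$ argument, with one new ingredient to control the fact that processing times in $\sigma$ and in $\sigma^*$ differ.

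\textbf{Step 1 (stability bound).} Because jobs are delay-averse and machines are identical, $c_u(t,1)$ is non-decreasing in $t$. Hence the stability of $u$ implies that, for every machine $j$, the starting time $u$ would obtain on $j$ is at least $S_u(\sigma)$. That starting time is at most the total actual processing time on $j$ of jobs other than $u$. Summing over the $m$ machines gives
$$m\cdot S_u(\sigma)\le P(\sigma)-p_u^{\sigma},$$
where $P(\sigma)=\sum_i p_i^{\sigma}$. Consequently
$$C_{max}(\sigma)=S_u(\sigma)+p^\sigma_u\le \frac{P(\sigma)}{m}+\Big(1-\frac1m\Big)p_u^{\sigma}.$$
We also have $p_u^{\sigma}\le b_u\le OPT$: in $\sigma^*$ the job $u$ completes by $OPT$, and since $a_u\le 1$ its completion time there is at least $b_u$ (Observation~\ref{ob:optimalStartTime}(i)/(iii), minimal completion $b_u$).

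\textbf{Step 2 (total processing times).} This is the step I expect to be the main obstacle. The claim to establish is
$$P(\sigma)\le P(\sigma^*)+m\cdot OPT \le 2m\cdot OPT,$$
or at least $P(\sigma)/m\le 2\,OPT$. The trivial bound $P(\sigma)\le\sum_i b_i$ is not comparable to $OPT$, so a per-job comparison is needed. For each job, $p_i^{\sigma}-p_i^{\sigma^*}\le a_i\,(S_i(\sigma^*)-S_i(\sigma))^+\le S_i(\sigma^*)$, using $a_i\le 1$. Summing naively over jobs is too weak. I would instead argue per machine of $\sigma^*$: on a machine of $\sigma^*$, the gains of a job $i$ (its processing-time reduction relative to starting at $0$) are at most $a_iS_i(\sigma^*)\le S_i(\sigma^*)$. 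The cleaner route is to note that each job's completion time in $\sigma^*$ satisfies $S_i(\sigma^*)+p_i^{\sigma^*}\ge b_i-(a_i-1)\cdot$... and, since $c_i(\cdot,1)$ is non-decreasing, $b_i\le C_i(\sigma^*)\le OPT$ for every job. That alone only gives $P(\sigma)\le n\cdot OPT$, so the accounting must be sharper. The plan is to bound $\sum_i b_i$ on each $\sigma^*$-machine by $OPT$ plus the machine's total ``saved'' amount $\sum a_iS_i(\sigma^*)$, and to show the latter is at most $OPT$ per machine. This follows by a telescoping/induction on the jobs of the machine, using $a_i\le1$: the saved amount of the $k$-th job is at most the length of the interval before it, and these accumulate at most to the machine's completion time, roughly as in the density argument. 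This yields $P(\sigma)\le\sum_i b_i\le 2m\cdot OPT$.

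\textbf{Step 3 (combine).} From Steps 1 and 2,
$$C_{max}(\sigma)\le 2\,OPT+\Big(1-\frac1m\Big)OPT=\Big(3-\frac1m\Big)OPT.$$
If Step 2's per-machine bound $\sum b_i\le 2\,C_j(\sigma^*)$ turns out to be off by lower-order terms, I would refine it by excluding $u$ itself from the sum, which is exactly where the $-\frac1m$ is preserved.
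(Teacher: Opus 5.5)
Your Steps 1 and 3 are sound, and your justification of $p_u^{\sigma}\le OPT$ via the minimal completion time $p_u(0)$ is cleaner than arguing about start times. The gap is in Step 2. The route you commit to, $P(\sigma)\le\sum_i b_i\le 2m\cdot OPT$ via a per-machine bound $\sum_{i:\sigma^*_i=j}a_iS_i(\sigma^*)\le OPT$, is false. The saving of each job is indeed at most the length of $[0,S_i(\sigma^*)]$. However, these intervals are nested, not disjoint, so they add up to $\sum_i S_i(\sigma^*)$, which can be about $k$ times the machine's completion time. Concretely, put $k$ jobs on one machine with $a_\ell=1$, $b_\ell=\ell/k$ and tiny thresholds, processed in order $1,\dots,k$. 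Job $\ell$ starts at $(\ell-1)/k$, has length $1/k$, and completes at exactly $\ell/k$, so the machine finishes at $1$; also $OPT\ge b_k=1$ because $c_k(t,1)\ge b_k$. Yet $\sum_\ell a_\ell S_\ell=(k-1)/2$ and $\sum_\ell b_\ell=(k+1)/2$, which exceeds $2m\cdot OPT$ once $k>4m-1$. So $\sum_i b_i$ really is incomparable to $OPT$, as you first suspected, and no telescoping argument can rescue this route.

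The missing idea is to split the jobs by their completion time in $\sigma$ instead of comparing against basic lengths. Jobs with $C_i(\sigma)\le OPT$ are processed entirely within $[0,OPT]$ on $m$ machines, so their total length in $\sigma$ is at most $m\cdot OPT$. Now take a job with $C_i(\sigma)>OPT\ge C_i(\sigma^*)$. It must satisfy $S_i(\sigma)>S_i(\sigma^*)$, since $c_i(\cdot,1)$ is non-decreasing for delay-averse jobs. Because $p_i$ is non-increasing in $t$, this gives $p_i^{\sigma}\le p_i^{\sigma^*}$, so these jobs contribute at most $P(\sigma^*)\le m\cdot OPT$. Hence $P(\sigma)\le m\cdot OPT+P(\sigma^*)\le 2m\cdot OPT$, which is exactly the first target you wrote down. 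In your per-job language: a job can have $p_i^{\sigma}>p_i^{\sigma^*}$ only if it starts earlier in $\sigma$, and then it finishes in $\sigma$ by $OPT$. With this replacing your Step 2, your Steps 1 and 3 go through as written and essentially coincide with the paper's proof.
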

\begin{proof}
Let $G \in \CidenticalMachines^{-DA}$. Let $\sigma^*$ be an optimal (not necessarily stable) schedule of $G$. Denote by $OPT$ the makespan of $\sigma^*$. Let $\sigma$ be any $NE$ profile of $G$.
We first bound the total length of jobs in $\sigma$.

\begin{claim}
\label{ratioOPTMax}
  For every profile $\sigma$ of $G$, it holds that $\sum_i p_i^{\sigma} \le 2m \cdot OPT$.
\end{claim}
\begin{proof}
Consider the schedule $\sigma$. The start time of every job whose completion time is more than $OPT$ is clearly later than its start time in $\sigma^*$, therefore, for each such job $p_i^{\sigma} \le p_i^{\sigma^*}$. This implies that the total length of such jobs is at most $m \cdot OPT$. In addition, the total processing time of jobs that finish before $OPT$ is at most $m\cdot OPT$. We conclude that the total job length in $\sigma$ is at most $2m \cdot OPT$.
\end{proof}

In Section~\ref{sec:G3} we showed that Algorithm~\ref{alg:Greedy}  calculates a $NE$ for every instance in $\CidenticalMachines^{-DA}$. We also observed that every $NE$, in particular $\sigma$, is a possible outcome of this algorithm.

Let $C_{max}(\sigma)$ denote the makespan of $\sigma$. Let $L_j$ denote the total load on the $j^{th}$ machine, and let $k$ be the job that finishes last and determines $C_{max}(\sigma)$. Based on the algorithm definition, we know that all the machines were busy when $k$ starts its processing. Thus $\forall j, L_j\ge C_{max}(\sigma)-p^{\sigma}_k$. For at least one machine, $L_j=C_{max}(\sigma)$. Summing up, we get:
$$    \sum_i p^{\sigma}_i=\sum_j L_j\ge (m-1)(C_{max}(\sigma)-p^{\sigma}_k)+C_{max}(\sigma).$$
This implies that 
$$\sum_i p^{\sigma}_i +(m-1)\cdot p^{\sigma}_k\ge m\cdot C_{max}(\sigma).$$ Therefore, $$
   C_{max}(\sigma)\le \frac{1}{m} \sum_i p^{\sigma}_i + p^{\sigma}_k \cdot \frac{m-1}{m}. $$

Now, consider the optimal schedule $\sigma^*$. The start time of job $k$, which determines $C_{max}(\sigma)$, is clearly earlier in $\sigma^*$. Therefore, $p_k^{\sigma} \le p_k^{\sigma^*}$, implying that 
$OPT \ge p^{\sigma}_k$.
In addition, by Claim~\ref{ratioOPTMax}, we have that
$\sum_i p^{\sigma}_i \le 2m\cdot OPT$.

Combining the above, we get that 
$$C_{max}(\sigma)\le \frac{m-1}{m}\cdot OPT+ 2\cdot OPT = (3-\frac{1}{m})\cdot OPT.$$
\end{proof}

Next, we show that this bound is tight.

\begin{theorem}
    \label{thm:poa_negtive_arbitrar_bound}
    For any $0<\epsilon <1$ and $m>1$, there exists a game in  $\CidenticalMachines^{-DA}$ played on $m$ identical machines, such that $PoA(G) = 3-\frac{1}{m}-\epsilon$.
\end{theorem}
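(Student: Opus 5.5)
The plan is to build, for given $m$ and $\epsilon$, an instance in which the analysis of Theorem~\ref{thm:poa_general} is tight at every step. That analysis has two sources of slack. First, the deteriorating jobs run at full length in the $NE$ but shrink almost to their thresholds in the optimum, which gives total length close to $2m\cdot OPT$. Second, a Graham-type last job of length close to $OPT$ finishes on top of a balanced load. We normalize $OPT\approx 1$ and use jobs with deterioration rate exactly $a_i=1=s$. Such jobs are still delay-averse, and by Observation~\ref{ob:optimalStartTime}(iii) their completion time equals $b_i$ for every start time $t\le b_i-\tau_i$. The tie-breaking rule (a job prefers the machine on which its processing time is longer) then pushes them to start early in an equilibrium.

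\emph{Instance.} Take $m$ \emph{deteriorating} jobs $d_1,\dots,d_m$ with $b=1$, $a=1$, and threshold $\tau=\delta$. Take one \emph{big} fixed job $g$ with $p_g\equiv 1$. Take $(m-1)K$ \emph{small} fixed jobs of length $(1-\delta)/K$ each, for a large integer $K$ divisible by $m$. We use a global priority list $\pi=(d_1,\dots,d_m,\text{small jobs},g)$, so the instance lies in $\CidenticalMachines^{-DA}$ (and in $\Cglobal$).

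\emph{Upper bound on $OPT$.} Consider the following schedule.
\begin{itemize}
\item On each of $M_1,\dots,M_{m-1}$, put $K$ small jobs of total length $1-\delta$ first, then one $d_j$. That $d_j$ starts at $1-\delta$, has length $\max(\delta,\,1-(1-\delta))=\delta$, and finishes at $1$.
\item On $M_m$, put $g$ first and then $d_m$. Here $d_m$ starts at $1$, has length $\delta$, and finishes at $1+\delta$.
\end{itemize}
The priority lists do not matter for $OPT$, since $OPT$ is a minimum over all assignments and the processing order can be chosen accordingly. Hence $OPT\le 1+\delta$.

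\emph{A bad $NE$.} Let $\sigma$ be the output of Algorithm~\ref{alg:LS} with ties broken towards machines on which the job's processing time is longer. By Theorem~\ref{thm:ls_on_global}, $\sigma$ is a $NE$. Each $d_j$ goes first on its own machine, runs from $0$ to $1$ at full length $1$, and so every machine has load $1$. The small jobs are then spread round-robin, $K(m-1)/m$ per machine, so every machine reaches load $1+\frac{(m-1)(1-\delta)}{m}$. Finally $g$ completes at $2+\frac{(m-1)(1-\delta)}{m}$. This gives
$$\frac{C_{max}(\sigma)}{OPT}\ge \frac{3-\frac1m-\delta\frac{m-1}{m}}{1+\delta},$$
which tends to $3-\frac1m$ as $\delta\to 0$. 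To obtain the ratio $3-\frac1m-\epsilon$ exactly rather than approximately, we fix a suitable $\delta$. If needed we also slightly shorten $g$, since the ratio varies continuously in $p_g$.

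\emph{Main obstacle.} The delicate step is checking that $\sigma$ is indeed a $NE$ under the paper's tie rule, and that the optimum does not beat $1+\delta$ by much. For stability:
\begin{itemize}
\item A $d_j$ that moves to another machine would start after another $d$-job or at the same time. Its completion time stays $1$ only if it starts by $1-\delta$, and a later start can only increase it. With tie-breaking towards a longer processing time, staying first is a best response.
\item The small jobs and $g$ sit on perfectly balanced machines, and $g$ is last on every list. Therefore no deviation lowers their start time.
\end{itemize}
For the optimum, it is enough to note that $OPT\ge 1$, because $g$ alone has length $1$. The ratio bound therefore follows.
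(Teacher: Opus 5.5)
\textbf{There is a genuine gap in your upper bound on $OPT$.} In this model a profile is only an assignment $\sigma\in M^N$. The order on each machine is dictated by that machine's priority list, and $OPT$ is the minimum over such profiles. The paper's own lower bounds respect this; for example, in Theorem~\ref{PoA_LB_glogal} the $V$-jobs are delayed only by the $U_1$-jobs that precede them in $\pi$. So the step ``the processing order can be chosen accordingly'' is not available.

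With your global list $\pi=(d_1,\dots,d_m,\text{small jobs},g)$, a $d$-job that shares a machine with small jobs or with $g$ is always processed before them. Hence your schedule, which places $K$ small jobs (or $g$) before $d_j$, is not a feasible profile. In every feasible profile, each machine holding $d$-jobs runs one of them at full length on $[0,1]$ and any further ones at length $\delta$. Let $r$ be the number of machines carrying a $d$-job. The total work is then $m+r-(r-1)\delta\ge m+1$, so $OPT\ge 1+\frac1m$. Your equilibrium therefore gives a ratio below $\frac{3m-1}{m+1}$.

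The problem is structural, not a matter of tuning. Your instance lies in $\CidenticalGlobal^{-DA}$, for which Theorem~\ref{poa:3-2m} gives $PoA\le 3-\frac2m$. So no global-list construction can reach $3-\frac1m-\epsilon$ once $\epsilon<\frac1m$; closing that gap is exactly what machine-dependent lists are for.

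\textbf{The missing idea is to use non-global lists, as the paper does.} Take $\pi_j=(v_j,U,a,V\setminus\{v_j\})$. Each machine puts a \emph{different} deteriorating job first and all other deteriorating jobs last.
\begin{itemize}
\item The first property yields your bad equilibrium. Each $v_j$ is first on machine $j$, runs at full length $1$, and completes at time $1$, which is its minimum possible completion time, so it never moves. Each machine then receives $m-1$ fixed $U$-jobs of length $\frac{1-\epsilon'}{m}$, and the fixed job $a$ of length $1-\epsilon'$ comes last on machine $1$.
\item The second property makes your intended optimum legal. Machine $j\ge 2$ runs $m$ $U$-jobs followed by $v_{j-1}$, and machine $1$ runs $a$ followed by $v_m$. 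Every $V$-job then starts at $1-\epsilon'$ with length $\epsilon'$, and $OPT=1$ exactly.
\end{itemize}
The equilibrium makespan is $3-\frac1m-\frac{2m-1}{m}\epsilon'$, which equals $3-\frac1m-\epsilon$ for $\epsilon'=\frac{m}{2m-1}\epsilon$.

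Your job types already match the paper's: small jobs correspond to $U$, $g$ to $a$, and the $d_j$ to $V$. Only the priority structure has to change.
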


\begin{proof}
Given $m$ and $\epsilon$, let $\epsilon'=\frac m{2m-1}\cdot \epsilon$. We define the following game, $G$, played on $m$ identical machines. An example for $m=3$ is given in Figure~\ref{fig:PoA3m-1}.

The set $N$ consists of three sets of jobs. The first set includes a single job, $a$, for which $p_a(t)=1-\epsilon'$. The second set is $U=\{u_1,\ldots,u_{(m-1)m}\}$. For every $u_i \in U$, let $p_{u_i}(t)=\frac{1-\epsilon'}m$. The third set is $V=\{v_1,\ldots,v_m\}$. For every $v_i \in V$, let $p_{v_i}(t) = \max\{1-t,\epsilon'\}$. Note that Job $a$ as well as the $U$-jobs have fixed processing times, while the $V$-jobs have decreasing processing times.

We turn to describe the machines' priority lists. For $1 \le j \le m$, let 
    $\pi_j=(v_j,U,a,V\setminus \{v_j\})$.  Whenever the list includes a set, the set elements are ordered according to their index.
Note that every machine gives preference to a different job from $V$.
    
In a possible optimal schedule, the first machine processes Job $a$ and Job $v_m$. For every $2 \le j \le m$, machine $j$ processes $m$ jobs from $U$ and Job $v_{j-1}$. 
Note that each of the $V$-jobs starts its processing at time $1-\epsilon'$. Therefore, the makespan of the optimal schedule is $1$. 
It easy to verify that this schedule is optimal, since it is balanced, and each job has its minimal possible length.

On the other hand, the following schedule, $\sigma$, is a $NE$: 
for every $1 \le j \le m$, machine $j$ processes job $v_j$, as well as the set $\{u_{mk +j}\}_{0 \le k \le m-2}$ which includes $m-1$ jobs from $U$. In addition, the first machine processes also Job $a$. 
The makespan of this schedule is obtained on the first machine and is equal to $1+(m-1)\cdot\frac{1-\epsilon'}m+1-\epsilon' = 3 - \epsilon'- \frac 1 m -\frac{m-1}m \cdot \epsilon' =  3 - \frac 1 m - \epsilon$.

We show that $\sigma$ is a $NE$. First, the $V$-jobs are all first, and achieve their minimal possible completion time, which is $1$. The $U$-jobs are assigned to the machines such that jobs with a low index appear before jobs with higher index, therefore, no $U$-job has a beneficial deviation. Finally, for all 
$j$, Job $a$ has the lowest priority among jobs that are assigned on machine $j$. 

We conclude that $PoA(G)=3-\frac 1 m -\epsilon$.

\end{proof}

\begin{figure}[h]
\centering
\includegraphics[width=1\textwidth]{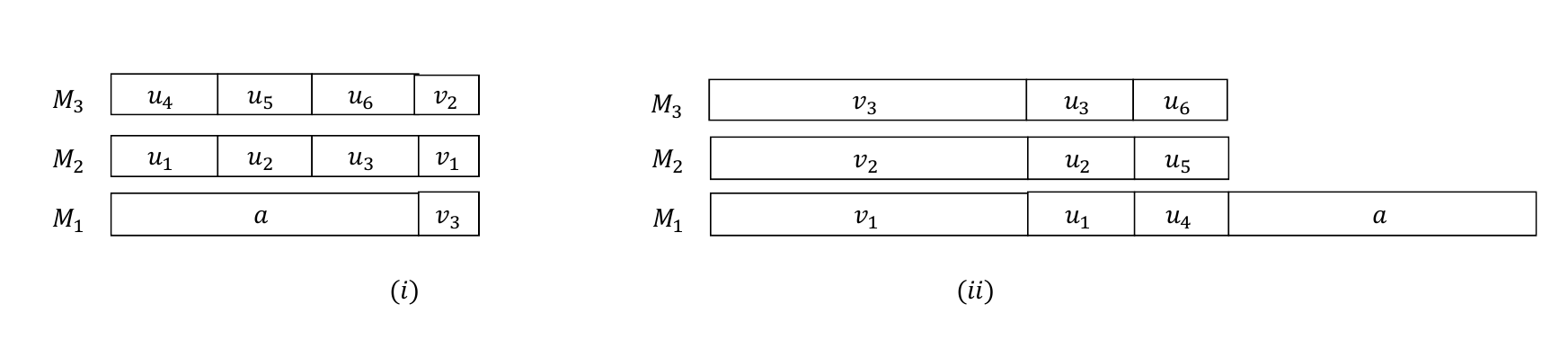}
\caption{A game on $3$ identical machines, with $PoA$ $3-1/3-\epsilon$. $(i)$ an optimal schedule with makespan $1$, $(ii)$ a $NE$ schedule with makespan $3-1/3-\epsilon$.}
\label{fig:PoA3m-1}
\end{figure}

\subsubsection{Global Priority List} 

Let $\CidenticalGlobal^{-DA} =\CidenticalMachines^{-DA} \cap \Cglobal$. That is, the class of games with a global priority list, identical speeds, and non-decreasing completion time. We bound the $PoA$ of this class.

The following simple observation will be used in our proof. It shows that the makespan produced by LS algorithm cannot decrease if the length of a single job is extended. Formally,  

       \begin{observation}
    \label{obs:start_vecs}
    Let $G\in \CidenticalGlobal^{-DA}$ be a game with $\pi=(1,2,\ldots,n)$ and processing time functions $(p_1(t),\ldots,p_n(t))$. Assume that in a run of LS (Algorithm~\ref{alg:LS}), the starting time of job $u$ is $t_u$. Consider a game $G'$ with the same priority  list and processing time functions as in $G$, except for job $u$ whose processing time function, $p'_u(t)$, is defined such that $p'_u(t_u)\ge p_u(t_u)$. Then the makespan of a schedule produced by LS on $G'$ is not smaller than the makespan produced by LS on $G$.   
    \end{observation}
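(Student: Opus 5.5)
We argue by coupling the two runs of LS and showing, job by job in the order of $\pi$, that the machine completion-time vector in $G'$ dominates the one in $G$. The domination is in sorted order. Let $\ell^{(i)}=(\ell^{(i)}_1,\ldots,\ell^{(i)}_m)$ be the machine loads after LS on $G$ has assigned jobs $1,\ldots,i$, and let $\ell'^{(i)}$ be the same for $G'$. Since the machines are identical, only the multiset of loads matters. Write $\ell^{(i)}_{[1]}\le\cdots\le\ell^{(i)}_{[m]}$ for the sorted values. The invariant we maintain is $\ell^{(i)}_{[r]}\le \ell'^{(i)}_{[r]}$ for every rank $r$. At the end, taking $r=m$ gives exactly the claim about makespans.

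We need a few facts from delay-aversion with identical speeds $s$. For every job $i$, the completion map $c_i(t)=t+p_i(t)/s$ is non-decreasing in $t$; this is the definition of delay-averse. So LS places job $i$ on a machine of currently minimum load, and the new load is $c_i(\ell_{[1]})$. The update from $\ell^{(i-1)}$ to $\ell^{(i)}$ therefore replaces the smallest entry $x=\ell_{[1]}$ by $c_i(x)$. Ties do not matter, because only the multiset of loads is relevant.

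\textbf{Inductive step.} For jobs $i\neq u$, both games use the same map $c_i$. Suppose $x\le x'$ are the two minima and $\ell_{[r]}\le\ell'_{[r]}$ holds for all $r$. Removing the minimum from each vector keeps the ranks aligned. We then insert $c_i(x)\le c_i(x')$, which holds by monotonicity. Inserting a smaller-or-equal element into a coordinatewise-dominated multiset preserves coordinatewise sorted domination; this is a standard merge argument. So the invariant survives. For job $u$, LS on $G$ starts $u$ at $t_u=x$, and $u$ finishes at $x+p_u(x)/s$. In $G'$ it starts at $x'\ge x$ and finishes at $x'+p'_u(x')/s$.

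\textbf{Main obstacle.} The hypothesis only constrains $p'_u$ at the single point $t_u$. Without more, $x'+p'_u(x')/s\ge x+p_u(x)/s$ cannot be guaranteed. This is the step that needs care. I will read the observation as intended: the modification happens after the prefix $1,\ldots,u-1$ is fixed. Since jobs $1,\ldots,u-1$ are identical in $G$ and $G'$, the two runs coincide up to job $u$. Hence $x'=x=t_u$, and job $u$ completes at $t_u+p'_u(t_u)/s\ge t_u+p_u(t_u)/s$. This uses only the given inequality, and the invariant holds after step $u$. I also keep delay-aversion of $p'_u$ implicit: the new $u$ is still a job in the class, although its function is not used again after step $u$. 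After that, the inductive step for $i>u$ applies unchanged and carries domination to $i=n$. Comparing the largest sorted coordinates then gives $C_{max}^{LS}(G')\ge C_{max}^{LS}(G)$.
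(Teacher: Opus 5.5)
Your proof is correct and follows the paper's argument: the two LS runs coincide on jobs $1,\ldots,u-1$, the step for $u$ uses only $p'_u(t_u)\ge p_u(t_u)$, and for $i>u$ the completion maps are non-decreasing (delay-aversion), which carries coordinatewise domination of the sorted load vectors through to the largest coordinate; your merge argument spells out the re-sorting step that the paper leaves implicit. One point to tighten: when $a_i=s$ the map $c_i$ can be flat, so ``LS places $i$ on a minimum-load machine'' follows from the paper's tie-breaking convention (prefer the machine on which the processing time is longer, i.e., the earlier start), not from monotonicity alone — the paper's own proof relies on this convention implicitly as well.
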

    \begin{proof}
       Let $C_{max}$ be the makespan of a $NE$, $\sigma$, produced by running LS on $G$. Let $u$ be a job as described above.
        Consider the game $G'$ in which the processing time function of job $u$ is replaced by $p'_u(t)$. Consider the schedule $\sigma'$ produced by running LS on $G'$ and let $C'_{max}$ be the resulting makespan.
        
        For every $1 \le i \le n$, let $L_i=(L_{i,1},..,L_{i,m})$ and $L'_i=(L'_{i,1},..,L'_{i,m})$  denote the non-increasing sorted vectors of machine loads after the assignment of job $i$ during an LS run on $G$ and $G'$ respectively.
        For every $i<u$, clearly $L_i=L'_i$. In addition,  $t_u=L_{{u-1},m}$, as this the machine is chosen by $u$. We now show by induction that for every $i\ge u$, for every $j \in M$, it holds that $L_{i,j}\le L'_{i,j}$.
        The base case is $i=u$. In a run of LS, job $u$ joins a machine with load $L_{{u-1},m}=t_u$. By our assumption on $p'_u$, we have that $p'_u(t)\ge p_u(t)$. Therefore, for all $j$, $L_{u,j}\le L'_{u,j}$. For the induction step, for every $i>u$, job $i$ joins a machine with load $L'_{{i-1},m}\ge L_{{i-1},m}$, and since $a_i\le1$, the completion time of $i$ in $\sigma'$ is not smaller than its completion time in $\sigma$. Therefore, for all $j$, $L_{i,j}\le L'_{i,j}$. 
        In particular, for $i=n$ we get that  $C'_{max} = L_{n,1} \ge L'_{n,1} = C_{max}$.
\end{proof}

\begin{theorem}
    For $m\ge2, ~PoA(\CidenticalGlobal^{-DA})\le 3-\frac{2}{m}$.
    \label{poa:3-2m}
\end{theorem}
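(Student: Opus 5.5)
We follow the structure of the proof of Theorem~\ref{thm:poa_general} and save one extra $\frac{1}{m}\cdot OPT$ by exploiting the global priority list. Let $\sigma$ be a $NE$ and $\sigma^*$ an optimal schedule with makespan $OPT$. By Observation~\ref{obs:neIsLs}, $\sigma$ is an outcome of LS run in the order $\pi=(1,\ldots,n)$. Let $k$ be the job determining $C_{max}(\sigma)$.

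\textbf{Step 1: reduce to $k=n$.} Jobs after $k$ in $\pi$ never delay $k$ in LS, so deleting them leaves $C_k(\sigma)$ unchanged and can only decrease $OPT$. We may therefore assume $k=n$ is the last job.

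\textbf{Step 2: standard LS bound.} When $n$ is assigned, every machine is busy up to time $t_n=S_n(\sigma)$. Hence $m\,t_n\le \sum_{i<n}p_i^{\sigma}$, and
$$C_{max}(\sigma)\le \frac{1}{m}\sum_{i<n}p_i^{\sigma}+p_n^{\sigma}.$$
Compared with Theorem~\ref{thm:poa_general}, the gain is that the $\sum_i p_i^{\sigma}$ term excludes $p_n^{\sigma}$.

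\textbf{Step 3: refine Claim~\ref{ratioOPTMax}.} We need
$$\sum_{i<n}p_i^{\sigma}\le 2m\cdot OPT - 2\,p_n^{\sigma}.$$
Together with $p_n^{\sigma}\le OPT$, which holds since $n$ starts no earlier in $\sigma$ than in $\sigma^*$ and lengths are non-increasing, Step 2 then gives
$$C_{max}(\sigma)\le 2\,OPT+\Big(1-\frac{2}{m}\Big)p_n^{\sigma}\le \Big(3-\frac{2}{m}\Big)OPT.$$

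To prove the refined claim, split the jobs $i<n$ as in Claim~\ref{ratioOPTMax}:
\begin{itemize}
\item Jobs running during $[0,OPT]$ in $\sigma$ contribute at most $m\cdot OPT$ of processing in that window.
\item Jobs that start after $OPT$ satisfy $p_i^{\sigma}\le p_i^{\sigma^*}$.
\item Jobs straddling time $OPT$ have their portion after $OPT$ bounded by their $\sigma^*$-length.
\end{itemize}
To save the $2p_n^{\sigma}$ term, we charge $n$ twice:
\begin{itemize}
\item In $\sigma^*$, job $n$ occupies $p_n^{\sigma^*}\ge p_n^{\sigma}$ of the total capacity $m\cdot OPT$, so $\sum_{i<n}p_i^{\sigma^*}\le m\cdot OPT-p_n^{\sigma}$.
\item In $\sigma$, the work lying after time $OPT$ on the machine where $n$ runs (in the worst case) does not come from other jobs.
\end{itemize}

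To make the bookkeeping clean, we may invoke Observation~\ref{obs:start_vecs}: it lets us lengthen jobs (raise $p_i$ at their LS start times) without decreasing the LS makespan. This reduces to an extremal instance where every job is either fixed-length or reaches its threshold exactly. The lengthening must be done without increasing $OPT$, or its effect on $OPT$ must be accounted for.

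\textbf{Main obstacle.} The hardest step is this refined charging. The bound in Claim~\ref{ratioOPTMax} is tight for arbitrary lists, so the improvement must come from the global list, via Step 1: job $n$ has the lowest priority, so all other jobs were placed greedily before it. If direct charging fails, we would instead bound $t_n\le \frac{1}{m}\sum_{i<n}p_i^{\sigma}$ and argue that jobs active after $OPT$ before $t_n$ occupy at most $m-1$ machines' worth of excess beyond their $\sigma^*$ lengths. In either case, the target is $\sum_{i<n}p_i^{\sigma}\le (2m-2)OPT$ in the critical case $p_n^{\sigma}=OPT$, which gives the same final bound.
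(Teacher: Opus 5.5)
\textbf{There is a genuine gap in Step 3.} Your Steps 1 and 2 and the algebra in Step 3 are fine. The target inequality $\sum_{i<n}p_i^{\sigma}+2p_n^{\sigma}\le 2m\cdot OPT$ would suffice; it even follows from the paper's Lemma~\ref{ratioOPTMaxGlobal}, $\sum_i p_i^{\sigma}\le(2m-1)OPT$, together with $p_n^{\sigma}\le OPT$. But this inequality is the whole content of the theorem, and you do not prove it.

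\begin{itemize}
\item Your first charge is valid: the capacity of $\sigma^*$ gives $\sum_{i<n}p_i^{\sigma^*}\le m\cdot OPT-p_n^{\sigma}$. However, it saves only one copy of $p_n^{\sigma}$, which reproduces Claim~\ref{ratioOPTMax} and the arbitrary-list bound $3-\frac1m$.
\item Your second charge is false as stated. On $n$'s machine in $\sigma$, the interval $[0,t_n]$ is filled by jobs $i<n$. So whenever $t_n>OPT$, the work after $OPT$ on that machine does come from other jobs. And $t_n>OPT$ must hold whenever $C_{max}(\sigma)>2\,OPT$, which is exactly the regime you care about.
\item More fundamentally, none of your charging steps uses the global priority list, and the inequality fails without it. Take the instance of Theorem~\ref{thm:poa_negtive_arbitrar_bound}. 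It has the LS structure, delay-averse jobs, and a makespan job $a$ that is last on its machine and placed after all other jobs. There $OPT=1$, yet $\sum_{i<n}p_i^{\sigma}+2p_n^{\sigma}=m+(m+1)(1-\epsilon')>2m$ for small $\epsilon'$.
\item So your remark that the global list helps ``via Step 1'' does not identify a mechanism: everything Step 1 gives you also holds in that instance.
\end{itemize}

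\textbf{What is missing} is an argument bounding the gain $\Delta=\sum_i p_i^{\sigma}-\sum_i p_i^{\sigma^*}$ by $(m-1)OPT$ rather than $m\cdot OPT$. The paper obtains it as follows.

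\begin{enumerate}
\item Using Observation~\ref{obs:start_vecs}, it replaces a job by a fixed-length job of its $\sigma^*$-length. This keeps $\sigma^*$ feasible and does not decrease the LS makespan.
\item This reduces to instances where every job is either fixed-length or starts strictly earlier in $\sigma$ than in $\sigma^*$ with rate exactly $1$. Such a rate-$1$ job completes at $b_i\le OPT$ in $\sigma$, so it contributes at most $b_i\le OPT$ to $\Delta$.
\item It further reduces to at most one such job per machine, so the only nontrivial case is exactly one per machine.
\item Here the global list enters. Every job preceding the highest-priority such job $u_1$ in $\pi$ is one of the fixed jobs that precede these jobs in $\sigma$; call their total length $P_B$. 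Hence in $\sigma^*$, $u_1$ starts by time $P_B$ and keeps length at least $b_{u_1}-P_B$.
\item This cancels the total shortening $P_B$ that these rate-$1$ jobs suffer in $\sigma$, giving $\Delta\le\sum_{i\ne u_1}b_i\le(m-1)OPT$. Hence $\sum_i p_i^{\sigma}\le(2m-1)OPT$, and your Step 2 finishes the proof.
\end{enumerate}

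The normal form you hint at (jobs reaching their threshold exactly) is not the relevant one. What matters is the dichotomy between fixed-length jobs and rate-$1$ jobs that run earlier in $\sigma$, together with the cancellation at $u_1$.
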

\begin{proof}
Let $G \in \CidenticalGlobal^{-DA}$ be a game that attains the maximal $PoA$ among all games in the class $\CidenticalGlobal^{-DA}$. Let $\sigma^*$ be an optimal (not necessarily stable) schedule of $G$. Denote by $OPT$ the makespan of $\sigma^*$. Let $\sigma$ be any $NE$ profile of $G$.
We first show that, without loss of generality, we can assume several  properties of the jobs in the game $G$. Recall that every $NE$ in this class is a result of LS run (see Theorem \ref{thm:ls_on_global}).

\begin{observation}
    \label{obs:worst_poa_no_shorter}
        In the game $G$, w.l.o.g., the set of jobs $N$ consists of two disjoint sets $N= D \cup F$, where $D$ includes jobs that are shorter in $\sigma^*$, that is, $i \in D$ iff $t_i^\sigma < t_i^{\sigma^*}$, and $F$ includes jobs with fixed length, that is, for all $i \in F, a_i=0$.
\end{observation}
    \begin{proof}
       We first show that no job is longer in $\sigma^*$. Assume that for some job $i$ it holds that $p_i^\sigma<p_i^{\sigma^*}$. Let $p_i(t)=b_i-a_i\cdot t$, be the processing time function of $i$, and let its length in $\sigma^*$ be $y$. Consider an instance $G'$ in which job $i$ is replaced by a job with $p_i=y$ ($a_i=0$). The optimal schedule $\sigma^*$ remains valid, therefore, $OPT(G') \le OPT(G)$. In a run of LS on $G'$, job $i$ is longer at the time it is assigned, therefore, by Observation~\ref{obs:start_vecs}, the makespan of $G'$ cannot decrease, and $PoA(G')\ge PoA(G)$.

       Given that no job is longer in $\sigma^*$, any job is either shorter in $\sigma^*$ and therefore belongs to $D$, or has the same length in both schedules, and w.l.o.g., has a fixed length and belongs to $F$. Note that if $a_i>0$ and $t_i(\sigma)=t_i(\sigma^*)=x$ (possibly $x=\tau_i$) then job $i$ can be replaced by a job $i'$ whose length function is $p_i(t)=x$ (and $a_i=0$). Again, this modification does not hurt $OPT$ and does not affect the outcome of LS.
       \end{proof}
       
    \begin{observation}
        \label{obs:Rate1ForGlobal}
        In the game $G$, w.l.o.g., for every job $u_i\in D$, it holds that $a_i=1$.
    \end{observation}

        \begin{proof}
         Let $u_i$ be a job with $p_i(t)=b_i-a_i\cdot t$, for some  $a_i<1$. Assume that the length of $u_i$ in $\sigma^*$ is $x$ and its length in $\sigma$ is $x+\beta$. Since $a_i<1$, it must be that $t_i^{\sigma^*}-t_i^{\sigma} >\beta$. Consider an instance in which job $u_i$ is replaced by a job $i'$ with length function $p_{i'}(t)=(x +t_i^{\sigma^*})-t$ (note that $a_{i'}=1$). The makespan of the optimal schedule remains the same, because job $i'$ has length $x$. In $\sigma$, job $i'$ has the same starting time as $i$, but its length is $x+t_i^{\sigma^*}-t_i^{\sigma}>x+\beta$, therefore, by Observation~\ref{obs:start_vecs}, the makespan cannot decrease, and the $PoA$ is not smaller.
        \end{proof}

Based on the above properties of $G$ we can bound the total processing time of jobs in $\sigma$.
\begin{lemma}
    \label{ratioOPTMaxGlobal}
$\sum_i p_i^{\sigma} \le (2m-1)\cdot OPT.$  
\end{lemma}
\begin{proof}
We begin by bounding the maximum difference, $\Delta_G$ between the total processing time of the jobs in $\sigma$ and in $\sigma^*$, that is, $\Delta_G= \sum_i p_i^{\sigma}-\sum_i p_i^{\sigma^*}$.
Recall that $D$ is the set of jobs for which $a_i=1$ that have earlier starting time, and thus longer length, in $\sigma$. All other jobs have fixed lengths. 

Let $D=\{u_1,\ldots,u_d\}$, and assume that $\pi(u_i)<\pi(u_{i'})$ for $i<i'$. Consider a job $u_i \in D$. Denote by $t_i^{\sigma^*}$ and  $t_i^{\sigma}$  the start time of $u_i$ in $\sigma^*$ and in $\sigma$, respectively. 
  By Observation~\ref{obs:Rate1ForGlobal}, $a_i=1$.
Note that for any job with $a_i=1$, the completion time of job $i$ is exactly $b_i$ if it starts being processed by time $b_i$, and is more than $b_i$ if it starts being processed later than $b_i$ (and has length $\tau_i$). Therefore, for every $u_i \in D$, it holds that $OPT \ge b_i$.
Let $\delta_i = p_i^\sigma-p_i^{\sigma^*}$. Clearly, $\delta_i \le b_i$, and by the above, $\delta_i \le OPT$.  
Hence, if $|D|\le m-1$, then $\Delta_G$ is at most $(m-1)\cdot OPT$.

    \begin{claim}
        \label{obs:globalD=m}
    In the $NE$ schedule $\sigma$, there is at most one job from $D$ on each machine.
    \end{claim}
    \begin{proof}
     Recall that a job with $p_i=max\{b_i-t,\tau_i\}$ has a completion time $b_i$ regardless of its starting time, unless it reaches length $\tau_i$. Given that jobs in $D$ are longer in $\sigma$, none of them reaches length $\tau_i$, and thus, every job $u_i \in D$ completes at time $b_i$ in $\sigma$.
     
    Assume by contradiction that there is a machine, $j$, that processes at least two jobs from $D$ in $\sigma$. Let $u_k$ be the last job from $D$ on machine $j$. By the above, its completion time is $b_k$. Hence, we can remove from the instance all jobs assigned in $\sigma$ on machine $j$ before $u_k$. The makespan of $\sigma$ remains the same. The makespan of an optimal schedule can only reduce, thus, the $PoA$ can only increase.
    \end{proof}

    Claim~\ref{obs:globalD=m} implies that if $|D| \ge m$, then w.l.o.g., $|D|=m$. In addition, let $m_d$ denote the number of machines in $\sigma$ to which jobs from $D$ are assigned. Clearly, if $m_d\le m-1$ then $\Delta_G$ is at most $(m-1)\cdot OPT$. Hence, we can conclude that  there is exactly one job from $D$ on each machine. Consider the $NE$ profile $\sigma$. Denote by $B$ the set of jobs with fixed processing time assigned before some job from $D$ in $\sigma$. Let $P_B=\Sigma_{i\in B} b_i$.
    Recall that $\Delta_G$ is affected only by jobs in $D$, and that each job in $D$ is shortened by exactly its starting time in $\sigma$, which is determined by jobs in $B$ on its machine. Therefore, the total length of $D$-jobs in $\sigma$ is $(\sum_{i \in D} b_i) - P_B$.


    Next, we provide a lower bound on the total length of the $D$-jobs in $\sigma^*$. We first show the following property of the priority list $\pi$.
    \begin{claim}
        \label{cl:no_green_before_d1}
        Let $u_1$ be the first job from $D$ in $\pi$. All jobs that precede $u_1$ in $\pi$ belong to $B$.
    \end{claim}
    \begin{proof}
        Assume by contradiction that $\pi(v)<\pi(u_1)$ for some job $v\notin B$. By the choice of $u_1$, we have that $v\notin D$. Consider an execution of the LS algorithm. Since $v$ appears before $u_1$ in $\pi$, no job from $D$ is assigned before $v$ and $u_1$ are added. Let $M_j$ denote the machine to which $v$ is assigned. In the final $NE$, $M_j$ ends up with one job from $D$, which is assigned after $v$. This contradicts the definition of $B$, according to which any job assigned before a job from $D$ on its machine must belong to $B$. Hence, all jobs preceding $u_1$ in $\pi$ must be from $B$.
        \end{proof}

    Denote by $B_1\subseteq B$ the set of jobs in $B$ with higher priority than $u_1$. Consider the optimal schedule $\sigma^*$. By Claim~\ref{cl:no_green_before_d1}, we know that $u_1$ is delayed by at most $\sum_{i\in B_1}p_i \le P_B$, and hence its length is at least $b_1-P_B$. For $2 \le i \le d$, the length of job $u_i$ in $\sigma^*$ is clearly non-negative.
    Therefore, $\Delta_G= \sum_i p_i^{\sigma}-\sum_i p_i^{\sigma^*} \le (\sum_{i \in D} b_i) - P_B - (b_1-P_B) = b_2+\ldots+b_m \le (m-1)\cdot OPT$.

Given the bound on $\Delta_G$, the total job lengths in $\sigma$ can be bounded as follows. 
$$\sum_i p_i^{\sigma} \le \sum_i p_i^{\sigma^*} + \Delta_G \le \sum_i p_i^{\sigma^*}+ (m-1)\cdot OPT \le m\cdot OPT + (m-1)\cdot OPT=(2m-1)\cdot OPT.$$
\end{proof}

We can now conclude the $PoA$ analysis:
Let $C_{max}(\sigma)$ denote the makespan of $\sigma$. Let $L_j$ denote the total load on the $j^{th}$ machine, and let $k$ be the job that finishes last and determines $C_{max}(\sigma)$. Since $\sigma$ is a possible outcome of LS algorithm, all the machines were busy when $k$ starts its processing. Thus $\forall j, L_j\ge C_{max}(\sigma)-p^{\sigma}_k$. For at least one machine, $L_j=C_{max}(\sigma)$. Summing up, we get:
$$    \sum_i p^{\sigma}_i=\sum_j L_j\ge (m-1)(C_{max}(\sigma)-p^{\sigma}_k)+C_{max}(\sigma).$$
This implies that 
$$\sum_i p^{\sigma}_i +(m-1)\cdot p^{\sigma}_k\ge m\cdot C_{max}(\sigma).$$ Therefore, $$
   C_{max}(\sigma)\le \frac{1}{m} \sum_i p^{\sigma}_i + p^{\sigma}_k \cdot \frac{m-1}{m}. $$

Now, consider the optimal schedule $\sigma^*$. The start time of job $k$, which determines $C_{max}(\sigma)$, is clearly earlier in $\sigma^*$. Therefore, $p_k^{\sigma} \le p_k^{\sigma^*}$, implying that 
$OPT \ge p^{\sigma}_k$.
In addition, by Lemma~\ref{ratioOPTMaxGlobal}, we have that
$\sum_i p^{\sigma}_i \le (2m-1)\cdot OPT$. We conclude that 
$$C_{max}(\sigma)\le \frac{m-1}{m}\cdot OPT+ \frac{2m-1}{m}\cdot OPT = (3-\frac{2}{m})\cdot OPT.$$
\end{proof}

Next, we show that the bound is tight for every $m$. Specifically, we prove the following claim.

\begin{theorem}
\label{PoA_LB_glogal}
    For any $0<\epsilon <1$ and $m>1$, there exists a game in $\CidenticalGlobal^{-DA}$ played on $m$ identical machines, such that $PoA(G) = 3-\frac{2}{m}-\epsilon$.
\end{theorem}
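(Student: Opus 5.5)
We build an explicit instance with a global priority list, mirroring where the upper-bound argument of Theorem~\ref{poa:3-2m} is tight. That argument needs three things: $m$ jobs of $D$ (rate $1$), each first on its own machine in the $NE$; fixed-length jobs that bring every machine up to load about $2-\frac{2}{m}$; and a final job $k$ of length about $OPT$. We normalize so that $OPT=1$, and certify stability by Theorem~\ref{thm:ls_on_global}: the profile will be an LS outcome.

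\textbf{Jobs.} Let $\epsilon'>0$ be a small parameter, fixed at the end as a function of $\epsilon$ and $m$. The jobs are:
\begin{itemize}
\item $m$ jobs $v_1,\ldots,v_m$ with $p_{v_j}(t)=\max\{1-t,\epsilon'\}$; since $a_{v_j}=1=s$, these are delay-averse;
\item a set $W$ of many tiny fixed-length jobs, each of length $\delta\ll\epsilon'$, with total length $(m-2)(1-\epsilon')$;
\item a single fixed job $k$ with $p_k(t)=1-\epsilon'$.
\end{itemize}
The global priority list is $\pi=(v_1,\ldots,v_m,W,k)$.

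\textbf{The $NE$.} Run Algorithm~\ref{alg:LS}.
\begin{enumerate}
\item Each $v_j$ starts at time $0$ on its own machine, so it has length $1$ and completes at time $1$.
\item The tiny $W$-jobs are then spread almost evenly, up to $\delta$. Each machine reaches load about $1+\frac{(m-2)(1-\epsilon')}{m}$.
\item Finally $k$ joins a least-loaded machine.
\end{enumerate}
The resulting makespan is
$$1+\frac{(m-2)(1-\epsilon')}{m}+1-\epsilon' \;=\; 3-\frac{2}{m}-O(\epsilon').$$
By Theorem~\ref{thm:ls_on_global} this profile is a $NE$.

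\textbf{The optimum.} Schedule as follows.
\begin{itemize}
\item Put $v_1$ alone on $M_1$; it completes at $1$.
\item Put $k$ on $M_2$.
\item Distribute the $W$-jobs on $M_3,\ldots,M_m$ so that each of these machines has load exactly $1-\epsilon'$. This uses total $(m-2)(1-\epsilon')$, which is exactly $|W|$'s total length.
\item Place $v_2,\ldots,v_m$ last on $M_2,\ldots,M_m$. Each starts at $1-\epsilon'$, has length $\epsilon'$, and completes at $1$.
\end{itemize}
Thus $OPT\le 1$. It is also at least $1$, because each $v_j$ completes no earlier than $\min\{1,\,t+\epsilon'\}$. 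Hence $PoA(G)\ge 3-\frac{2}{m}-O(\epsilon')$.

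\textbf{Main obstacle.} The delicate step is the granularity of $W$. The $W$-jobs must be small enough that LS balances them to within $\delta$, and their total must divide exactly into $m-2$ bins of size $1-\epsilon'$ in the optimum. I would first choose the number of $W$-jobs as a multiple of $m(m-2)$. Then the $W$-jobs split into $m$ equal groups in the $NE$ and $m-2$ equal groups in the optimum, and $\delta$ becomes irrelevant. The LS tie-breaking is also harmless, since any LS outcome is a $NE$ and we may pick the round-robin one.

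\textbf{Fixing $\epsilon'$.} With this choice the $NE$ makespan is exactly $3-\frac{2}{m}-\epsilon'\bigl(\frac{m-2}{m}+1\bigr)$. Setting $\epsilon'=\frac{m}{2m-2}\,\epsilon$ gives $PoA(G)=3-\frac{2}{m}-\epsilon$. Equality, rather than only $\ge$, holds because the upper bound of Theorem~\ref{poa:3-2m} is not needed: the computed $NE$ cost over $OPT=1$ is exact. This matches the form of the statement.
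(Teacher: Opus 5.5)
Your $NE$ computation is correct, but the optimum you exhibit is not a profile of this game. In the model, a profile is only an assignment of jobs to machines. Each machine processes its jobs in the order of the global list $\pi$, and $OPT(G)$ is the minimum over such profiles. The paper's own construction is built around this: its optimal $V$-jobs sit behind the $U_1$-jobs, which precede them in $\pi$.

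With your list $\pi=(v_1,\ldots,v_m,W,k)$, every $V$-job precedes every fixed job, so $v_2,\ldots,v_m$ cannot be placed last on $M_2,\ldots,M_m$. On each of these machines $v_j$ runs first, starts at time $0$, has length $1$, and the machine finishes at $2-\epsilon'$. In general, a machine holding $c\ge 1$ $V$-jobs and fixed load $F_j$ finishes at $1+(c-1)\epsilon'+F_j$. If $r$ machines hold $V$-jobs, summing over machines gives $m\cdot C_{max}\ge r+(m-1)(1-\epsilon')$. Hence:
\begin{itemize}
\item for $r\ge 2$ the makespan is at least $1+\frac{1-(m-1)\epsilon'}{m}$;
\item for $r=1$ it is at least $1+(m-1)\epsilon'$;
\item in every profile some machine finishes strictly after time $1$: either some machine holds two $V$-jobs, or every machine holds one and the positive fixed work runs after time $1$.
\end{itemize}
So $OPT>1$. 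For small $\epsilon'$ in fact $OPT=1+(m-1)\epsilon'$, attained by stacking all $V$-jobs on one machine, putting $k$ alone, and splitting $W$ into $m-2$ blocks of $1-\epsilon'$. With your choice $\epsilon'=\frac{m}{2m-2}\epsilon$, the ratio is $(3-\frac{2}{m}-\epsilon)/OPT<3-\frac{2}{m}-\epsilon$. For small $\epsilon$ it equals $\frac{3-\frac{2}{m}-\epsilon}{1+\frac{m}{2}\epsilon}$. So the claimed exact equality fails.

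The instance itself survives once $OPT$ is computed correctly. The ratio becomes $\frac{3-\frac{2}{m}-\frac{2m-2}{m}\epsilon'}{1+(m-1)\epsilon'}$, which still tends to $3-\frac{2}{m}$. As in the paper, $\epsilon'$ must be chosen by setting this whole ratio equal to $3-\frac{2}{m}-\epsilon$, not by matching the numerator alone.

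Your construction is in fact the paper's with $U_1$ deleted; your $W$ and $k$ play the roles of $U\setminus U_1$ and $a$. The paper puts one $U_1$-job before each $v_j$ in the $NE$, and stacks all $V$-jobs behind $U_1$ on a single machine in the optimum. But a $V$-job starting by time $1-\epsilon'$ always completes at time $1$, so these jobs change neither makespan, and the paper's optimum is $1+(m-1)\epsilon'$ too. The real obstacle was therefore not the granularity of $W$, but that the global list fixes the processing order in the optimum as well.

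Finally, equality also requires that no other $NE$ is worse. This holds for two reasons. With a global list, each job's completion time depends only on the jobs preceding it in $\pi$, so every $NE$ is an LS outcome in the order of $\pi$. And, as in Claim~\ref{cl:uniformP}, the sorted load vector produced by LS on identical machines does not depend on tie-breaking.
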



\begin{proof}
Given $m$ and $\epsilon$, let $\epsilon'$ be a small constant such that $\frac{3-\frac{2}{m}-2\epsilon'}{1+(m-1)\epsilon'}=3-\frac 2 m -\epsilon$. We define the following game, $G$, played on $m$ identical machines. An example for $m=3$ is given in Figure~\ref{fig:global_3m-1}.

The set $N$ consists of three sets of jobs. The first set includes a single job, $a$, for which $p_a(t)=1-\epsilon'$. The second set is $U=\{u_1,\ldots,u_{(m-1)m}\}$. For every $u_i \in U$, let $p_{u_i}(t)=\frac{1-\epsilon'}m$. The third set is $V=\{v_1,\ldots,v_m\}$. For every $v_i \in V$, let $p_{v_i}(t) = \max\{1-t,\epsilon'\}$. Note that Job $a$ as well as the $U$-jobs have fixed processing times, while the $V$-jobs have decreasing processing times.

Let $U_1=\{u_1,u_2,\ldots,u_{m}\}$. The global priority list is $\pi=(U_1,V,U\setminus U_1,a)$. Whenever the list includes a set, the set elements are ordered according to their index.
    
In a possible optimal schedule, the first machine processes Job $a$. For every $2 \le j \le m$, machine $j$ processes the $m$ jobs $\{u_{m(j-2)+1},\ldots,u_{m(j-1)}\}$. In addition, machine $M_2$ processes all $m$ V-jobs.
Note that the $m$ $U$-jobs on $M_2$ are the jobs of $U_1$, that precede the $V$-jobs in $\pi$, therefore, each of the $V$-jobs starts its processing at time at least $1-\epsilon'$, and has length $\epsilon'$. The makespan of the optimal schedule is $\frac{m\cdot (1-\epsilon')}{m}+m\cdot\epsilon'=1+(m-1)\cdot\epsilon'=1+\frac{\epsilon}{3-\epsilon}$  
This schedule is optimal, since the fixed-length jobs are balanced, and each of the $V$-jobs has its minimal possible length. It is easy to see that any schedule in which some $V$-jobs are on the same machine with a job in $U \setminus U_1$ has a higher makespan.

On the other hand, the following schedule, $\sigma$, is a $NE$: 
for every $1 \le j \le m$, machine $j$ processes job $u_j$, followed by job $v_j$, as well as the set $\{u_{mk +j}\}_{1 \le k \le m-2}$ which includes $m-2$ jobs from $U \setminus U_1$. In addition, the first machine processes also Job $a$. Each of the $V$-jobs begins its processing at time $\frac{1-\epsilon'}{m}$ and therefore has length $1-\frac{1}{m}+\frac{\epsilon'}{m}$.  
The makespan of this schedule is obtained on the first machine and is equal to $(m-1)\cdot\frac{1-\epsilon'}{m}+1-\frac{1}{m}+\frac{\epsilon'}{m}+1-\epsilon'=3-\frac{2}{m}-2\epsilon'(1-\frac 1 m)$.

We show that $\sigma$ is a $NE$. Clearly, the jobs of $U_1$ are all first, and therefore stable. Every $V$-job is assigned after a single job of $U_1$, and since jobs of $U_1$ have the highest priority on all the machines, no $V$-job can improve its assignment. The remaining $m(m-2)$ $U$ jobs have a lower priority than the $V$-jobs, and are assigned in a balanced way ordered by their indices; therefore, no $U$-job has a beneficial deviation. Finally, job $a$ has the lowest priority, and will have the same completion time in all machines.

We conclude that $PoA(G)=\frac{3-\frac{2}{m}-2\epsilon'}{1+(m-1)\epsilon'}=3-\frac 2 m -\epsilon$.
\begin{figure}[h]
\centering
\includegraphics[width=1\textwidth]{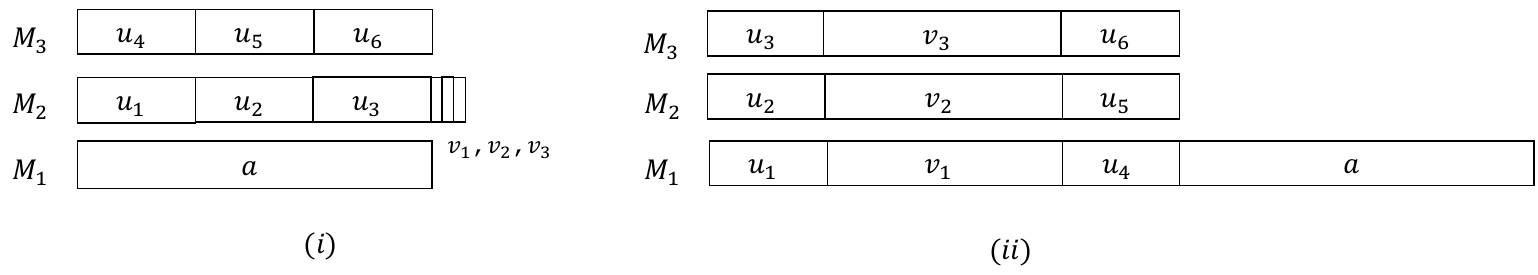}
\caption{A game with a global priority list $\pi=(u_1,u_2,u_3,v_1,v_2,v_3,u_4,u_5,u_6,a)$ and $3$ machines. $(i)$ an optimal schedule with makespan $1+O(\epsilon)$, $(ii)$ a $NE$ schedule with makespan $(3-\frac{2}{m}-O(\epsilon))$.}
\label{fig:global_3m-1}
\end{figure}

\end{proof}

\section{Efficient Coordination Mechanisms}
\label{sec:efficient}

In this section we consider the problem of developing local scheduling policies that guarantee low equilibrium inefficiency of games in which jobs have time-dependent processing times. Recall that for jobs with fixed processing time, the idea of coordination mechanisms was first introduced in~\cite{CKN09}, who used LPT policy to reduce the $PoA$.




\subsection{Scheduling Policies for Negative Deterioration Rates}
\label{sec:SDR}

In this section we consider games played on identical machines, all having speed $s=1$ and delay-averse jobs with negative deterioration rates, that is, the processing time function is of type $p_i(t)=max(b_i -a_i\cdot t ,\tau)$, where $\tau$ is fixed for all jobs, and $0\le a_i\le 1$. Recall that delay-averse jobs have an incentive to start their processing as early as possible. 

In Theorem~\ref{poa:3-2m} we showed that the $PoA$ for games with global priority list and non-increasing completion time is $3-\frac{2}{m}$. We present two coordination mechanisms with a lower performance guarantee. 
\subsubsection{SDR Scheduling Policy}
We analyze the policy \emph{Smallest Deterioration Rate} (SDR) that sorts the jobs in a non-decreasing order of their negative deterioration rate, $(a_1\le a_2\le \ldots \le a_n)$, and we prove that with this policy the  $PoA$ is reduced to $2$ independent of the number of machines.  


\begin{theorem}
\label{th:sdr2}
    $PoA(G^{SDR})\le 2$.
\end{theorem}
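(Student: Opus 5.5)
The plan is to reuse the two-term framework from Theorems~\ref{thm:poa_general} and~\ref{poa:3-2m} and to sharpen the bound on the total processing time. SDR is a global priority list, so by Theorem~\ref{thm:ls_on_global} and Observation~\ref{obs:neIsLs} every $NE$ $\sigma$ is an outcome of Algorithm~\ref{alg:LS} run in SDR order. If $k$ is the job that determines $C_{max}(\sigma)$, the same counting as before gives
$$C_{max}(\sigma)\le \frac{1}{m}\sum_i p_i^{\sigma}+\frac{m-1}{m}\,p_k^{\sigma}.$$
Since jobs are delay-averse, $p_k^{\sigma}\le p_k^{\sigma^*}\le OPT$, exactly as before. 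It therefore suffices to prove $\sum_i p_i^{\sigma}\le (m+1)\cdot OPT$, or any bound that, combined with a refined estimate of $p_k^\sigma$, yields $2\cdot OPT$. This total-length bound is the crux.

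To bound the total length, I will write each processing time as $p_i^{\sigma}=b_i-a_iS_i^{\sigma}$ when the job is above the common threshold $\tau$, and $\tau$ otherwise. Then
$$\sum_i p_i^{\sigma}-\sum_i p_i^{\sigma^*}\le \sum_i a_i\bigl(\min(S_i^{\sigma^*},R_i)-\min(S_i^{\sigma},R_i)\bigr), \qquad R_i=\frac{b_i-\tau}{a_i}.$$
The work-based density view is used as follows. On each machine I reparametrize by cumulative work $w$ rather than time; because machines have unit speed and are non-idle, time equals work. I then define $\mathcal{A}_\sigma(w)$ as the total deterioration rate of jobs not yet started when the total work done across all machines reaches $w$. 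The quantity $\sum_i a_iS_i$ then becomes, up to a factor $1/m$ coming from LS balance, an integral $\int \mathcal{A}_\sigma(w)\,dw$.

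The key claim is that SDR is a density maximizer: because LS processes jobs in non-decreasing $a_i$, at every work level the set of waiting jobs in $\sigma$ consists of the largest-rate jobs. Hence $\mathcal{A}_\sigma(w)\ge\mathcal{A}_{\sigma^*}(w')$ at matching work levels. I would prove this by an exchange or rearrangement argument: among all orderings in which the same multiset of lengths is processed on balanced machines, scheduling small rates first maximizes $\sum_i a_iS_i$. Two errors must be controlled. First, $\sigma$ may complete more work than $\sigma^*$, so its work axis is longer. Second, the LS balance introduces an $(m-1)/m$ slack. Each waiting job contributes rate at most $1$, and all the extra work of $\sigma$ is generated by jobs whose start is shifted by at most $OPT$. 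Given this, I aim to show that the excess $\sum_i a_i(S_i^{\sigma^*}-S_i^{\sigma})$ is at most the work of $\sigma^*$ beyond one machine-load, giving $\sum_i p_i^{\sigma}\le \sum_i p_i^{\sigma^*}+OPT\le (m+1)\cdot OPT$.

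I expect the main obstacle to be the density-maximizer comparison itself. The two schedules run on different work axes, since lengths differ, and the threshold $\tau$ caps the gains. I will first try to prove a pointwise inequality, $\mathcal{A}_\sigma(w)\ge \mathcal{A}_{\sigma^*}(w)$ for all $w\le m\cdot OPT$, by induction over the LS steps, handling thresholded jobs by treating them as if they had rate $0$ from $R_i$ onward. If the pointwise inequality fails, the fallback is to bound $p_k^{\sigma}$ more tightly. Since $k$ is last and has the largest rate among late jobs, $p_k^{\sigma}\le OPT-S_k^{\sigma^*}$ can be traded off against the total-length term, giving $\frac1m\sum_i p_i^\sigma+\frac{m-1}{m}p_k^\sigma\le 2\cdot OPT$ directly.
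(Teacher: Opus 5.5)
\textbf{There is a genuine gap.} Your crux claim $\sum_i p_i^{\sigma}\le (m+1)\cdot OPT$ is false for every $m\ge 3$. With it falls the whole route through the Graham-type inequality.

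Modify the instance of Theorem~\ref{th:negativeCoordination} as follows:
\begin{itemize}
\item Use $m-1$ large jobs $v_1,\dots,v_{m-1}$, each with $p_v(t)=\max\{B-at,\tau\}$.
\item Use only as many small jobs as fit on \emph{one} machine up to $T(k)\approx B$.
\item Take the priority list $(U,v_1,\dots,v_{m-1})$. This is a legal SDR tie-break, or give the $v_j$'s a slightly larger rate.
\end{itemize}
Then $OPT=B$: each $v_j$ runs alone, and all small jobs share the remaining machine. In the LS outcome, the small jobs fill all $m$ machines to $T(k/m)\to B$. Each $v_j$ then runs for about $B$ on its own machine, so $\sum_i p_i^{\sigma}\to (2m-1)B$.

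Concretely, take $m=3$, $a=0.009$, $b=10$, $B=1111$, $\tau=0.001$, $k=1016$. Then $\sum_i p_i^{\sigma}\approx 5380>4444=(m+1)\cdot OPT$. The NE makespan is about $2161\le 2\cdot OPT$, consistent with the theorem.

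Worse, the right-hand side $\frac1m\sum_i p_i^\sigma+\frac{m-1}{m}p_k^\sigma$ evaluates to about $2528>2\cdot OPT$ on this instance. So the inequality $C_{max}(\sigma)\le \frac1m\sum_i p_i^\sigma+\frac{m-1}{m}p_k^\sigma$ itself discards too much. It charges the long tails of the $m-1$ large jobs that are still running when the last job starts. In general it can only give $3-\frac2m$, the bound you already have for arbitrary global lists. Your fallback of sharpening the estimate of $p_k^\sigma$ therefore cannot rescue the argument either. Separately, your comparison of $\mathcal{A}_\sigma$ with $\mathcal{A}_{\sigma^*}$ is also not secure, since $\sigma^*$ is neither balanced nor LS-like.

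The missing idea is to avoid bounding the total work altogether. The paper bounds only the start time $S_n$ of the last job, that is, the work $mS_n$ performed before it starts:
\begin{itemize}
\item It writes that work as $\sum_{i<n} b_i$ minus the accumulated deterioration $\frac1m\int \mathcal{A}_{\sigma}$.
\item It uses that SDR maximizes this accumulated deterioration among non-idle schedules.
\item Comparing with the total work of $\sigma^*$, which is at most $m\cdot OPT$, it concludes $S_n\le OPT$.
\end{itemize}
The makespan then follows from $C_n=b_n+(1-a_n)S_n$ together with $b_n\le OPT$. The latter holds because a delay-averse job completes no earlier than $b_i$ in any schedule. Work done after time $S_n$ is never charged, and in the example above that is exactly the $(m-1)B$ that breaks your bound. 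Your density-maximizer idea is the right tool, but it has to be aimed at $S_n$, not at $\sum_i p_i^\sigma$.
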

\begin{proof}
Let $\pi_{SDR}=(1,2,\dots,n)$ be the priority list corresponding to SDR policy, that is,  $a_1 \le a_2 \le \dots \le a_n$.

For any priority list $\pi$, let $\sigma_\pi$ be a profile generated by running List Scheduling (LS) on $\pi$. In $\sigma_\pi$, all $m$ machines are busy as long as some job is not assigned. Let $\sigma_{SDR}$ be a profile generated by $\pi_{SDR}$. W.l.o.g., let $n$ be the last job to start in $\sigma_{SDR}$. In addition, assume that no job reaches the threshold processing time $\tau$ (see comment below).
Since $\sigma_{SDR}$ follows LS, all $m$ machines are busy during $[0, S_n^{\sigma_{SDR}}]$. 

For any job $k \in \{1, \dots, n\}$, let $W_k(\sigma)$ be the total work done by the $m$ machines in profile $\sigma$ until the moment job $k$ begins its execution. In a system without idle time, $W_k(\sigma) = m \cdot S_k^\sigma$.
Note that for $k \le m$, $W_k(\sigma) = 0$ (as the first $m$ jobs start at $t=0$). 
For any schedule $\sigma$ and workload $w$, let $B_{\sigma}(w)$ be the set of jobs whose processing has not yet begun at cumulative workload $w$, that is $i \in B_{\sigma}(w)$ iff $W_i(\sigma) >w$.

Next, we define the sum of deterioration rates of jobs that have not started their processing, as a function of $w$, the work already done. 
%
The {\em work-based deterioration density} is defined as follows for all $w \ge 0$:
$$\mathcal{A}_\sigma(w) =
\sum_{i: W_i(\sigma) > w} a_i = \sum_{i \in B_{\sigma}(w)} a_i.$$

This is a non-increasing step function where the "stairs" occur at the points $W_k(\sigma)$.

For example, at time $t=0$ the first $m$ jobs starts their processing. When one of them completes, job $m+1$ starts its processing. For $0<w \le W_{m+1}(\sigma)$, we have that $\mathcal{A}_\sigma(w) = \sum_{i: W_i(\sigma) > w} a_i= \sum_{i=m+1}^n a_i$.
After job $m+1$ starts its processing, and till job $m+2$ starts its processing, that is, for  for $ W_{m+1}(\sigma) <w \le  W_{m+2}(\sigma)$,  $\mathcal{A}_\sigma(w) = \sum_{i: W_i(\sigma) > w} a_i= \sum_{i=m+2}^n a_i$. Indeed, during this work interval, every job $i \in \{m+2,\ldots,n\}$ is shortened at rate $a_i$.

\begin{claim}
\label{cl:SDRdensity}
For any $0 \le w \le W_n(\sigma_{SDR})$ the SDR policy maximizes the cumulative deterioration density $\int_0^w \mathcal{A}_{\sigma}(u) du$, among all profiles with no intended idles.
\end{claim}

\begin{proof}
We define $w_k = W_k(\sigma_{SDR})$ as the total cumulative machine work performed up to the moment the $k$-th job begins its processing in the SDR schedule. Note that for $1 \le k \le m$, $w_k = 0$. We prove the claim by induction on $k$. Specifically, we show that for every $k=1,\ldots,n$, there exists a profile that maximizes the integral $\int_0^{w} \mathcal{A}_{\sigma}(u) du$ for all $w \le w_k$ and processes the first $k$ jobs in SDR order.

{\bf Base Case:} $w \le w_{m+1}$. In SDR, the first $m$ jobs assigned when $w=0$ are those with the minimal $a_i$. Thus, for any $w \le w_{m+1}$, the set of jobs in $B_{SDR}(w)$ is $\{m+1, m+2, \dots, n\}$, which contains the $n-m$ jobs with the maximal $a_i$ values in the system.
For any other schedule $\sigma \neq \sigma_{SDR}$, with no-intended idle, $m$ jobs are processed at $w=0$. Therefore, $n-m$ jobs must be in $B_{\sigma}(w)$. Since SDR chooses the $m$ smallest $a_i$ values to process, we get, $\mathcal{A}_{SDR}(u) \ge \mathcal{A}_\pi(u)$ for all $u \in [0, w_{m+1}]$ .

Therefore, $\int_0^w \mathcal{A}_{SDR}(u) du \ge \int_0^w \mathcal{A}_\sigma(u) du$ for all $w \in [0, w_{m+1}]$.

{\bf Induction Step:}  Assume the claim holds for all $w \le w_k$. We show it holds from $w_k$ to $w_{k+1}$. By the induction hypothesis, there exists a profile $\sigma$ that maximizes the cumulative deterioration density and agrees with SDR on the first $k-1$ jobs. Thus, $B_{\sigma}(w_k)=B_{SDR}(w_k)$. 

At workload $w_k$, a machine finishes a job, and SDR selects the job $k$ with the smallest deterioration rate from $B_{SDR}(w_k)$.
For any $u \in [w_k, w_{k+1}]$, $B_{SDR}(u)$ contains the $n-k$ jobs with the largest possible sum of $a_i$ among all unprocessed jobs. Consider any other profile $\sigma'$ that deviates from SDR at $w_k$ by processing a job $J_\ell$ with $a_\ell > a_{k}$. For all $u$ such that $J_\ell$ is being processed in $\sigma'$ but $J_{k}$ is being processed in $SDR$, we have $\mathcal{A}_{\sigma'}(u) = \sum_{i \in B_{\sigma'}(u)} a_i = (\sum_{i \in B(w_k)} a_i) - a_\ell < (\sum_{i \in B(w_k)} a_i) - a_{k} = \mathcal{A}_{SDR}(u)$. 

Since $\mathcal{A}_{SDR}(u) \ge \mathcal{A}_{\sigma'}(u)$ for all $u \in [w_k, w_{k+1}]$, the induction step is proved, and we conclude that SDR policy maximizes  $\int_0^w \mathcal{A}_{\sigma}(u) du$ for all $0 \le w \le W_n(\sigma_{SDR})$.
\end{proof}

Let $\sigma^*$ be an optimal profile, and let $OPT = C_{max}(\sigma^*)$.

\begin{claim}
\label{cl:S_nleOPT}
    $S_n^{\sigma_{SDR}} \le OPT$.
\end{claim}

\begin{proof}
Let $W_{total}(\sigma^*)$ be the total processing time of all jobs $\sigma^*$.  In the optimal schedule $\sigma^*$, $W_{total}(\sigma^*) \le m \cdot OPT$. 

For the optimal schedule, it holds that \begin{equation}
    \label{eq:OPTB}
    W_{total}(\sigma^*) = \sum_{i=1}^n b_i - \sum_{i=1}^n a_i S_i^{\sigma^*} =  \sum_{i=1}^n b_i - \frac{1}{m} \int_0^{W_{total}(\sigma^*)} \mathcal{A}_{\sigma^*}(u) du.
\end{equation}
    This is true since job $i$ contributes to the integral $a_i \cdot W_i = a_i \cdot (m S_i^{\sigma^*})$.

Similarly, for $\sigma_{SDR}$, 
\begin{equation}
    \label{eq:SDR_B}
    m \cdot S_n^{\sigma_{SDR}} = \sum_{i=1}^{n-1} p_i(S_i^{\sigma_{SDR}}) = \sum_{i=1}^{n-1} (b_i - a_i S_i^{\sigma_{SDR}}) = \sum_{i=1}^{n-1} b_i - \frac{1}{m} \int_0^{m S_n^{\sigma_{SDR}}} \mathcal{A}_{\sigma_{SDR}}(u) du.
\end{equation}

Assume by contradiction that $S_n^{\sigma_{SDR}} > OPT$. This implies $W_n(\sigma_{SDR})=m \cdot S_n^{\sigma_{SDR}} > m \cdot OPT \ge W_{total}(\sigma^*)$. Combining the fact that $\mathcal{A}_{\sigma_{SDR}}(u)$ is non negative with Claim~\ref{cl:SDRdensity} for $w=W_{total}(\sigma^*)$, we get, $$\frac{1}{m} \int_0^{m \cdot S_n^{\sigma_{SDR}}} \mathcal{A}_{\sigma_{SDR}}(u) du  \ge \frac{1}{m} \int_0^{W_{total}(\sigma^*)} \mathcal{A}_{\sigma_{SDR}}(u) du > \frac{1}{m} \int_0^{W_{total}(\sigma^*)} \mathcal{A}_{\sigma^*}(u) du.$$

Adding $m \cdot S_n^{\sigma_{SDR}} >  W_{total}(\sigma^*)$, we obtain,
$$m \cdot S_n^{\sigma_{SDR}} + \frac{1}{m} \int_0^{m \cdot S_n^{\sigma_{SDR}}} \mathcal{A}_{\sigma_{SDR}}(u) du > W_{total}(\sigma^*) + \frac{1}{m} \int_0^{W_{total}(\sigma^*)} \mathcal{A}_{\sigma^*}(u) du.$$

Combining with Equations~(\ref{eq:OPTB}) and (\ref{eq:SDR_B}), we conclude $\sum_{i=1}^{n-1} b_i > \sum_{i=1}^n b_i$, which is a contradiction since $b_n \ge 0$. Thus, $S_n^{\sigma_{SDR}} \le OPT$.
\end{proof}

The makespan of the NE produced by SDR is $C_{max}(\sigma) = S_n^{\sigma_{SDR}} + p_n(S_n^{\sigma_{SDR}}) = S_n^{\sigma_{SDR}} + (b_n - a_n S_n^{\sigma_{SDR}}) = b_n + S_n^{\sigma_{SDR}}(1 - a_n)$. Recall that  $a_n \in [0, 1]$ and job $n$ is delay-averse, thus, $b_n \le OPT$. By Claim~\ref{cl:S_nleOPT}, $S_n^{\sigma_{SDR}} \le OPT$. We conclude $C_{max}(\sigma) \le OPT + OPT(1 - a_n) \le 2 \cdot OPT$. Thus, $PoA(G) = \frac{C_{max}(\sigma)}{OPT} \le 2.$
\end{proof}
\posA

{\bf Modified policy for instances where jobs may reach their threshold length:} Our analysis above assumes that no job reaches its threshold processing time $\tau$. This assumption is crucial for Equation~\ref{eq:OPTB} since if a job reaches $\tau$,  its effective deterioration rate drops to zero, breaking the density function $A_\sigma(w)$.
However, it is possible to maintain a $PoA$ of $2$ even if jobs do reach $\tau$, by designing a dynamic mechanism that generalizes the $SDR$ idea. In this dynamic mechanism, we define the \emph{actual deterioration rate} of a job $k$ at time $t$ as its real  shrinkage per unit of time. Formally:

Let $t$ be the current time (a machine becomes available), and let $t'$ be the next decision point (where a current running job will complete). The \emph{actual deterioration rate} of an unassigned job $k$ at this interval is defined as $a_k^{actual}=\frac{p_k(t)-p_k(t')}{t'-t}$. Note that $a_k^{actual} = a_k$ if job $k$ does not reach its threshold length by time $t'$ and $a_k^{actual} < a_k$ if it does.

The first $m$ jobs in this policy are the jobs with the minimal rate $a$. Indeed, when starting at $t=0$, the threshold length is not relevant. Once $m$ jobs are assigned and all the machines are busy, and the time $t'$ in which a machine will become available can be calculated. The mechanism dynamically selects the unassigned job with the minimal $a_k^{actual}$ with respect to $t'$. By dynamically selecting the job with the minimal $a_k^{actual}$ at every decision point, the density of the waiting jobs is maximized with respect to the true shrinkage. Consequently, under this dynamic policy, Claim~\ref{cl:SDRdensity} holds for every workload $w$ that corresponds to a "stair" (completion of some processed job). Formally, for any $0 \le w \le W_n(\sigma_{SDR}^{actual})$, if some job completes at workload $w$, then the dynamic policy maximizes the cumulative deterioration density $\int_0^w \mathcal{A}_{\sigma}(u) du$, among all profiles with no intended idles. We conclude that SDR policy can be generalized to a dynamic policy with the same performance guarantee.

Next, We show that the upper bound analysis is tight independent of $m$.  Note that for two machines, algorithm List-Scheduling is known to provide $PoA \le 1.5$ for fixed-length jobs. We therefore find the following lower bound to be surprising. 
\begin{theorem}
    \label{th:negativeCoordination}
    For any $\delta>0$, $m>1$, there exists a game $G^{SDR}$ on $m$ identical machines with SDR policy, such that $PoA(G^{SDR})\ge2-\delta$.
\end{theorem}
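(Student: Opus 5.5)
The plan is to build an explicit family of instances, parametrised by a small $\epsilon>0$, in which the SDR equilibrium is tight for both inequalities used in the upper bound of Theorem~\ref{th:sdr2}. That proof ended with $C_{max}(\sigma)=b_n+S_n^{\sigma_{SDR}}(1-a_n)$, $b_n\le OPT$ and $S_n^{\sigma_{SDR}}\le OPT$ (Claim~\ref{cl:S_nleOPT}). So the instance must make the last job $n$ satisfy, simultaneously,
\begin{itemize}
\item $b_n\approx OPT$,
\item $a_n\approx 0$,
\item $S_n^{\sigma_{SDR}}\approx OPT$.
\end{itemize}
I would normalise $OPT\approx 1$. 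Since the equilibrium is an LS outcome on $\pi_{SDR}$ (Theorem~\ref{thm:ls_on_global}), I only need to compute one LS run, with ties broken adversarially, and exhibit one schedule of makespan about $1$.

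The instance will have two groups. The first group is a single long job $L$ with $b_L=1$ and a small rate $a_L=\epsilon$; it should be last in $\pi_{SDR}$. The second group, $J$, consists of many jobs with rates in $[0,\epsilon]$, placed before $L$ in SDR order. They should fill all $m$ machines up to time about $1$ in the LS run, so that $L$ starts at about $1$ and completes at about $2-\epsilon$. In the optimal schedule, $L$ runs alone from time $0$ on one machine. The $J$-jobs go on the other $m-1$ machines, ordered so that they are processed much later than in $\sigma_{SDR}$. Their total processing time should then drop from about $m$ (in $\sigma_{SDR}$) to about $m-1$ (in $\sigma^*$), which gives makespan about $1$ on every machine.

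The main obstacle is realising the drop from about $m$ to about $m-1$ in total $J$-work using only rates $a_i\le\epsilon$ and start times at most about $1$, since each job shrinks by at most $a_i\cdot t$. The mechanism I would try first is the common threshold $\tau$. The $J$-jobs get tiny basic lengths $b$ and rate $\epsilon$, so any job starting after time $b/\epsilon$ already has length $\tau\approx 0$.

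To make the LS run blind to this, the SDR tie-breaking (all $J$-jobs share the rate $\epsilon$) should be used adversarially. A second ingredient is a set of fixed-length ($a=0$) filler jobs with higher SDR priority. These fillers occupy the machines in $\sigma_{SDR}$ exactly while the $J$-jobs would have shrunk, so that in $\sigma_{SDR}$ the $J$-jobs start early at full length. In $\sigma^*$, instead, the fillers are packed densely on few machines and the $J$-jobs start late at length about $\tau$.

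I would then check three things:
\begin{enumerate}
\item the LS loads in $\sigma_{SDR}$ are all at least about $1$ when $L$ is reached;
\item the optimal schedule has makespan $1+O(\epsilon)$, since its total work is at most $m(1+O(\epsilon))$ and it is balanced;
\item letting $\epsilon\to 0$ gives ratio $(2-\epsilon)/(1+O(\epsilon))\ge 2-\delta$.
\end{enumerate}
If the threshold trick does not generate enough slack, the fallback is to make the $J$-jobs' rates close to $a_L$ (still small). I would then redo the accounting through the work-density identity~(\ref{eq:OPTB}), searching for an instance where the integral of $\mathcal{A}_{\sigma^*}$ exceeds that of $\mathcal{A}_{\sigma_{SDR}}$ by about $m$ over the relevant work range. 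This is consistent with Claim~\ref{cl:SDRdensity}, because that maximality only holds up to $W_n(\sigma_{SDR})$ and among profiles without intended idles.
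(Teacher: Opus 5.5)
Your target for what the instance must achieve is right: $b_n\approx OPT$, $a_n\approx 0$, $S_n^{\sigma_{SDR}}\approx OPT$, and the $J$-work dropping from about $m\cdot OPT$ to about $(m-1)\cdot OPT$. But there is a genuine gap: the proposal never fixes an instance, and the mechanism you propose for the drop does not work.

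\begin{itemize}
\item Fillers with $a_i=0$ precede every $J$-job in $\pi_{SDR}$. In the LS run each machine therefore processes its fillers first, and the $J$-jobs start \emph{later}, i.e.\ shorter. Fillers cannot make the $J$-jobs in $\sigma_{SDR}$ ``start early at full length.''
\item You also cannot order the $J$-jobs in $\sigma^*$ so that they run later. Within each machine the order is dictated by $\pi_{SDR}$, so your only freedom in $\sigma^*$ is the assignment.
\end{itemize}
As a result, your checks 1--3 are unsupported. The fallback via the work-density identity is a search direction, not an argument.

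The missing idea is load saturation, which makes any asymmetry in per-job shrinkage unnecessary. This is the paper's construction:
\begin{itemize}
\item Take all $J$-jobs identical, $p(t)=\max\{b-at,\tau\}$, with the same small rate $a$ as $L$, and break ties so that $L$ is last, as SDR allows.
\item A machine running $k$ of them completes at $T(k)=\frac{b}{a}\bigl(1-(1-a)^k\bigr)$, with the threshold negligible. This increases to $b/a$ from below.
\item Set $b/a$ slightly above $B:=b_L$ and choose $k$ with $T(k)\in[B(1-\epsilon),B]$.
\item The profile with $L$ alone and $k$ $J$-jobs on each of the other $m-1$ machines has makespan $B$.
\item LS spreads the $J$-jobs as $k'=\frac{m-1}{m}k$ per machine. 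Since $(1-a)^{k'}=\bigl((1-a)^k\bigr)^{(m-1)/m}$ is still small, $T(k')\ge B(1-\delta)$.
\item So $L$ starts near $B$ and finishes at $T(k')(1-a)+B\approx(2-\delta)B$ as $a\to 0$.
\end{itemize}

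The $J$-jobs shrink heavily in $\sigma_{SDR}$ too, and that is harmless. The total $J$-work is about $m\cdot b/a$ versus $(m-1)\cdot b/a$ simply because each machine's load is nearly independent of how many $J$-jobs it receives. The obstacle you flagged, that each job shrinks by at most $a_i t\le\epsilon$, disappears because there are on the order of $\frac{1}{a}\ln\frac{1}{\delta}$ $J$-jobs per machine. Your own remark that jobs starting after $b/\epsilon$ have length essentially $\tau$ was the right ingredient. It just needs $b/\epsilon\approx OPT$, no fillers, and no attempt to protect the $J$-jobs from shrinking in $\sigma_{SDR}$.
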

\begin{proof}
    Given $m$ and  $\delta$ we define the following game, $G^{SDR}$, played on $m$ identical machines. First, we define two values, $b$ and $B$, such that $B>>b$.
    We construct the game such that in the optimal schedule $\sigma^*$, the makespan is determined by a single job with a large basic processing time, $B$. The remaining $n-1$ jobs have small basic processing times, $b$, and are divided among $m-1$ machines. 
    Hence, there are $k=\frac{n-1}{m-1}\in\N$ small jobs on each of the corresponding $m-1$ machines.
    We denote the set of small jobs by $U={u_1,u_2,\ldots,u_{n-1}}$. The processing time of each job in this set is $p_u(t)=max\{b-a\cdot t,\tau\}$. The processing time of the large job, denoted by $v$, is $p_v(t)=max\{B-a\cdot t,\tau\}$.
    The deterioration rate of all jobs is $a$, where $a\to 0$. Specifically, $a$ is small enough such that no job hits the minimal time $\tau$ in the optimal schedule.
    A priority list that obeys the proposed coordination mechanism is $\pi=(U,v)$.
    
    It is easy to see that the completion time of the $k$-th small job, denoted by $T(k)$, is $\frac{b}{a}\cdot (1-(1-a)^k)$.
    For a small value of $\epsilon >0$, we determine $k$ such that $T(k)\in[B(1-\epsilon),B]$, hence 
\begin{eqnarray}
\label{kSmallJobs}
\frac{\ln\!\left(1-\frac{aB(1-\varepsilon)}{b}\right)}{\ln(1-a)}
\;\le\;
k
\;\le\;
\frac{\ln\!\left(1-\frac{aB}{b}\right)}{\ln(1-a)}.
\end{eqnarray}

This interval is nonempty provided that $\frac{b}{a} > B$, which we assume henceforth.

Thus, each machine processing only small jobs completes by time at most $T(k)\le B$, while the machine processing $v$ completes exactly at time $B$.

Consider now a $NE$ profile $\sigma$. In a profile produced by LS Algorithm~\ref{alg:LS}, the $U$ jobs are assigned in a balanced way among $m$ machines, and the large job $v$ is assigned last on one of the machines. We assume w.l.o.g that $\frac{n-1}{m}=k'$ for some $k'\in \N$.
The completion time of the $k'$-th job is $T(k')=\frac{b}{a}\cdot (1-(1-a)^{k'})$, implying that the completion time of job $v$ is $T(k')+B-a\cdot T(k')=T(k')\cdot (1-a) + B$.

Now, recall that $T(k)\ge B(1-\epsilon)$. Additionally, we assume $\frac{b}{a}=B(1+o(1))$ as $a\to 0$,  implying that $b-aB = o(b)$.

Rearranging $T(k)$, we get $(1-a)^k\le\frac{b-a\cdot B(1-\epsilon)}{b}=\frac{b-a\cdot B +\epsilon\cdot ab}{b}$. Hence $(1-a)^k\le O(\epsilon)$.
Recall that $k'=\frac{m-1}{m}\cdot k$, so $(1-a)^{k'}=((1-a)^k)^{\frac{m-1}{m}}$. Combining with the previous equation, we get that  $(1-a)^{k'}\le O(\epsilon^{\frac{m-1}{m}})$, and $T(k') = B\left(1-O(\epsilon^{\frac{m-1}{m}})\right).
$

For any $\delta>0$, choosing $\epsilon$ sufficiently small ensures that $T(k')\ge B(1-\delta)$.

Concluding the above, we get that $C_{max}(\sigma)\ge  B(1-\delta)\cdot (1-a) +B$. Since $a\to 0$, we obtain $C_{\max}(\sigma)\ge B(2-\delta)$ for any $\delta>0$.
Thus,
$$PoA(G)=\frac{C_{max}(\sigma)}{C_{max}(\sigma^*)}=\frac{B(2-\delta)}{B}\ge 2-\delta.$$
\end{proof}

The following example depicts a game with $4$ identical machines with $PoA>1.99$. A visualization of the game is given in Figure~\ref{fig:poa1.99Neg}
\begin{example}
{\em
    Let $b=10$, $a=0.009$, $B=1111$ and $\tau=0.001$. Consequently, $p_v=max\{10-0.009\cdot t,0.001\}$ and $p_u=max\{1111-0.009\cdot t,0.001\}$.
    To satisfy Equation~(\ref{kSmallJobs}), we choose $k=1016$, $k'=762$, and $n=1016\cdot 3 +1 =3049$.
    The makespan of the optimal profile, where the large job is assigned alone on one of the machine is $1111$, as $T(1016)=\frac{10}{0.009}\cdot (1-(1-0.009)^{1016})\le 1111$
    .

    Consider now the $NE$ profile. The completion time of the $k'$-th small job is $T(762)=\frac{10}{0.009}\cdot (1-(1-0.009)^{762})\le 1110$, and the processing time of job $u$ is $1111-0.009\cdot 1110=1101.1$. 
    Therefore, we the $PoA$ of the game is $\frac{1101.1+1110}{1111}\approx 1.99$.}
\end{example}

\begin{figure}[h]
\centering
\includegraphics[width=1\textwidth]{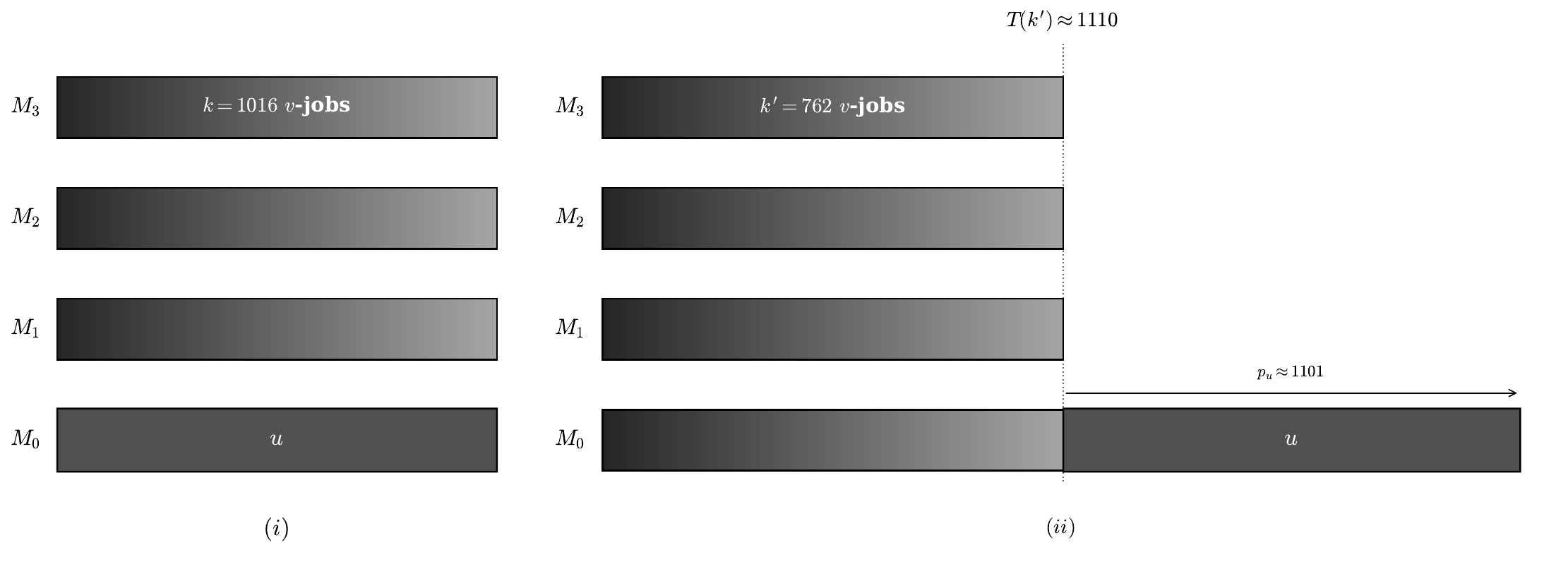}
\caption{(i) An optimal profile with makespan $\approx 1111$.(ii) A $NE$ profile with makespan $\approx 1101.1+1110$.}
\label{fig:poa1.99Neg}
\end{figure}

\subsubsection{LBDR Scheduling Policy}

In this section we suggest and analyze another efficient policy for games $\CidenticalGlobal^{-DA}$. The policy \emph{largest basic-deterioration ratio} (LBDR), prioritize the jobs in non-increasing order of $\frac{b_i}{a_i}$.
We assume w.l.o.g. that for all $i$, $a_i>0$. If a job has $a_i=0$, then we set $a_i =\epsilon$ for a very small $\epsilon$. Note that such jobs will appear first in the resulting priority list, ordered by their basic processing times. 


Consider a game $G^{LBDR}\in\CidenticalGlobal^{-DA}$ played on $m$ identical machines.

\begin{theorem}
       $PoA(G^{LBDR})=\frac{e}{e-1}$ for $m\le3$, and $2-\frac{1}{m}$ for $m>3$.
\end{theorem}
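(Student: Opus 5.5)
The plan is to bound separately the makespan of an LBDR equilibrium and $OPT$, in terms of the parameters of the last job. By Theorem~\ref{thm:ls_on_global}, every $NE$ is an LS outcome on $\pi_{LBDR}$. Write $R_i=b_i/a_i$, so that $R_1\ge R_2\ge\dots\ge R_n$. As in the SDR analysis, I would assume w.l.o.g.\ that the job $n$ determining the makespan is the last in $\pi$: jobs after it do not affect its completion time and can only increase $OPT$. I would also assume no job reaches its threshold $\tau$.

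The basic identity is that a job $h$ starting at time $S$ completes at $C=S(1-a_h)+b_h$. For any reference value $R$,
$$R-C=(R-S)-a_h(R_h-S).$$

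\textbf{Lower bound on $OPT$ (product step).} Take $R=R_n$ and any machine in $\sigma^*$. Every job $h$ on it has $R_h\ge R_n$, hence $a_h\le b_h/R_n$. Since $C$ is non-increasing in $a_h$ for $S\ge 0$, we get $C\ge S(1-b_h/R_n)+b_h$, that is,
$$R_n-C\le (R_n-S)(1-b_h/R_n).$$
Inducting along the machine (starting from $S=0$) gives the statement announced in the introduction:
$$1-\frac{C_j}{R_n}\le\prod_{h\in\sigma^*_j}\Bigl(1-\frac{b_h}{R_n}\Bigr).$$
Setting $x_h=b_h/R_n$ and $X_j=\sum_{h\in\sigma^*_j}x_h$, this yields $C_j\ge R_n(1-e^{-X_j})$. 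By concavity, $OPT\ge R_n\bigl(1-e^{-B/(mR_n)}\bigr)$, where $B=\sum_h b_h$. For small $m$ I would keep the finite product instead of $e^{-X}$, since that is where the difference between $e/(e-1)$ and $2-1/m$ should come from. Separately, $OPT\ge b_n$, because job $n$ is delay-averse and completes no earlier than $b_n$ on any machine.

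\textbf{Upper bound on the $NE$.} All machines are busy on $[0,S_n]$, and each job's actual length is at most $b_h$. Hence $mS_n\le\sum_{h<n}p_h^\sigma$, and by the same product argument applied to the LBDR machines (now as an upper estimate, using $R_h\ge R_n$ to control how much each job has shrunk) I would bound $S_n$ by a function of $B-b_n$ and $R_n$. The makespan is then
$$C_{\max}(\sigma)=S_n\Bigl(1-\frac{b_n}{R_n}\Bigr)+b_n.$$
Normalizing $OPT=1$, the task reduces to maximizing this expression subject to the two lower bounds on $OPT$. For fixed $B$ and $R_n$, the resulting expression is convex in $b_n$, so it suffices to check $b_n=0$ and $b_n=OPT$.

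\begin{itemize}
\item At $b_n=OPT$, the only constraint is $R_n\ge b_n$. Here I expect the bound $S_n\le (B-b_n)/m\le (m-1)/m\cdot OPT$, which gives $2-1/m$.
\item At $b_n=0$, the makespan is essentially $S_n$. I would compare the total shrinkage achieved in $\sigma$ with that in $\sigma^*$ via the product bound; optimizing over $R_n$ should give $1/(1-e^{-1})=e/(e-1)$, with the finite-$m$ product giving a smaller value for small $m$.
\end{itemize}
Taking the maximum of the two cases, $e/(e-1)$ wins for $m\le3$ and $2-1/m$ wins for $m\ge4$.

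\textbf{Lower bounds.} For $2-1/m$ I would use nearly fixed jobs ($a_i=\epsilon$): $m(m-1)$ small jobs plus one job of length $OPT$ placed last, which requires the big job to have the smallest ratio $b/a$. For $e/(e-1)$ I would use many tiny jobs with a common ratio $R$ and rate $a\to0$, so that the product becomes an exponential.

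The main obstacle is the $b_n=0$ case. The difficulty is getting a sharp upper bound on $S_n$ in the LBDR equilibrium that matches the product lower bound on $OPT$, since the naive bound $mS_n\le B$ is far too weak there. I would first try to show that LBDR's total actual work up to $S_n$ is at most that of any schedule in which every job starts by time $S_n$, by an exchange argument on the ratios.
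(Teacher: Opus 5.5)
Your per-machine product inequality $1-C_j/R_n\le\prod_h(1-b_h/R_n)$, obtained from $a_h\le b_h/R_n$, is exactly the paper's lower-bound tool for $\sigma^*$. The way you combine it with the equilibrium breaks down at both endpoints.

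\textbf{The endpoint $b_n=OPT$.} You use $S_n\le\frac1m\sum_{h<n}b_h\le\frac{m-1}{m}OPT$. The second inequality is false, because basic times are not bounded by actual work in $\sigma^*$: jobs shrink there. For $m=2$, put job $n$ with $b_n=1$, $a_n=0.9$ alone on one machine of $\sigma^*$, and fill the other machine up to time $1$ with tiny jobs of ratio $R_n=1/0.9$. Then $OPT=1$, the tiny jobs' basic times sum to $R_n\ln 10\approx 2.56$, and in the LBDR equilibrium $S_n\approx R_n(1-10^{-1/2})\approx 0.76>\frac12$. As $R_n\downarrow OPT$ the excess is unbounded. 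What rescues the bound is the factor $1-b_n/R_n$ multiplying $S_n$ in the makespan, which is small exactly when $S_n$ can be large, so $R_n$ must stay in the estimate. The paper multiplies the per-machine inequalities over all machines and uses $\prod_{h<n}(1-b_h/R_n)\le\exp(-\sum_{h<n}b_h/R_n)\le\exp(-mS_n/R_n)$ to get
\[
S_n\le -R_n\ln\Bigl(1-\frac{OPT}{R_n}\Bigr)+\frac{R_n}{m}\ln\Bigl(1-\frac{b_n}{R_n}\Bigr).
\]
At $b_n=OPT$, with $u=OPT/R_n$, the makespan is then at most $OPT\bigl[1-(1-\frac1m)\frac{1-u}{u}\ln(1-u)\bigr]\le(2-\frac1m)OPT$, since $-(1-u)\ln(1-u)\le u$. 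Your max-over-machines bound $OPT\ge R_n(1-e^{-B/(mR_n)})$ also suffices here: it gives $S_n\le -R_n\ln(1-u)-b_n/m$, and $-\frac{1-u}{u}\ln(1-u)\le 1-\frac u2\le 1-\frac um$ closes the case.

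\textbf{The endpoint $b_n\to0$.} You have the obstacle backwards. The naive bound $mS_n\le\sum_{h<n}b_h$ is exactly what the paper uses, and combined with your own estimate it already gives $C_{\max}(\sigma)/OPT\le x/(1-e^{-x})$ with $x=S_n/R_n$. The missing ingredient is the cap $x<1$, i.e.\ $S_n<R_n$. This holds because job $n$ is still in its linear regime ($b_n-a_nS_n>0$), and is equivalent to $C_{\max}(\sigma)<R_n$. With it, $x/(1-e^{-x})<e/(e-1)$, and no exchange argument is needed. Applying the product argument to $\sigma$ ``as an upper estimate'' cannot work anyway: $R_h\ge R_n$ means the early jobs shrink \emph{less} than ratio-$R_n$ jobs, which only lower-bounds completion times.

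\textbf{The comparison at $m=3$.} Since $e/(e-1)\approx1.582<\frac53$, the larger endpoint value is $e/(e-1)$ only for $m=2$. At $m=3$ your own $2-\frac1m$ instance already gives $\frac53$. So what this argument (and the paper's) establishes is $\max\{\frac{e}{e-1},2-\frac1m\}$, not the ``$m\le3$'' split.

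\textbf{The $e/(e-1)$ lower-bound instance.} Your suggested instance fails. With tiny jobs of one common ratio $R$, the recursion $R-C=(R-S)(1-b_h/R)$ holds with equality in both $\sigma$ and $\sigma^*$, LS balances the loads, and the ratio tends to $1$. The paper gives no lower-bound instance for $e/(e-1)$ either.
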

\begin{proof}
Let $R_i=\frac{b_i}{a_i}$, and $\pi=(1,2,\ldots,n)$, such that $R_1\ge R_2 \ge\ldots,R_n$. 
Assume that an optimal profile $\sigma^*$ achieves a makespan of $OPT$.

Consider a NE profile $\sigma$. W.l.o.g., job $n$ is the job that determines the makespan of $\sigma$. Assume that it starts its processing at time $S_n$, and that it does not reach its threshold processing time $\tau$. Hence, $C_{max}(\sigma)=S_n + p_n(S_n)=S_n +b_n-a_n\cdot S_n$. Since $p_n(S_n)>0$,  $S_n<\frac{b_n}{a_n}=R_n$.

All jobs are delay-averse, meaning that $a_i\le1$. Since there is a schedule with makespan $OPT$, it must be that for every job $i$, $OPT\ge t_i^{\sigma^*} + b_i-a_i\cdot t_i^{\sigma^*}=b_i + t_i^{\sigma^*}(1-a_i)\ge b_i$.
In addition,
$$C_{max}(\sigma) = b_n + S_n(1-a_n)<b_n + R_n(1-a_n)=b_n +\frac{b_n}{a_n}(1-a_n) = b_n + \frac{b_n}{a_n} -b_n =\frac{b_n}{a_n}=R_n.$$
If $C_{max}(\sigma)\le OPT$, then $PoA(G^{LBDR})=1$ and we are done. Otherwise, we can assume below that $C_{max}(\sigma)>OPT$, which implies $R_n>OPT$.

Since $n$ is the last job in $LBDR$ order, for every $i<n$, $R_i\ge R_n$. Therefore, $a_i=\frac{b_i}{R_i}\le\frac{b_i}{R_n}$. Assume that in some profile job $i$ finishes at time $C_i$.
The processing time of such job satisfies $$p_i(t_i)=b_i-a_i\cdot t_i\ge b_i-\frac{b_i}{R_n}\cdot t_i=b_i\cdot (1-\frac{t_i}{R_n}),$$ and hence, 
\begin{equation}
    \label{eq:RminusC}
    R_n-C_i=R_n-t_i-p_i(t_i)\le R_n-t_i-b_i\cdot (1-\frac{t_i}{R_n})=(R_n-t_i)\cdot (1-\frac{b_i}{R_n}).
\end{equation}
Consider again the optimal profile $\sigma^*$, and a machine $M_j^*$ with a completion time of $L_j^*$.

Let the jobs assigned to $M_j^*$ be ordered according to their processing order on the machine, 
and denote them by $o_1,o_2,\ldots,o_k$. Let $C_{\ell}$ be the completion time of job $o_\ell$.

We prove the following claim:
\begin{claim}
\label{lem:volumeBoundOnAmachine}
For every $1 \le \ell \le k$
    \[
1-\frac{C_{{o_{\ell}}}}{R_n}\le \prod_{h=1}^{\ell}\left(1-\frac{b_{o_h}}{R_n}\right).
\]
\end{claim}
\begin{proof}
The proof is by induction on $\ell$.

\textbf{Base case ($\ell=1$).}  
The first job, $o_1$, starts at time $t_{o_1}=0$. Hence
$C_{o_1}=p_{o_1}(0)=b_{o_1}.$
Therefore,
\[
1-\frac{C_{o_1}}{R_n}
=
1-\frac{b_{o_1}}{R_n}
=
\prod_{h=1}^{1}\left(1-\frac{b_{o_{h}}}{R_n}\right).
\]

\textbf{Induction step.}  
Assume the claim holds for $\ell-1$.  
The starting time of job ${o_{\ell}}$ is $t_{o_{\ell}}=C_{{{o_{\ell}}-1}}$. From Equation~\ref{eq:RminusC},
\[
R_n-C_{o_{\ell}}
\le
(R_n-t_{o_{\ell}})\left(1-\frac{b_{o_{\ell}}}{R_n}\right).
\]
Substituting $t_{o_{\ell}}=C_{{o_{\ell}}-1}$ gives
\[
R_n-C_{o_{\ell}}
\le
(R_n-C_{{o_{\ell}}-1})\left(1-\frac{b_{o_{\ell}}}{R_n}\right).
\]
Dividing by $R_n$ yields
\[
1-\frac{C_{o_{\ell}}}{R_n}
\le
\left(1-\frac{C_{{o_{\ell}}-1}}{R_n}\right)
\left(1-\frac{b_{o_{\ell}}}{R_n}\right).
\]
By the induction hypothesis,
\[
1-\frac{C_{{o_{\ell}}-1}}{R_n}
\le
\prod_{h=1}^{{\ell}-1}\left(1-\frac{b_{o_{h}}}{R_n}\right).
\]
Substituting gives
\[
1-\frac{C_{o_{\ell}}}{R_n}
\le
\prod_{h=1}^{{\ell}}\left(1-\frac{b_{o_{h}}}{R_n}\right).
\]
Applying the claim to the last job on machine $M_j^*$ yields
\[
1-\frac{L_j^*}{R_n}\le \prod_{i:\sigma^*_i=j}\left(1-\frac{b_i}{R_n}\right).
\]
\end{proof}

Since the makespan of $\sigma^*$ is $OPT$, the completion time of any machine satisfies $L_j^*\le OPT$, therefore
$$1-\frac{OPT}{R_n}\le \prod_{i:\sigma^*_i=j}(1-\frac{b_i}{R_n}).$$

Multiplying over all $m$ machines yields the product over all jobs in the instance: 
$$\left(1-\frac{OPT}{R_n}\right)^m \le \prod_{i=1}^{n}\left(1-\frac{b_i}{R_n}\right) = \left(1-\frac{b_n}{R_n}\right)\cdot \prod_{i=1}^{n-1}\left(1-\frac{b_i}{R_n}\right).$$

Now, recall that $\sigma$ is a possible outcome of LS (\ref{alg:LS}), and hence when job $n$ chooses its machine it selects the machine with the minimal load. Therefore, the load on every machine before job $n$ is assigned is at least $S_n$. 
The total processing time of jobs assigned before job $n$ in $\sigma$ is $\sum_{i=1}^{n-1}p_i^{\sigma}\ge m\cdot S_n$. The actual processing times of the jobs is smaller than their basic processing time, hence $\sum_{i=1}^{n-1}b_i\ge m\cdot S_n$.

Consider again $\prod_{i=1}^{n-1}(1-\frac{b_i}{R_n})$. It is known that for any real value $u$, $1-u\le e^{-u}$. Hence, using this rule, and sum of exponents rule, we get that
$$\prod_{i=1}^{n-1}(1-\frac{b_i}{R_n})\le \exp(-\sum_{i=1}^{n-1}\frac{b_i}{R_n})\le\exp(-\frac{mS_n}{R_n}).$$

We conclude that $$(1-\frac{OPT}{R_n})^m\le(1-\frac{b_n}{R_n})\cdot\exp(-\frac{mS_n}{R_n}).$$
By taking the natural logarithm ($b_n\le OPT$ and $R_n>OPT$ so $1-\frac{OPT}{R_n}>0$ and $1-\frac{b_n}{R_n}>0$) and rearranging we get $$S_n\le-R_n\cdot\ln(1-\frac{OPT}{R_n})+\frac{R_n}{m}\cdot \ln(1-\frac{b_n}{R_n}).$$

We now have an upper bound on the start time of the last job $n$ in the $NE$ profile $\sigma$. We next show that the makespan of $\sigma$ can be described as a function of the basic processing time of job $n$, $b_n$.
Recall that $C_{max}(\sigma)=S_n(1-\frac{b_n}{R_n})+b_n$. So, the makespan as a function of $b_n$ is given by  $$C_{max}(\sigma)(b_n)\le [-R_n\cdot\ln(1-\frac{OPT}{R_n})+\frac{R_n}{m}\cdot \ln(1-\frac{b_n}{R_n})]\cdot (1-\frac{b_n}{R_n})+b_n.$$

The second derivative with respect to $b_n$ is $\frac{1}{m(R_n-b_n)}$. Since $R_n>OPT\ge b_n$, this is strictly positive, meaning that the function is strictly convex.
Hence, we can find the maximum value of the function by calculating its values at the boundaries, $0$ and $OPT$. 

\textbf{Case 1:} $b_n=OPT$.

Substituting $b_n=OPT$ to $C_{max}(\sigma)(b_n)$ yields
$$C_{max}(\sigma)(OPT)\le[-R_n\cdot \ln(1-\frac{OPT}{R_n})+\frac{R_n}{m}\cdot \ln(1-\frac{OPT}{R_n})]\cdot (1-\frac{OPT}{R_n})+OPT,$$
implying that 
$$C_{max}(\sigma)(OPT)\le [(-R_n\cdot (1-\frac{1}{m}))\cdot \ln(1-\frac{OPT}{R_n})]\cdot (1-\frac{OPT}{R_n})+OPT.$$
Taking $OPT$ out:
$$C_{max}(\sigma)(OPT)\le OPT\cdot [-(1-\frac{1}{m})\cdot \frac{R_n-OPT}{OPT}\cdot \ln (1-\frac{OPT}{R_n})+1].$$

Note that for $u\in(0,1), -\frac{1-u}{u}\cdot \ln(1-u) <1$. Set $u=\frac{OPT}{R_n}$: $$C_{max}(\sigma)(OPT)\le OPT\cdot [(1-\frac{1}{m})\cdot 1+1]=(2-\frac{1}{m})\cdot OPT.$$

We note here that a simple instance with a $PoA$ of $2-1/m$ can be generated by assigning each of the first $n-1$ jobs a fixed processing time, and placing the job with the maximal length $OPT$ and a rate $\epsilon$, as a single job on a machine in the optimal schedule.

\textbf{Case 2:} $b_n\rightarrow0$.

$C_{max}(\sigma)(0)\le[-R_n\cdot \ln(1-\frac{OPT}{R_n})]=OPT\cdot [-\frac{R_n}{OPT}\cdot \ln(1-\frac{OPT}{R_n})]$.
Setting $v=\frac{R_n}{OPT}$, the maximum of the term $-v\cdot \ln(1-1/v)$ is $\frac{e}{e-1}$.

Combining both limits, we get $$PoA(G^{LBDR})=\frac{C_{max}(\sigma)}{OPT}\le \max\{\frac{e}{e-1},2-\frac{1}{m}\}.$$
\end{proof}

As stated in the proof, we assume that the last job in the priority list, job $n$, does not reach the threshold processing time $\tau$. The assumption is required since we use the linear relation between the actual processing time of the job, $p_n$, and its starting time $S_n$.
We allow earlier jobs in the priority list to reach $\tau$, since we only assume that their actual processing time is not larger than their basic processing time, which is true both in the linear regime of their processing time function, and after it. We conjecture that the theorem is true even if the last job reaches $\tau$, and we leave the formal proof for future research.


\subsection{Scheduling Policies for Uniform Positive Deteriorating Rate}
\label{sec:SBPT}

We consider games with jobs having a uniform deterioration rate, $a>0$, and arbitrary basic processing times. In Theorem~\ref{th:unboundedPoAGeneral} we showed that with an arbitrary global policy, the $PoA$ can grow exponentially with the number of jobs, and may reach $(1+a)^{\frac{n}{m}}$.

To reduce the $PoA$ we apply the coordination mechanism \emph{Shortest Basic Processing Time} (\emph{SBPT}), a global policy that schedules the jobs in a non-decreasing order of their basic processing times, $b_i$. 

Let the processing time function of job $i$ be $p_i(t)=b_i +a\cdot t$, where $b_i\ge0$. Assume that the jobs are indexed in a non-decreasing order of their basic processing times, that is, $b_1\le b_2 \le \ldots \le b_n$, with arbitrary tie-breaking. In SBPT policy, the global priority list is therefore $\pi=(1,2,\ldots,n)$. 

SBPT policy is inspired by~\cite{BrowneYechieaRandom, GuptaGuptaSingel} who show that on a single machine, scheduling the jobs in a non-decreasing order of $\frac{b_i}{a_i}$ minimizes the makespan.

Recall that with a global priority list, in particular SBPT, Algorithm List-Scheduling  (Algorithm \ref{alg:LS}) produces a $NE$. Moreover, for every $NE$, there exists a run of LS that produces it (Observation~\ref{obs:neIsLs}).

For a profile $\sigma$, let $P(\sigma)$ denote the total processing time of the jobs in $\sigma$. We first show that $P(\sigma)$ is uniform for all $NE$ profiles produced by LS. This property is valid for any global priority list. Next, we show that with SBPT policy $P(\sigma)$ is minimized in a $NE$ profile. Finally, we conclude the $PoA$ by analyzing the relation between the Makespan and the total processing time. 

\begin{claim}
\label{cl:uniformP}
For every game with a global priority list, the total jobs' processing time is uniform in all Nash Equilibria.
\end{claim}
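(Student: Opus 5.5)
The plan is to show that every NE is produced by a run of Algorithm~\ref{alg:LS} that considers the jobs in the order of the global list $\pi=(1,\ldots,n)$. Then I will show that all such runs, whatever their tie-breaking, generate the same processing time for every job. Throughout, as in this section, machines are identical with speed $1$ and $p_i(t)=b_i+a\cdot t$ with $a>0$. The completion time of a job starting at $t$ is $t(1+a)+b_i$, which is strictly increasing in $t$. So every job strictly prefers a machine on which it starts earlier.

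\textbf{Step 1: NE profiles are LS runs in priority order.} Let $\sigma$ be a NE. Since the priority list is global, the start time of job $i$ on any machine $j$ depends only on the jobs in $\{1,\ldots,i-1\}$ assigned to $j$. It equals the completion time of the last such job, i.e.\ their total actual processing time. Denote by $\ell_j^{(i-1)}$ the completion time of machine $j$ when only jobs $1,\ldots,i-1$ are considered, and compare with the run of Algorithm~\ref{alg:LS}. By stability and the monotonicity above, $\sigma_i\in\arg\min_j \ell_j^{(i-1)}$. Moreover, adding job $i$ to $\sigma_i$ changes $\ell^{(i)}$ exactly as in line~5 of the algorithm. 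Hence, by induction on $i$, the restriction of $\sigma$ to $\{1,\ldots,i\}$ is a possible state of an LS run with jobs considered by $\pi$. This strengthens Observation~\ref{obs:neIsLs} by fixing the order to be $\pi$ rather than the order of start times.

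\textbf{Step 2: The sorted load vector is tie-independent.} Let $L^{(i)}$ be the multiset of machine completion times after job $i$ is assigned. I prove by induction that $L^{(i)}$ is the same for every LS run in order $\pi$. The base case is $L^{(0)}=\{0,\ldots,0\}$. For the step, job $i$ is placed on some machine whose load equals $\ell_{\min}=\min L^{(i-1)}$, and that load becomes $(1+a)\ell_{\min}+b_i$. Choosing among tied machines only permutes labels and does not change the resulting multiset. Therefore $p_i^{\sigma}=b_i+a\,\ell_{\min}(L^{(i-1)})$ is determined by the instance alone.

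\textbf{Step 3: Conclude.} Summing the processing times from Step~2 over $i$ gives $P(\sigma)$, which is the same value for every NE. The only delicate point is Step~1. One must make sure that jobs of lower priority never affect the start time of $i$, which holds because the list is global. One must also make sure that a NE job cannot sit on a machine with a strictly larger prefix load, which holds because $a>0$ makes completion time strictly increasing in start time. Once this identification is established, the rest is a straightforward induction.
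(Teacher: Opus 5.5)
Your proposal is correct and follows essentially the paper's proof. The paper also treats every NE as an LS outcome (via Observation~\ref{obs:neIsLs}) and proves by induction on $i$ that the sorted load vector does not depend on tie-breaking, because job $i$ always turns one minimum-load entry $L^{i-1}_m$ into $(1+a)L^{i-1}_m+b_i$. Your Step~1 makes explicit that the relevant LS run considers jobs in $\pi$-order, and your Step~3 spells out the final summation that the paper leaves implicit.
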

\begin{proof}
For $1 \le i \le n$, let $L^i=(L_1^i,\ldots,L_m^i)$ be the sorted vector of machines' loads after job $i$ is assigned. That is, $L_1^i$ is the load on the most loaded machines, etc.
We show that $L_i$ is uniform in any application of List Scheduling. The proof is by induction on $i$.
Initially, $L_0=(0,\ldots,0)$. For $i>0$, the $i$-th job is assigned on a machine with load $L_m^{i-1}$. Independent of the choice of this machine (in case of a tie-breaking), the vector $L^i$ includes one less machine with load $L_m^{i-1}$ and one more machine with load $(1+a)L_m^{i-1}+b_i$.
\end{proof}

\begin{lemma}
\label{lem:sumP}
In a game with SBPT policy, $NE$ profiles minimize the total jobs processing times.
\end{lemma}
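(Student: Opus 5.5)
Let me restate the quantity being minimized. In any profile, a machine processing jobs $o_1,\ldots,o_k$ in order completes at time $\sum_{h=1}^{k}(1+a)^{k-h} b_{o_h}$; this is the formula derived in Claim~\ref{cl:bound1}. Since time is work on a machine, the total processing time $P(\sigma)$ equals the sum of machine completion times. So
$$P(\sigma)=\sum_i b_i\,(1+a)^{r_i(\sigma)},$$
where $r_i(\sigma)$ is the number of jobs processed after $i$ on its machine. Thus every schedule assigns to the jobs a multiset of \emph{weights} $(1+a)^{r}$. The machine-$j$ weights are $(1+a)^0,(1+a)^1,\ldots,(1+a)^{k_j-1}$.

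The plan is to show that a NE under SBPT is simultaneously best in two respects. First, it uses the best possible multiset of weights. Second, it pairs weights with basic times optimally, giving the smallest $b$'s the largest weights.

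\textbf{Step 1 (pairing).} For a fixed multiset of weights, the rearrangement inequality says that $\sum_i b_i w_i$ is minimized by matching the largest weights with the smallest $b_i$. The paper has two conventions for the schedule. Under the within-machine view, SBPT processes jobs on each machine in non-decreasing $b$, so early (small-$b$) jobs receive large exponents. Under the global view, I will argue that in the LS run the rank of a job in $\pi$ is monotone against its exponent $r_i$. For this I use the round structure: by Claim~\ref{cl:uniformP} the LS load vectors are deterministic, and on identical machines the $q$-th job assigned is the $\lceil q/m\rceil$-th job on its machine only up to ties. I will actually argue the weaker statement directly: if $i<i'$ in $\pi$ and $r_i<r_{i'}$, then swapping them cannot decrease $P$. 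Rather than exact rounds, I compare the weight sequences.

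\textbf{Step 2 (weights).} Among schedules with $n$ jobs on $m$ machines, I want the weight profile of the NE to be optimal. For a fixed $b$-sorted assignment, $P$ is a sum of per-machine geometric-type terms. Moving a job from a machine with $k_j$ jobs to one with $k_{j'}\le k_j-2$ jobs decreases the sorted weight vector coordinatewise, because the weights $(1+a)^{k_j-1}$ are replaced by $(1+a)^{k_{j'}}$. So a balanced partition ($k_j\in\{\lfloor n/m\rfloor,\lceil n/m\rceil\}$) yields a weight vector that is coordinatewise dominated, after sorting, by that of any other partition. With Step 1 this gives $P(\text{balanced, SBPT-paired})\le P(\sigma')$ for every profile $\sigma'$.

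\textbf{Step 3 (NE realizes this).} I then show that an SBPT NE, obtained from LS by Observation~\ref{obs:neIsLs}, is balanced in counts and paired monotonically. Then, by Claim~\ref{cl:uniformP}, every NE has that same minimal $P$.

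\textbf{Main obstacle.} This is Step 3, because LS picks the least-\emph{loaded} machine, not the one with the fewest jobs. I would argue by induction on the rounds. Because jobs arrive in non-decreasing $b$, the loads after each full round of $m$ assignments are ordered consistently, so the next $m$ jobs go one per machine. I would try to prove the invariant that after job $qm$, every machine holds exactly $q$ jobs. The key inequality is $(1+a)L+b_i\ge L'$ for any two current machine loads $L\le L'$ within a round. This follows because the new job's $b_i$ is at least every earlier basic time, and $L'$ was built from the same number of jobs with smaller $b$'s. If this invariant fails in general, I would fall back on an exchange argument comparing the NE directly with an optimal-$P$ schedule.
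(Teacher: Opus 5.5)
\textbf{Steps 1--2 are fine; Step 3 has a genuine gap.}

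Steps 1--2 are sound and take a different, cleaner route than the paper's Claim~\ref{cl:rounds}. You write $P(\sigma)=\sum_i b_i(1+a)^{r_i(\sigma)}$, and balancing the counts weakly lowers every coordinate of the sorted weight vector. By rearrangement and $b_i\ge 0$, every profile then has $P\ge\sum_k b_{(k)}w_{(k)}$, with $b$ sorted increasingly and the balanced weights sorted decreasingly. This makes explicit the exchange argument that the paper's recursion $P=(1+a)P(J_{notlast})+B(J_{last})$ leaves implicit.

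There is one slip in Step 1. If $b_i\le b_{i'}$ and $r_i<r_{i'}$, swapping their weights changes $P$ by $(b_i-b_{i'})\bigl((1+a)^{r_{i'}}-(1+a)^{r_i}\bigr)\le 0$. So the swap can only \emph{decrease} $P$, not the reverse. What you actually need is that the NE has no such inversion, and that is exactly what Step 3 must deliver.

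Step 3 is where the lemma's real content lies, and your justification of the key inequality does not work. Knowing that $L'$ is built from $q$ jobs with basic times at most $b_i$ only gives $L'\le b_i\bigl((1+a)^q-1\bigr)/a$. That can far exceed $(1+a)L+b_i$: take $a=1$, $L=0$ from two zero-length jobs, $L'=3$ from two jobs with $b=1$, and $b_i=1$. The inequality does hold along an LS run, but only because of how loads evolve across rounds, so it must be carried as an inductive invariant.

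Concretely, let the sorted loads at the start of a round be $\ell_1\le\dots\le\ell_m$ and the round's jobs be $c_1\le\dots\le c_m$. Prove by induction that $(1+a)\ell_1+c_1\ge\ell_m$. This forces the $k$-th job of the round onto the $k$-th least loaded machine, and the new sorted loads are $(1+a)\ell_k+c_k$. The invariant propagates because $(1+a)\bigl[(1+a)\ell_1+c_1\bigr]+d_1\ge(1+a)\ell_m+c_m$ whenever $d_1\ge c_m$.

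The paper argues instead by contradiction. Take the first round in which some $M_2$ receives two jobs $x,y$ and some $M_0$ receives none, and let $w$ be the previous-round job on $M_0$. The LS choices give $(1+a)A_2+b_x<A_0\le(1+a)A_2+b_w$, contradicting $b_w\le b_x$.

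You must also handle ties. Your invariant ``after job $qm$ every machine holds exactly $q$ jobs'' is false for arbitrary LS runs. For $m=2$, $a=1$, $b=(0,10,10,10)$, some LS run, and hence some NE, has counts $(3,1)$. Your closing appeal to Claim~\ref{cl:uniformP} is the right tool, but only after you fix a tie-breaking rule favoring machines not yet used in the current round. Prove the round structure for that run alone, then transfer the value of $P$ to every NE.

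Finally, when $m\nmid n$, the global monotone pairing for the partial last round still needs an argument. The simplest fix is to pad $\pi$ in front with $b=0$ jobs, as the paper does; this padding is precisely where such ties arise.
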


\begin{proof}
We first characterize the structure of a schedule that minimizes the total processing time, and then show that LS with a specific tie-breaking rule, produces a schedule that agrees with this structure.  Throughout the proof, we assume that $n=zm$ for an integer $z$. This is w.l.o.g., since at most $m-1$ jobs with $b_i=0$ can be added to $N$. With SBPT policy, these jobs are first in $\pi$. Both OPT and LS assign these jobs as first on some machines, they all complete at time $0$ and do not delay other jobs.  

\begin{claim}
\label{cl:rounds}
Let $\sigma$ be a profile that obeys SBPT, in which, for all $k=1,\ldots,z$, the jobs $(m-1)k+1,....,mk$ form the $k$-th round, then $\sigma$ minimizes the total processing time. 
\end{claim}
\begin{proof}
We show by induction on $j$, that for every $j$, the total processing time of any subset of $mj$ jobs is minimized if the set includes the first $mj$ jobs and they are processed in $j$ rounds such that for every $k<j$, the jobs $(m-1)k+1,\ldots,mk$ form the $k$-th round.

The base case considers an assignment of $m$ jobs. Clearly, 
assigning a single job on every machine is surely optimal for the first $m$ jobs.

Assume optimality for $j-1$. That is, the total processing time of $m(j-1)$ jobs is minimized by arranging the first $m(j-1)$ jobs in $\pi$ in $j-1$ rounds.
Consider a schedule of $mj$ jobs.
If there is an empty machine, then some machine has at least two jobs and the total processing time will not increase if we assign on the empty machine some not-first job. Since $n>m$, a not-first job exists. Therefore, we can assume that in a schedule that minimizes the total processing time of $mj$ jobs, there is at least one job on each machine. 

For a set of jobs $S$, denote by $B(S)$ the total basic processing time of the jobs in $S$. For a set of jobs $S$ and a profile $\sigma$, denote by $P(S,\sigma)$ the total processing time of the jobs in $S$ in the schedule $\sigma$.  
Let $J_{last}$ be the set of the $m$ jobs that are last on each of the $m$ machines. 
Let $J_{not last}$ be the set of other $m(j-1)$ jobs.
Note that the jobs in $J_{not last}$ delay the jobs in $J_{last}$, such that the total starting times of the jobs in $J_{last}$ is $P(J_{not last},\sigma)$.
Therefore, 
$$P(J_{last} \cup J_{not last},\sigma) = (1+a)\cdot P(J_{not last},\sigma)+ B(J_{last}).$$
Since $a>0$, this term is minimized by minimizing the total processing time of the $m(j-1)$ jobs in $J_{notlast}$, and selecting to the $j$-th round the shortest available jobs.
By the induction hypothesis, the total processing time of $J_{notlast}$ is minimized by placing the first $m(j-1)$ jobs in rounds.
This concludes the induction step.
\end{proof}

Consider an application of list scheduling in which, whenever a job has more than one machine that minimizes its completion time, it would prefer a one with fewer jobs on it. 
We show that a run of LS with the above tie-breaking rule produces a schedule that assigns the jobs in rounds, as described in Claim~\ref{cl:rounds}.

\begin{claim}
In a game with SBPT policy, if LS is applied with tie-breaking in favor of machines with fewer jobs, then for all $k=1,\ldots,z$, the jobs $(m-1)k+1,\ldots,mk$ form the $k$-th round.
\end{claim}
\begin{proof}
Assume towards contradiction that the claim is false and let $k$ be the smallest $k$ for which LS assigns at least two jobs among jobs $(k-1)m+1,\ldots,kn$ on the same machine, $M_2$. 
There must be a machine, $M_0$, that is not assigned any job among jobs $(k-1)m+1,\ldots,kn$.
Let $A_0$ and $A_2$ be the loads on $M_0$ and $M_2$ after $k-1$ rounds. 
The tie-breaking rule ensures that the claim holds for the first round, even if there are dummy jobs of length $0$, therefore, $k>1$. 

Let $w$ be the last job on $M_0$ after $k-1$ rounds. Since $k>1$, such a $w$ exists.
Job $w$ is assigned on $M_0$ because $A_0 \le (1+a)A_2+b_w$.
Let $x,y$ be the first two jobs among jobs $(k-1)m+1,\ldots,kn$,  assigned on $M_2$. After the assignment of $x$, the load on $M_2$ is $(1+a)A_2+b_x$. 

Consider the assignment of job $y$. The number of jobs already assigned on $M_2$ and $M_0$ is $k$ and $k-1$, respectively. By the tie-breaking rule, $y$ should be assigned on $M_0$ is case of a tie. It is assigned on $M_2$ since $(1+a)[(1+a)A_2+b_x]+b_y < (1+a)A_0+b_y,$
which implies $A_0 > (1+a)A_2+b_x$. 

Combining this with the fact that $A_0 \le (1+a)A_2+b_w$, we conclude that $b_x < b_w$. However, since $w$ was assigned in round $k-1$ and $x$ was assigned in round $w$, the SBPT order implies that $b_w \le b_x$. A contradiction. 
\end{proof}

The statement of the Lemma follows from combining the above claims with Claim~\ref{cl:uniformP}.
\end{proof}

We can now complete the analysis of the Price of Anarchy.
Let ${\mathcal G}^{SBPT} \subset \CidenticalGlobalUniformRate$ be the class of games with  $SBPT$ policy.
\begin{theorem}
    \label{thm:poaCMPositve}
      $$PoA({\mathcal G}^{SBPT}) \leq \frac{2m+a\cdot m-1}{m+a} .$$
\end{theorem}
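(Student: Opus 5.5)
We will combine the structural facts established for SBPT with the standard List-Scheduling load argument, adapted to positive deterioration. Let $\sigma$ be a $NE$ under SBPT, let $\sigma^*$ be an optimal profile with makespan $OPT$, and let $k$ be the job that determines $C_{max}(\sigma)$, starting at time $S_k$. By Observation~\ref{obs:neIsLs}, $\sigma$ is an outcome of LS. Hence when $k$ is assigned, every machine has load at least $S_k$, and $C_{max}(\sigma)=S_k+b_k+aS_k=(1+a)S_k+b_k$.

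\textbf{Step 1: bound $S_k$.} Summing loads over the machines just before $k$ is assigned gives $mS_k\le P(\sigma)-p_k^\sigma$, where $p_k^\sigma=b_k+aS_k$. Rearranging,
\[
S_k\le \frac{P(\sigma)-b_k}{m+a}.
\]
By Claim~\ref{cl:uniformP} and Lemma~\ref{lem:sumP}, $P(\sigma)\le P(\sigma^*)$. Since $\sigma^*$ has makespan $OPT$ on $m$ machines, $P(\sigma^*)\le m\cdot OPT$. Substituting,
\[
C_{max}(\sigma)\le \frac{(1+a)(m\cdot OPT-b_k)}{m+a}+b_k=\frac{(1+a)m\cdot OPT+(m-1)b_k}{m+a}.
\]

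\textbf{Step 2: bound $b_k$.} The job $k$ must run somewhere in $\sigma^*$, and its processing time there is at least $b_k$ (deterioration is positive). Therefore $b_k\le OPT$, and
\[
C_{max}(\sigma)\le \frac{(1+a)m+m-1}{m+a}\,OPT=\frac{2m+am-1}{m+a}\,OPT,
\]
which is the claimed bound.

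\textbf{Main obstacle.} The delicate step is the inequality $P(\sigma)\le P(\sigma^*)$. Lemma~\ref{lem:sumP} asserts that SBPT equilibria minimize total processing time, but the proof of Claim~\ref{cl:rounds} minimizes over schedules that process jobs in a fixed order, whereas $\sigma^*$ may use arbitrary orders on its machines.

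To close this gap, we would first observe that reordering the jobs on each machine of $\sigma^*$ into nondecreasing $b_i$ order does not increase that machine's completion time. This follows from the single-machine result of~\cite{GuptaGuptaSingel}: with uniform rate $a$, sorting by $b_i/a$ coincides with sorting by $b_i$. The identity $P=(1+a)P(J_{notlast})+B(J_{last})$ holds for every schedule, and the exchange argument in Claim~\ref{cl:rounds} does not actually rely on order. Hence the round structure minimizes $P$ over all schedules, and in particular $P(\sigma)\le P(\sigma^*)$.

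A secondary check is that the dummy zero-length jobs added in the proof of Lemma~\ref{lem:sumP} to make $n$ divisible by $m$ affect neither $S_k$ nor $OPT$. This holds because these jobs complete at time $0$ and delay no other job.
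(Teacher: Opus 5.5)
Your proof is correct and follows the paper's argument: derive $C_{max}(\sigma)\le\frac{1+a}{m+a}P(\sigma)+\frac{m-1}{m+a}b_k$, use Claim~\ref{cl:uniformP} and Lemma~\ref{lem:sumP} to get $P(\sigma)\le P(\sigma^*)\le m\cdot OPT$, and finish with $b_k\le OPT$. Your derivation of the first inequality from the start time $S_k$ and the LS load property is actually the sound version of the paper's step, since the paper's intermediate bound $\sum_i p_i^\sigma\ge(m+a)L_{min}+b_n$ implicitly needs $S_n\ge L_{min}$ and can fail (e.g.\ $m=2$, $a=0.1$, $b=(1,2,3)$ gives $6.1<7.2$); your worry about job orders in $\sigma^*$ is harmless, because profiles of the game already process jobs in SBPT order and, as you note, that order is optimal on each machine anyway.
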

\begin{proof}
Let $\sigma$ be a $NE$ profile.
Let $C_{max}(\sigma)$ denote the makespan of $\sigma$. Let $L_j$ denote the load on machine $j$, and let $L_{min}= \min_j L_j$ be the load on a least loaded machine. Let $n$ be the job that finishes last and determines $C_{max}(\sigma)$. 
Since $\sigma$ is a $NE$, job $n$ would not benefit from deviating to the machine with the minimal load in $\sigma$, hence, $C_{max}(\sigma)\le(1+a)\cdot L_{min} + b_n$.
We first bound the sum of jobs' processing times in $\sigma$:

$$\sum_ip_i^\sigma\ge m\cdot L_{min} +a\cdot L_{min} +b_n=(m+a)\cdot L_{min}+b_n.$$

Hence, 
$$\sum_ip_i^\sigma\ge \frac{m+a}{1+a}\cdot C_{max}(\sigma) +\frac{1-m}{1+a}\cdot b_n.$$

And,
$$C_{max}(\sigma)\le\frac{1+a}{m+a}\cdot \sum_ip_i^\sigma +\frac{m-1}{m+a}\cdot b_n$$

Now, consider the optimal schedule $\sigma^*$.
Clearly $OPT\ge b_n$.
In addition, by Lemma~\ref{lem:sumP}, the total jobs' length in OPT is at least $\sum_ip_i^{\sigma}$, therefore,  $OPT\ge \frac{\sum_ip_i^{\sigma}}{m}$.
We conclude that,
$$C_{max}(\sigma)\le \frac{(1+a)\cdot m}{m+a}\cdot \frac{\sum_ip_i^\sigma}{m} +\frac{m-1}{m+a}\cdot b_n \le (\frac{(1+a)\cdot m}{m+a}+\frac{m-1}{m+a})\cdot OPT = (\frac{2m+a\cdot m-1}{m+a})\cdot OPT.$$
\end{proof}

The following example, depicted in Figure~\ref{fig:sbptPOA5/3}, shows that this bound is tight for $m=2$ and $a=1$. 
\begin{example}
{\em
    Consider a game $G$ with two identical machines $M_1$ and $M_2$, and three jobs $u,v,w$ with deterioration rate $a=1$. Let $b_u=b_v=1$ and $b_w=3$. A priority list that respects the $SBPT$ coordination mechanism is $\pi=(u,v,w)$.
    In an optimal schedule $\sigma^*$, jobs $u$ and $v$ are assigned on $M_1$ and job $w$ is assigned on $M_2$. The processing time of job $v$ when assigned after job $u$ is $1+1=2$, and hence $C_{max}(\sigma^*)=3$.
    A $NE$ profile assigns jobs $u$ and $v$ on machines $M_1$ and $M_2$ respectively, and job $w$ as the last job in one of the machines. 
    The processing time of job $w$ when assigned at time $t=1$ is $1+3=4$, and hence $C_{max}(\sigma)=5$.
    We conclude that $$PoA(G)=\frac{C_{max}(\sigma)}{C_{max(\sigma^*)}}=\frac{5}{3}=\frac{4+2-1}{2+1}.$$
    }
\end{example}

\begin{figure}[h]
\centering
\includegraphics[width=0.8\textwidth]{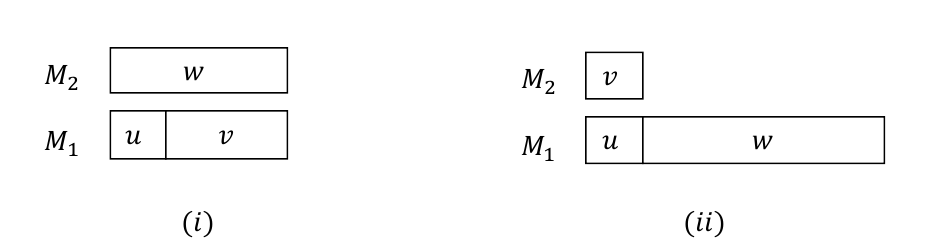}
\caption{$(i)$ An optimal profile with makespan $3$.$(ii)$ A $NE$ profile with makespan $5$.}
\label{fig:sbptPOA5/3}
\end{figure}


\begin{thebibliography}{10}

\bibitem{AlidaeeWormerSurvey}
B.~Alidaee and N.~K. Womer.
\newblock Scheduling with time dependent processing times: Review and extensions.
\newblock {\em The Journal of the Operational Research Society}, 50:711--720, 1999.

\bibitem{AD+08}
E.~Anshelevich, A.~Dasgupta, J.~Kleinberg, E.~Tardos, T.~Wexler, and T.~Roughgarden.
\newblock The price of stability for network design with fair cost allocation.
\newblock {\em SIAM Journal on Computing}, 38(4):1602--1623, 2008.

\bibitem{Aspnes}
J.~Aspnes, Y.~Azar, A.~Fiat, S.~Plotkin, and O.~Waarts.
\newblock On-line load balancing with applications to machine scheduling and virtual circuit routing.
\newblock {\em Proceedings of the Twenty-Fifth Annual ACM Symposium on Theory of Computing,}, (STOC), 1993.

\bibitem{AART06}
B.~Awerbuch, Y.~Azar, Y.~Richter, and D.~Tsur.
\newblock Tradeoffs in worst-case equilibria.
\newblock {\em Theoretical Computer Science}, 361(2):200--209, 2006.


\bibitem{AJM15}
Y.~Azar, L.~Fleischer, K.~Jain, V.~Mirrokni, and Z.~Svitkina.
\newblock Optimal coordination mechanisms for unrelated machine scheduling.
\newblock {\em Operations Research}, 63(3):489--500, 2015.

\bibitem{BJ00}
A.~Bachman and A.~Janiak.
\newblock Scheduling jobs with decreasing processing times for the total completion time minimization.
\newblock {\em Operations Research Proceedings}, 2000:353--358, 2000.

\bibitem{BrowneYechieaRandom}
S.~Browne and U.~Yechiali.
\newblock Scheduling detrirorating jobs on a single processor.
\newblock {\em Opns Res}, 38:495--498, 1990.


\bibitem{CLTY17}
Q.~Chen, L.~Lin, Z.~Tan, and Y.~Yan.
\newblock Coordination mechanisms for scheduling games with proportional deterioration.
\newblock {\em European Journal of Operational Research}, 263(2):380--389, 2017.

\bibitem{CKN09}
G.~Christodoulou, E.~Koutsoupias, and A.~Nanavati.
\newblock Coordination mechanisms.
\newblock {\em Theor. Comput. Sci.}, 410(36):3327--3336, 2009.


\bibitem{CzumajV07}
A.~Czumaj and B.~V\"{o}cking.
\newblock Tight bounds for worst-case equilibria.
\newblock {\em ACM Trans. Algorithms}, 3(1):4:1--4:17, 2007.


\bibitem{FST17}
M.~Feldman, Y.~Snappir, and T.~Tamir.
\newblock The efficiency of best-response dynamics.
\newblock In {\em The 10th International Symposium on Algorithmic Game Theory (SAGT)}, 2017.

\bibitem{GLM10}
M.~Gairing, T.~Lücking, and M.~Mavronicolas et~al.
\newblock Computing nash equilibria for scheduling on restricted parallel links.
\newblock {\em Theory of Computing Systems}, 47(2):405--432, 2010.

\bibitem{Gra66}
R.L. Graham.
\newblock Bounds for certain multiprocessing anomalies.
\newblock {\em Bell Systems Technical Journal}, 45:1563--1581, 1966.

\bibitem{GuptaGuptaSingel}
J.~N.~D.~Gupta and S.~K.~Gupta.
\newblock Single facility scheduling with nonlinear processing times.
\newblock {\em Computers ind. Engng}, 14:387--393, 1987.

\bibitem{HoDecreasingWithDeadlines}
K.~I-J.~Ho, J.~Y-T.~Leung, and W-D.~Wei.
\newblock Complexity of scheduling tasks with time dependent execution times.
\newblock {\em Information Processing Letters}, 48:315--320, 1993.

\bibitem{ILMS09}
N.~Immorlica, L.~Li, V.~S.~Mirrokni, and A.~S.~Schulz.
\newblock Coordination mechanisms for selfish scheduling.
\newblock {\em Theor. Comput. Sci.}, 410(17):1589--1598, 2009.

\bibitem{KangNgParallel}
L.~Kang and C.T.~Ng.
\newblock A note on fully polynomial-time approximation scheme for parallel-machine scheduling with deteriorating jobs.
\newblock {\em Int. J. Production Economics}, 109:180--184, 2007.

\bibitem{Koutsoupias:1999:WE:1764891.1764944}
E.~Koutsoupias and C.~Papadimitriou.
\newblock Worst-case equilibria.
\newblock {\em STACS'99: Proceedings of the 16th annual conference on Theoretical aspects of computer science}, 1:404--413, 1999.

\bibitem{LiLiuLiFirstGameTheory}
K.~Li, C.~Liu, and K.~Li.
\newblock An approximation algorithm based on game theory for scheduling simple linear deteriorating jobs.
\newblock {\em Theoretical Computer Science}, 543:46--51, 2014.

\bibitem{MosheiovFixedBasicVshapred}
G.~Moesheiov.
\newblock V-shaped policies for scheduling deteriorating jobs.
\newblock {\em Operations Research}, 39:979--991, 1991.

\bibitem{MoeshiovRankBased}
G.~Mosheiov.
\newblock A note on scheduling deteriorating jobs.
\newblock {\em Mathematical and Computer Modeling}, 41:883--886, 2005.

\bibitem{NPP26}
G.~Nicosia, A.~Pacifici, and U.~Pferschy. Scheduling with time dependent utilities: Fairness and efficiency. {\em arXiv.} \url{https://doi.org/10.48550/arXiv.2603.28800}, 2026.

\bibitem{RST21}
V.~Ravindran Vijayalakshmi, M.~Schr{\"o}der, and T.~Tamir.
\newblock Scheduling games with machine-dependent priority lists.
\newblock {\em Theoretical Computer Science}, 855:90--103, 2021.

\bibitem{Ros73}
R.~W. Rosenthal.
\newblock A class of games possessing pure-strategy nash equilibria.
\newblock {\em International Journal of Game Theory}, 2:65--67, 1973.

\bibitem{3DM}
V.~Kann.
\newblock Maximum bounded 3-dimensional matching is max snp-complete.
\newblock {\em Inf. Process. Lett}, 37:27--35, 1991.

\bibitem{WangSquared}
J.~Wang.
\newblock A note on single-machine scheduling with decreasing time-dependant job processing times.
\newblock {\em Applied Mathematical Modelling}, 34:294--300, 2010.

\bibitem{WangXiaSpecialCase}
J.~Wang and Z.~Xia.
\newblock Scheduling jobs under decrasing linear deterioration.
\newblock {\em Information Processing Letters}, 94:63--69, 2005.

\bibitem{YuBatchProcessing}
G.~Yu.
\newblock A coordination mechanism for scheduling game on parallel-batch machines with deterioration jobs.
\newblock {\em Mathematical Problems in Engineering}, 2022.

\end{thebibliography}
\end{document}